\newtheorem{thm}{Theorem}[section]
\newtheorem{cor}{Corollary}[section]
\newtheorem{lem}{Lemma}[section]
\theoremstyle{definition}
\newtheorem{defn}{Definition}[section]
\newtheorem{exmp}[thm]{Example}
\theoremstyle{remark}
\newtheorem{rem}{Remark}[section]
\definecolor{energy}{RGB}{114,0,172}
\definecolor{freq}{RGB}{45,177,93}
\definecolor{spin}{RGB}{251,0,29}
\definecolor{signal}{RGB}{203,23,206}
\definecolor{circle}{RGB}{217,86,16}
\definecolor{average}{RGB}{203,23,206}
\colorlet{shadecolor}{gray!20}
\pgfplotsset{compat=1.9}
\newcommand{\argmin}{\mathop{\mathrm{arg\,min}}}
\newcommand{\G}{\mathcal{G}}
\newcommand{\F}{\mathcal{F}}
\newcommand{\E}{\mathbb{E}}
\def\1{1\kern-.20em {\rm l}}
\numberwithin{equation}{section}
\author[Abdelbasset Djeniah]{Abdelbasset Djeniah}
\address{(A. Djeniah) Laboratory of Stochastic Models, Statistics and Applications. University of Saida Dr. Moulay Tahar. B.P. 138, En-Nasr, Saida, 2000 Algeria.}
\email{abdelbasset.djeniah@univ-saida.dz}
\author[Mohamed Chaouch]{Mohamed Chaouch}
\address{(M. Chaouch) Statistics Program, Department of Mathematics and Statistics, College of Arts and Sciences, Qatar University, Doha, Qatar}
\email[Corresponding author]{mchaouch@qu.edu.qa}
\author[Amina Angelika Bouchentouf ]{Amina Angelika Bouchentouf}
\address{(A. A. Bouchentouf) Laboratory of Mathematics, Djillali Liabes University of Sidi Bel Abbes, Sidi Bel Abbes, Algeria}
\email{bouchentouf\_amina@yahoo.fr}
\keywords{Conditional volatility, ergodic processes, functional time series, missing data, imputation.}
\subjclass[2010]{60F10, 62G07, 62F05}
\title[Functional conditional volatility modeling with missing data]{Functional conditional volatility modeling with missing data: inference and application to energy commodities }
\begin{document}
\begin{abstract}
This paper explores the nonparametric estimation of the volatility component in a heteroscedastic scalar-on-function regression model, where the underlying discrete-time process is ergodic and subject to a missing-at-random mechanism. We first propose a simplified estimator for the regression and volatility operators, constructed solely from the observed data. The asymptotic properties of these estimators, including the almost sure uniform consistency rate and asymptotic distribution, are rigorously analyzed. Subsequently, the simplified estimators are employed to impute the missing data in the original process, enhancing the estimation of the regression and volatility components. The asymptotic behavior of these imputed estimators is also thoroughly investigated. A numerical comparison of the simplified and imputed estimators is presented using simulated data. Finally, the methodology is applied to real-world data to model the volatility of daily natural gas returns, utilizing intraday EU/USD exchange rate return curves sampled at a 1-hour frequency.
\end{abstract}
\maketitle

\section{Introduction}\label{intro}

In financial market analysis, understanding and modeling the volatility of financial assets, and predicting it if possible, is of significant interest to investors aiming to make informed decisions. Additionally, financial institutions are not only concerned with short-term forecasting of asset prices but also with quantifying the uncertainty associated with these predictions,  often captured through the volatility component. Two major classes of volatility models introduced in the literature are the Generalized Autoregressive Conditionally Heteroskedastic (GARCH) models and Stochastic Volatility (SV) models. The univariate ARCH model was first proposed by \citet{engle} and subsequently extended to the GARCH model by \citet{boll87}. GARCH models, along with their extensions, have proven to be effective for modeling the conditional volatility of financial returns observed at monthly or higher frequencies, facilitating the study of the intertemporal relationship between risk and expected return. However, early GARCH models, despite their widespread use, have been criticized for their rigidity, particularly when modeling return series over long time spans.

SV models offer an alternative approach by positing that volatility is driven by its own stochastic process. Unlike GARCH models, where the volatility at time  t  is fully determined by past information, SV models treat volatility as a latent random variable. These models provide several advantages, such as offering a natural economic interpretation of volatility and facilitating connections with continuous-time diffusion models. They are also considered more flexible for modeling financial returns (see the handbook by \citet{Andersen2009}).

To relax the parametric assumptions inherent in GARCH models, nonparametric autoregressive models with ARCH-type errors have been proposed (see \citet{LA05} and \citet{FY98}). While these models offer flexibility in capturing nonlinear patterns in volatility, they are limited by the curse of dimensionality. To address this issue, semi-parametric approaches such as additive and single-index volatility models have been introduced. Additive models reduce the dimensionality by assuming that the target volatility function can be expressed as a sum of functions of the covariates, thereby improving convergence rates. Single-index models, on the other hand, simplify conditional variance modeling by reducing multivariate regression to a single index. For further details, readers are referred to \citet{Su}.

A key limitation of the aforementioned approaches is their assumption that the time series is fully observed and that the predictor and response variables are recorded at the same frequency. In practice, even with modern data collection technologies, financial time series often contain missing observations. For instance, stock price data may be missing due to regular holidays (e.g., Thanksgiving, Christmas) or technical issues such as recording device failures or system breakdowns. In financial data analysis, it is commonly assumed that data are completely observed, which may not always be realistic. Addressing the problem of missing data is crucial, particularly when observations in the time series are interrupted. Furthermore, in many scenarios, predictors and responses may be recorded at different time frequencies. For example, one might assess the impact of intraday (1-hour frequency) EU/USD exchange rate returns on the volatility of daily (1-day frequency) natural gas price returns. Figure \ref{sample} illustrates such a scenario, where the response variable (daily natural gas log returns) is scalar-valued, while the predictor (intraday EUR/USD exchange rate log returns at 1-hour frequency) is functional. In these cases, GARCH and SV models are not applicable, making nonparametric functional models particularly relevant.
\begin{figure}[h!]
\begin{center}
\includegraphics[width=15cm,height=5cm]{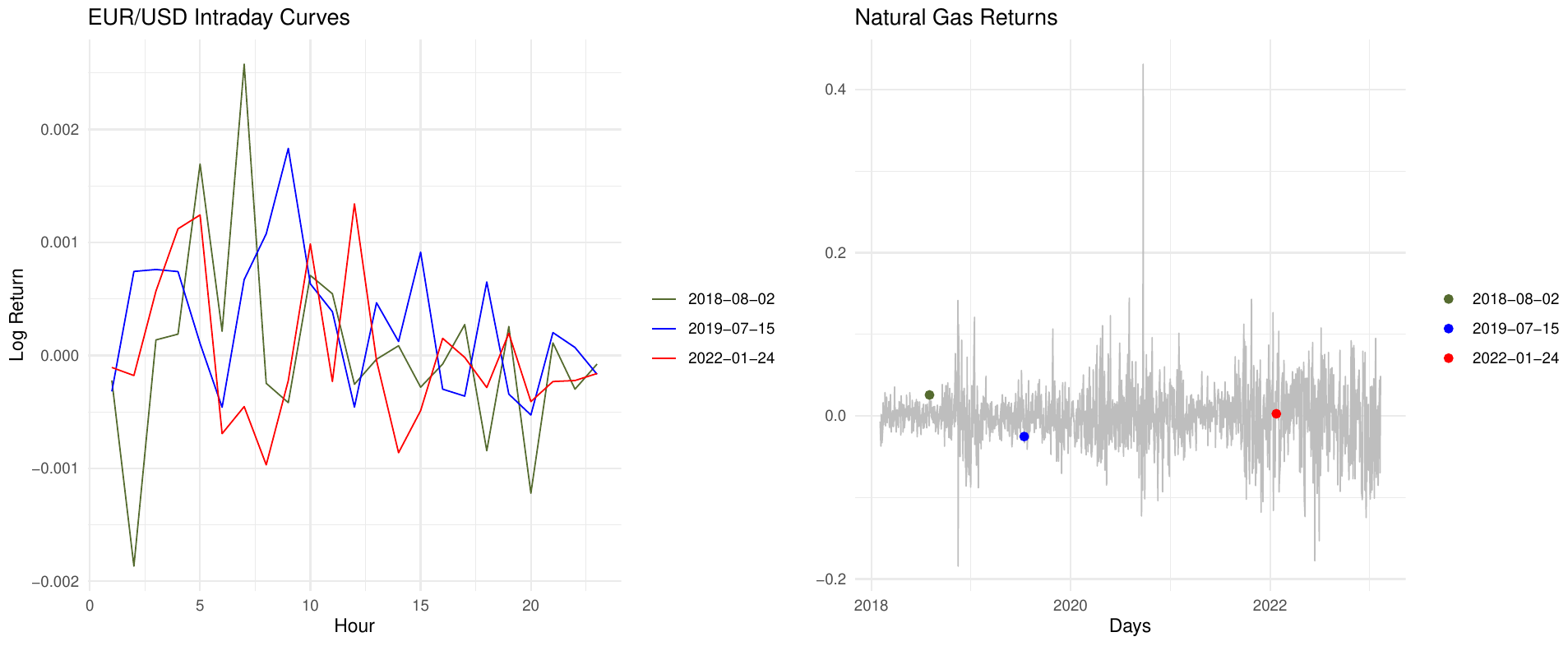}
\end{center}
\caption{(a) Sample curves of intraday (1-hour frequency) EUR/USD exchange rate log returns. (b) Time series of daily natural gas log returns, with dots indicating the corresponding values for the three preselected days.}
\label{sample}
\end{figure}
Over the last two decades, functional data analysis has gained substantial attention due to its wide-ranging applications in fields such as engineering, biology, medicine, climatology, and economics. Early research in this domain focused on parametric (see \citet{Bosq2000}; \citet{Ramsay2002,Ramsay2005}), nonparametric (see \citet{Ferraty2006}; \citet{Geenens2011}; \citet{Ling2018}; \citet{Chaouch2013,Chaouch2015,Chaouch2016}), and semi-parametric models (see \citet{Goia2014}; \citet{Vieu2018}; \citet{Chaouch2020} and others). More recently, nonparametric functional models have gained popularity in econometric studies. For instance, \citet{Ferraty2016} estimated two risk measures, namely value-at-risk and expected shortfall, conditioned on a functional variable. \citet{muller11} introduced a functional volatility process for modeling volatility trajectories in high-frequency financial markets. \citet{Hormann13} proposed a functional ARCH model for high-resolution tick data, modeled as a continuous-time process. \citet{Wang2014} applied functional principal component analysis to identify relevant patterns in the Shanghai Stock Exchange 50 index, while \citet{caldeira} used nonparametric functional methods to forecast the US term structure of interest rates. Recently, \citet{Chaouch2019} investigated volatility estimation in a heteroscedastic scalar-on-function regression model with complete data.

\subsection{Contributions and organization of the paper}
This paper generalizes the results of \citet{perez2010} to accommodate infinite-dimensional predictors and residual-based estimator. Unlike \citet{LING15}, this study considers a heteroscedastic functional regression model in which both the regression and conditional variance operators must be estimated under the missing-at-random assumption. 

Initially, we define simplified estimators using the available data, excluding contributions from missing observations. These simplified estimators are subsequently used to impute the missing data in the original process. In contrast to \citet{LING15}, we employ the initial estimator to impute the missing data and rigorously study the properties of the imputed estimator, both asymptotically and numerically. Finally, we re-estimate the model parameters using the imputed data and compare the results with the simplified estimators. The asymptotic properties of both simplified and imputed estimators are rigorously analyzed, including pointwise and uniform consistency rates, as well as their asymptotic distributions. Confidence intervals are also derived.

The paper is organized  as follows. Section \ref{setting} introduces the heteroscedastic scalar-on-function regression model along with the missing data mechanism and data dependence structure. Section \ref{estimators} defines the simplified and nonparametric imputed estimators. Section \ref{main} outlines the main assumptions of the study and discusses the asymptotic properties of the estimators, including uniform almost-sure convergence rates and asymptotic distributions. To validate these results, simulations and real-data analyses are provided in Sections \ref{sec4} and \ref{sec5}, respectively. Finally, Section \ref{sec6} discusses the findings and offers perspectives for future research. Technical proofs of the main asymptotic results are detailed in the Appendix.

\subsection{Notations}
We define $o_{a.s.}(v)$ as a real random function $z$ such that $z(v)/v$ converges to zero almost surely as $v\rightarrow 0$. Similarly, $\mathcal{O}_{a.s.}(v)$ is defined  as a real random function $z$ such that $z(v)/v$ is almost surely bounded. Throughout this work, $C$ denotes a positive generic constant that may vary in value. The notation $\overset{\mathcal{D}}{\longrightarrow}$ is used to indicate convergence in distribution, and the normal distribution is denoted by $\mathcal{N}(\cdot, \cdot).$

\section{Settings}\label{setting}

In this section, we present the heteroscedastic scalar-on-function regression model, define the missing-at-random (MAR) assumption, and outline the ergodic assumption, which we assume is satisfied by our functional data.

\subsection{Model}

Let $(X_t, Y_t)_{t=1, \dots, n}$ be a sample of discrete time ergodic random processes taking values in $\mathcal{E}\times \mathbb{R}$ and distributed as $(X,Y)$. Here, $Y$ represents the variable of interest, and $X$ is a functional covariate taking values in an infinite-dimensional space $\mathcal{E}$ equipped with a semi-metric $d(\cdot, \cdot)$\footnote{A semi-metric (sometimes called pseudo-metric) $d(\cdot, \cdot)$ is a metric which allows $d(x_{1}, x_{2}) = 0$ for some $x_{1}\neq x_{2}$.}, which defines a topology to measure the proximity between two elements of $\mathcal{E}.$ This semi-metric is disconnected of the definition of $X$ in order to avoid measurability problems. The data generation process is assumed to follow the heteroscedastic functional regression model:
 \begin{eqnarray}\label{model}
 Y_t = m(X_t) + \sqrt{U(X_t)}\;\varepsilon_t, \quad\quad t=1, \dots, n,
 \end{eqnarray}
 where $m(\cdot) = \mathbb{E}(Y | X=\cdot)$ is the regression operator and $U(\cdot) = \text{var}(Y|X=\cdot)$ is the conditional variance operator, both of which are assumed to be unknown.  
 
 The sequence $\varepsilon_1, \varepsilon_2, \ldots$ is assumed to form a martingale difference sequence, satisfying:
 \begin{align}\label{epsilon}
 \E(\varepsilon_t | \G_{t-1})=0\;\text{ a.s.}\quad \mbox{and}\quad \mbox{var}(\varepsilon_t | \G_{t-1}) = 1\;\; \mbox{a.s.},
 \end{align}
where $\G_{t-1}$ is the $\sigma$-field generated by $\{(X_1, Y_1), \dots, (X_{t-1}, Y_{t-1}), X_t \}$. Additionally, let $\F_{t-1}$ denote the $\sigma$-field generated by $\{(X_1, Y_1), \dots, (X_{t-1}, Y_{t-1}) \}.$

It is worth noting that model \eqref{model}, where the response is real-valued and the covariate is infinite dimensional, encompasses several notable volatility models studied in the literature. In the following, we discuss some particular cases.

\vskip 2mm

\noindent\textit{Case 1 - Parametric autoregressive model with ARCH errors}: Let $\mathcal{E} = \mathbb{R}^d$, where $d\geq 1$, and consider $X_{t-1} \equiv (Y_{t-1}, \dots, Y_{t-d})^\top.$ In this case, model \eqref{model} simplifies to:
\begin{eqnarray*}\label{AR}
Y_t = m(Y_{t-1}, \dots, Y_{t-d}) + U(Y_{t-1}, \dots, Y_{t-d})\;\varepsilon_t.
\end{eqnarray*}
Furthermore, if $m(X_{t-1}) \equiv m(X_{t-1}; {\bm \alpha}) = \sum_{j=1}^q \alpha_j Y_{t-j}$ and $U(X_{t-1}; {\bm \beta}) = 1+\sum_{j=1}^d \beta_j Y_{t-j}^2$, then model \eqref{model} corresponds to the AR-ARCH model introduced in \cite{Bor}.

\vskip 2mm

\noindent\textit{Case 2 - Nonparametric autoregressive model with ARCH errors}: To mitigate potential misspecification of the parametric forms assumed for the regression and volatility functions in the AR-ARCH model, \cite{FY98} proposed a local linear estimator for the conditional variance function within a time series regression framework where $Y\in \mathbb{R}$ and  $\mathcal{E} = \mathbb{R}.$ Subsequently, \cite{LA05} extended this approach by investigating a local constant estimator for the parameters in model \eqref{model} considering the case where $Y\in \mathbb{R}$ and $\mathcal{E}=\mathbb{R}^d.$ 

\vskip 2mm


\vskip 1mm
When both the response variable and the predictor are fully observed, \citet{Chaouch2019} proposed nonparametric estimators for $m(\cdot)$ and $U(\cdot)$. Specifically, for any $x\in \mathcal{E}$, these estimators are defined as follows:
 \begin{equation}\label{NWV}
m_{n,c}(x) = \frac{\displaystyle\sum_{t=1}^n Y_t K\left(h_1^{-1} d_1(x, X_t)\right)}{\displaystyle\sum_{t=1}^n K\left(h_1^{-1} d_1(x, X_t)\right)} \quad\mbox{and} \quad U_{n,c}(x) = \frac{\displaystyle\sum_{t=1}^n \big\{Y_t - m_n(X_t)\big\}^2 W\left(h_2^{-1} d_2(x, X_t)\right)}{\displaystyle\sum_{t=1}^n W\left(h_2^{-1} d_2(x, X_t)\right)},
\end{equation}
where $K$ and $W$ are kernel functions, and the sequences $h_1 := h_{1,n}$ and $h_2 := h_{2,n}$ are positive real numbers that decrease to zero as $n \to \infty$. Two distinct semi-metrics, $d_1$ and $d_2 $, are employed for the regression and conditional variance estimators, respectively, to measure the similarity between two functional random variables $X_t$ and $X_s$ in $\mathcal{E}$, for $t\neq s.$

In practice, however, the response variable may be missing at random due to various factors, including data loss, non-response, or data entry errors. \citet{LING15} studied the estimation of the regression operator $m(\cdot)$ under the MAR assumption, where the predictor is a completely observed functional random variable, and the error term in the model has constant variance. Similarly, \citet{Crambes} investigated regression operator estimation under a homoscedastic functional linear model. To the best of our knowledge, no studies have addressed the estimation of the variance operator  $U(\cdot)$  when the response is missing at random. In finite-dimensional case ($\mathcal{E} = \mathbb{R}^d$, $d\geq 1$), \cite{perez2010} explored nonparametric estimation of the difference-based conditional variance function under the MAR assumption, assuming a fixed design for the fully observed predictor.


 \subsection{Missing at random assumption}\label{sec2}

In practice, it is common for data to be incomplete. In this study, we allow the response variable $Y_t$ to be missing at random at any time $t=1, \dots, n$, while assuming that the predictor $X$ is fully observed. To determine whether an observation is complete or missing, we introduce an indicator function $\delta_t,$ where $\delta_t=1$ if $Y_t$ is observed, and $\delta_t=0$ if $Y_t$ is missing, for any $t=1, \dots, n.$ The Bernoulli random variable $\delta$ is assumed to satisfy:
 \begin{eqnarray}\label{pieq}
 \mathbb{P}(\delta=1 | X=x, Y=y) = \mathbb{P}(\delta=1 | X=x) =: \pi(x).
 \end{eqnarray}
  Here, $\pi: \mathcal{E}\rightarrow [0,1]$ represents the conditional probability of observing the response variable and is typically unknown. This assumption implies that $\delta$ and $Y$ are conditionally independent given $X.$  In other words, assumption \eqref{pieq} states that the response variable $Y$ does not provide additional information, beyond what is already given by the explanatory variable $X$, to predict whether an individual observation will be missing.

\subsection{Ergodic processes: definition and examples}
We now introduce the ergodic condition under which the asymptotic theory of our estimators is developed.

\begin{defn}[\cite{R72}]
Let $\{ Z_n, n\in \mathbb{Z} \}$ be a stationary sequence. Define the backward field $\mathcal{B}_n=
\sigma(Z_k; k\leq n)$ and the forward field $\mathcal{A}_m=\sigma(Z_k; k\geq m)$. The sequence is said to be strongly mixing if
$ \sup_{A\in \mathcal{B}_0, B\in \mathcal{A}_n}\Big| \mathbb{P} (A \cap B)-\mathbb{P} (A) \mathbb{P} (B)\Big|=\varphi(n)\to 0 \quad \mbox{as}\quad n\to \infty.$
The sequence is called ergodic if
 $\lim_{n\to\infty} \frac {1}{n} \sum_{t=0}^{n-1} \Big|\mathbb{P} \left(A \cap \tau^{-t}B\right) - \mathbb{P}(A) \mathbb{P}(B)\Big| = 0,$
where $\tau$ represents the time-evolution or shift transformation. 
\end{defn}

The term "strong mixing" in the above definition refers to a more stringent condition than what is typically referred to as strong mixing in the context of measure-preserving dynamical systems. In the latter context, strong mixing is defined as $\lim_{n\to\infty} \mathbb{P} (A \cap \tau^{-n}B) = \mathbb{P}(A)\mathbb{P}(B)$  for any two measurable sets $A, B$ (see \cite{R72}). Consequently, strong mixing implies ergodicity; however, the converse is not true. There exist ergodic sequences that are not strong mixing.

The ergodicity condition is a natural and less restrictive assumption compared to any form of mixing conditions, under which standard nonparametric estimators (e.g., density and regression estimators) are consistent. Ergodicity can be viewed as a condition ensuring the validity of the law of large numbers. Specifically, the ergodic theorem states that for a stationary ergodic process $Z$, we have
$
\lim_{n\to\infty}\frac{1}{n}\sum_{t=1}^n Z_t=\mathbb{E}(Z_1), \ \mbox{almost surely (a.s.)}.$ 
For further details and results on ergodic theory, we refer the reader to the book by \citet{K85}. In the following, we present examples of ergodic processes that do not satisfy the commonly used mixing conditions.

\begin{exmp}
\textit{k-factor Gegenbauer process}

Let $(\epsilon_t)_{t\in\mathbb{Z}}$ be a white noise process with variance $\sigma^2$. The k-factor Gegenbauer process is defined as $\prod_{i=1}^k (I - 2\varpi_i B + B^2)^{d_i} X_t = \epsilon_t,
$ where $I$ is the identity operator, $B$ is the backshift operator, $0 < d_i < \frac{1}{2}$ if $|\varpi_i| < 1$, or $0 < d_i < \frac{1}{4}$ if $|\varpi_i| = 1$, for $i = 1, \ldots, k$.
Here, $\varpi_i$ denotes the Gegenbauer frequencies that control the location of spectral poles, while $d_{i}$ determines the persistence strength at each frequency.
\cite{Giraitis1995} demonstrated that this process exhibits long-memory, is stationary, causal, and invertible. Furthermore, under specified conditions on $d_i$ and $\varpi_i$, the process can be equivalently expressed in a moving average (MA) form $ X_t = \sum_{j=0}^{\infty} \psi_j \epsilon_{t-j}, \quad \text{where } \sum_{j=0}^{\infty} \psi_j^2 < \infty.
$ Here, $\psi_j$ represents the moving average coefficients, describing the influence of past white noise terms
$\epsilon_{t-j}$ on the current value $X_t.$
Alternatively, \cite{Guegan2001} showed that when $(\epsilon_t)_{t\in\mathbb{Z}}$ is a Gaussian process, the $k$-factor Gegenbauer process is not strong mixing. However, its moving average representation confirms that it remains stationary, Gaussian, and ergodic process.
\end{exmp}
\begin{exmp}
\textit{Non-mixing Markov chain}

Consider the linear Markov AR(1) process $X_t = \frac{1}{2} X_{t-1} + \epsilon_t,$ where $(\epsilon_t)$ are independent, symmetric Bernoulli random variables taking values $-1$ and $1$. As noted by \cite{Andrews1984}, the stationary solution of this process is not $\alpha$-mixing. Nevertheless, the process $(X_t)$ is Markovian, stationary, and ergodic.
\end{exmp}

\section{Definition of the estimators}\label{estimators}
In this section, we define the simplified and imputed estimators of the regression and conditional variance operators. We also investigate their asymptotic properties.

\subsection{Simplified estimator}
By multiplying \eqref{model} by $\delta_t$ we obtain $\delta_t Y_t = \delta_t m(X_t) + \delta_t U^{1/2}(X_t)\varepsilon_t.$
Taking the conditional expectation of both sides given $X_t=x$, and using \eqref{pieq}, we have:
\begin{eqnarray*}
\mathbb{E}\big(\delta_t Y_t | X_t =x\big) &=& \mathbb{E}\big(\delta_t m(X_t) | X_t=x \big) + \mathbb{E}\big(\delta_t U^{1/2}(X_t)\varepsilon_t | X_t=x \big)\\
&=& m(x) \mathbb{E}(\delta_t |X_t=x).
\end{eqnarray*}
Thus, the regression operator for the case where the response variable is missing at random can be expressed, for any $x\in \mathcal{E},$ as:
\begin{eqnarray}\label{reg}
m(x) = \dfrac{\mathbb{E}\left(\delta_t Y_t | X_t =x\right) }{\mathbb{E}(\delta_t |X_t=x)}.
\end{eqnarray}
Similarly, consider the expression $\delta_t \left(Y_t - m(X_t) \right)^2=  \delta_t U(X_t)\,\varepsilon_t^2.$
Taking the conditional expectation given $X_t=x$ on both sides, we get:
$$
\mathbb{E}\left(\delta_t \left(Y_t - m(X_t) \right)^2 | X_t=x \right) = U(x)\, \mathbb{E}(\delta_t |X_t=x).
$$
Consequently, the variance operator can be written, for any $x\in \mathcal{E},$ as:
\begin{eqnarray}\label{var}
U(x) = \dfrac{\mathbb{E}\left(\delta_t \left(Y_t - m(X_t) \right)^2 | X_t=x \right)}{\mathbb{E}(\delta_t |X_t=x)}.
\end{eqnarray}

Given a random sample $(X_t, Y_t, \delta_t)_{t=1, \dots, n}$ we define the simplified estimators of $m(x)$ and $U(x)$, as given in \eqref{reg} and \eqref{var}, respectively, as follows:
 \begin{eqnarray*}
m_{n,0}(x) = \frac{\displaystyle\sum_{t=1}^n \delta_t\,Y_t K\left(h_1^{-1} d_1(x, X_t)\right)}{\displaystyle\sum_{t=1}^n \delta_t\,K\left(h_1^{-1} d_1(x, X_t)\right)} \quad\mbox{and} \quad U_{n,0}(x) = \frac{\displaystyle\sum_{t=1}^n \delta_t\,\big(Y_t - m_{n,0}(X_t)\big)^2 W\left(h_2^{-1} d_2(x, X_t)\right)}{\displaystyle\sum_{t=1}^n \delta_t\,W\left(h_2^{-1} d_2(x, X_t)\right)}.
\end{eqnarray*}

\subsection{Nonparametric imputed estimator}
To construct a nonparametric imputed estimator of the regression operator, we first impute the missing values in the original response process using the initial estimator. Specifically, the imputed response is defined as $\widehat{Y}_t = \delta_t Y_t + (1-\delta_t)m_{n,0}(X_t).$ Using the imputed sample $( X_t, \widehat{Y}_t)_{t=1, \dots, n},$ a nonparametric imputed estimator of $m(x)$ is obtained by substituting $Y_t$ with $\widehat{Y}_t$ in \eqref{NWV}. Consequently, for any fixed $x \in \mathcal{E},$ the imputed estimator is given by:
 \begin{eqnarray}\label{imputedm}
m_{n,1}(x) = \frac{\displaystyle\sum_{t=1}^n \widehat{Y}_t K\left(h_1^{-1} d_1(x, X_t)\right)}{\displaystyle\sum_{t=1}^n K\left(h_1^{-1} d_1(x, X_t)\right)}.
\end{eqnarray}
 An imputed estimator of the conditional variance can be constructed in two steps. First, the missing residuals are imputed nonparametrically as follows: $\widehat{r}_t = \delta_t r_t + (1-\delta_t)U_{n,0}(X_t)$, where $r_t = (Y_t - m_{n,0}(X_t))^2$ when $\delta_t=1$ and $r_t$ is unobserved when $\delta_t=0$. Second, using the sample $(X_t, \widehat{r}_t)_{t=1, \dots, n}$, a nonparametric imputed estimator of $U(x)$ is defined, for any $x\in \mathcal{E},$ as:
 \begin{eqnarray}\label{imputedv}
 U_{n,1}(x) =\frac{\displaystyle\sum_{t=1}^n \widehat{r}_t W\left(h_2^{-1} d_2(x, X_t)\right)}{\displaystyle\sum_{t=1}^n W\left(h_2^{-1} d_2(x, X_t)\right)}.
 \end{eqnarray}
 
\section{Assumptions and main results}\label{main}
 For  $k\in \{1, 2\}$ , let  $F_{x,k\color{black}}(u)=\mathbb{P}(d_{k\color{black}}(x, X) \leq u)=\mathbb{P}(X \in$ $B(x, u))$ and $F_{x,k}^{\mathcal{F}_{t-1}}(u)=\mathbb{P}\left(d_{k}(x, X) \leq u \mid \mathcal{F}_{t-1}\right)=\mathbb{P}\left(X \in B(x, u) \mid \mathcal{F}_{t-1}\right)$ represent the marginal distribution and the conditional marginal distribution of $X$ given the $\sigma$-field $\mathcal{F}_{t-1},$ respectively.

Our primary results involve establishing the  almost sure uniform consistency, along with convergence rates, for both the simplified and the imputed estimators of the regression operator and the conditional variance.  To this end, let $\mathcal{C}$ denote a class of elements in the functional space $\mathcal{E},$ and consider, for any $\eta > 0$,
$$\begin{array}{ll}
 N\left(\eta, \mathcal{C}, d_{\mathcal{C}}\right)=\min \{n: & \text{there exist} \quad   c_{1}, \ldots, c_{n} \in \mathcal{C} \quad \text{such that} \quad  \forall x \in \mathcal{C} \\  & \text{there exists}  \quad k \in\{1, \ldots ., n\} \quad  \text{such that} \quad   d_{\mathcal{C}}\left(x, c_{k}\right)<\eta \},
\end{array}$$
a number which measures how full is the class $\mathcal{C}.$ Our results are established under the assumptions listed below. In the subsequent study, the notation $\mathcal{K}$ is used to refer to either the kernel $K$ or $W.$
\begin{itemize}
\item[\textbf{(A1)}]
\begin{itemize}
\item [(i)] $\mathcal{K}$ is a nonnegative bounded kernel of class $\mathcal{C}^{1}$ over its support $[0, 1]$ with $\mathcal{K}(1)>0$. The derivative $\mathcal{K}'$ exists on $[0, 1)$ and satisfies the condition
$\mathcal{K}^{\prime}(v)<0 \text { for all } v \in[0,1) \text { and }$ \begin{center}
$\left|\displaystyle\int_{0}^{1}\left(\mathcal{K}^{j}\right)^{\prime}(v) d v\right|<\infty \text { for } j \in \{1,2\}.$ \end{center}
\item [(ii)]$\mathcal{K}$ is a H\"{o}lder function of order $ \gamma $ with a constant $a_{0}$.
\item [(iii)] There exist constants $a_{1}$ and $a_{2}$ such that $0 < a_{1} \leq \mathcal{K}(v) \leq a_{2} < \infty$ for all $v \in \mathcal{C}$.
\end{itemize}
\item[\textbf{(A2)}]
For $x \in \mathcal{E}$, there exists a sequence of nonnegative  random variables $(f_{t,1})_{t\geq 1} $  almost surely bounded
by a sequence of deterministic quantities ($b_{t}(x))_{t\geq 1}$, a sequence of random functions
$(\psi_{t,x})_{t\geq 1} $, a deterministic nonnegative bounded function $f_{1}$ and a nonnegative real function $\phi(.)$ that tend to zero, as its
argument tends to 0,
\begin{itemize} 
 \item[(i)] $F_{x,k}(u)=\phi_{k}(u) f_{1}(x)+o(\phi_{k}(u)) \text { as } u \rightarrow 0$ and $k\in \{1, 2\},$ where $o(\phi_{k}(u))$ is uniform in $x$. 
\item [(ii)] For $k\in \{1, 2\} $ and $\forall  t \in N, F_{x,k}^{\mathcal{F}_{t-1}}(u)=\phi_{k}(u) f_{t, 1}(x)+\psi_{t, x}(u)$ with $\psi_{t, x}(u)=o_{a.s}(\phi_{k}(u))$ as $u \rightarrow 0, \frac{\psi_{t,x}(u)}{\phi_{k}(u)}$ almost surely bounded for any $x \in \mathcal{C}$ and  $n^{-1} \displaystyle\sum_{t=1}^{n} \psi_{t . x}(u)=o_{a.s}\left(\phi_{k}(u)\right)$ as $n \rightarrow \infty$ and $u \rightarrow 0$.
where $o_{a . s}\left(\phi_{k}^{j}(u)\right)$ is uniform in $x$.
\item [(iii)] $\displaystyle{\lim_{n \rightarrow \infty} \sup_{x\in \mathcal{C}}}\left|f_{1}(x)-n^{-1} \sum_{t=1}^{n} f_{t, 1}(x)\right|=0$ almost surely.
\item [(iv)] For $k \in \{1, 2\} $, there exists a nondecreasing bounded function $\tau_{0,k}$ such that, uniformly in $u \in[0,1]$, $\frac{\phi_{k}(h u)}{\phi_{k}(h)}=\tau_{0,k}(u)+o(1)$, as $h \downarrow 0$, and we have
\begin{center}
$\displaystyle\int_{0}^{1}\left({\mathcal{K}}^{j}(u)\right)^{\prime} \tau_{0,k}(u) d t<\infty$  for  $j \geq 1$.
\end{center}
\item [(v)] $n^{-1} \sum_{t=1}^{n} b_{t}(x) \rightarrow D(x)<\infty$ as $n \rightarrow \infty$, and $0 <\displaystyle{\sup _{x \in \mathcal{C}}} D(x)<\infty$.%
\item [(vi)] $\displaystyle{0<\theta_{0}\leq \inf_{x\in \mathcal{C}}f_{1}(x)\leq \sup_{x\in \mathcal{C}}f_{1}(x)<\infty}$ for some nonnegative real number $\theta_{0}$.
\item [(vii)] $\displaystyle{\inf_{x\in \mathcal{C}}}\pi(x)> \theta_{1}$ for some positive real number $\theta_{1} \in [0,1].$
\end{itemize}
 \item[\textbf{(A3)}]
\begin{itemize}
\item[(i)] There exists $\rho > 1$ such that  $\mathbb{E}\{U(X_{t})^{\rho^{2}/2(\rho-1)}  \}<\infty $ and $\underset{1\leq t\leq n}{\max}\mathbb{E}(|\varepsilon_{t}|^{\rho}|\G_{t-1})<\infty$.
\item[(ii)] $\mathbb{E}\{(\varepsilon^{2}_{t}-1)^{2}|X_{t})=\mathbb{E}\{(\varepsilon^{2}_{t}-1)^{2}|\G_{t-1}\}=\omega(X_{t}) $ is continuous in a neighborhood of $x$ as $h \rightarrow 0,$  that is
\begin{equation*}
\sup _{\{u: d(x, u) \leq h\}}|\omega(u)-\omega(x)|=o(1).
\end{equation*}
\item[(iii)] For some $\kappa  > 0$, $\underset{1\leq t\leq n}{\max} \mathbb{E}\{(\varepsilon^{2}_{t}-1)^{2+\kappa }|\G_{t-1}\} < \infty$.
\item[(iv)] $\mathbb{E}\{|U(X_{t})|^{\rho^{2}/(\rho-1)}\}<\infty $ and $\underset{1\leq t\leq n}{\max}\mathbb{E}(|\varepsilon_{t}^{2}-1|^{\rho\color{black}}|\G_{t-1})<\infty$.
\item[(v)] For any $t\geq 1$, $\mathbb{E}(\delta_{t}|\G_{t-1})= \mathbb{E}(\delta_{t}|X_{t})=\pi(X_{t}) $ is continuous in a neighborhood of $x$ as $h \rightarrow 0,$ that is
\begin{equation*}
\sup_{\{u: d(x, u) \leq h\}}|\pi(u)-\pi(x)|=o(1).
\end{equation*}
\end{itemize}
\item[\textbf{(A4)}]
\begin{itemize}
\item[(i)] For any $(u, v)\in \mathcal{E}^{2}$, $|m(u)-m(v)|<c_{1}d_{1}^{\alpha}(u,v)$, for some  constant $c_{1}>0$ and $\alpha > 0,$
\item [(ii)] For any $(u, v)\in \mathcal{E}^{2}$, $|U(u)-U(v)|<c_{2}d_{2}^{\beta}(u,v)$, for some  constant $c_{2}>0$ and $\beta > 0.$
\item[(iii)] $U^{\kappa+2}(.)$ is continuous in a neighborhood of $x$ as $h \rightarrow 0$, that is
\begin{equation*}
\sup _{\{u: d(x, u) \leq h\}}|U^{\kappa+2}(u)-U^{\kappa+2}(x)|=o(1).
\end{equation*}
\end{itemize}

\end{itemize}

Assumption (A1)(i) pertains to the choice of the kernel $\mathcal{K}$, a standard requirement in nonparametric functional estimation.  It is important to note that the Parzen symmetric kernel is not suitable in this context because the random process $d(x,X_t)$ is positive. Consequently, we restrict $\mathcal{K}$ to have a support on $[0, 1]$. This represents a natural generalization of the typical assumption in the multivariate setting, where $\mathcal{K}$ is often taken to be a spherically symmetric density function. The assumption $\mathcal{K}(1)>0$ and $\mathcal{K}^\prime <0$ ensure that $M_{1,W,2} >0$ for all limiting functions $\tau_0.$ The requirement $\mathcal{K}(1) >0$ is specifically necessary to define the moments $M_{j,W,2}$, which, in this context, depend on the value $\mathcal{K}(1).$ Assumption (A1)(ii) introduces a H\"older-type condition, imposing a certain degree of smoothness on the kernel, which is common in nonparametric setting. Assumption (A1)(iii) requires that the kernel $\mathcal{K}$ is bounded away from zero, a relatively standard condition in nonparametric functional data estimation.

Conditions (A2)(i)-(ii) reflect the ergodicity property assumed for the discrete-time functional process, which is crucial for establishing the asymptotic properties of the estimator. The functions $f_{t,1}$ and $f_1$ serve a similar purpose to the conditional and unconditional densities in finite dimensional case, while $\phi(u)$ characterizes the effect of the radius $u$ on the small ball probability as $u \rightarrow 0$. Several examples of processes satisfying these conditions can be found in \cite{LL10}. Conditions (A2)(iii) and (A2)(v) are primarily formulated to ensure the applicability of the ergodic Theorem. Assumptions (A3)(i), (A3)(iii) and (A3)(iv) are technical and impose bounds on higher-order moments of the errors and the conditional variance. Assumptions (A3)(ii) and (A3)(v) require the continuity of the operators $\pi(\cdot)$ and $\omega(\cdot).$ Finally, assumption (A4) imposes smoothness conditions on the regression and conditional variance operators.

\subsection{Asymptotic properties of the simplified estimator}

In this subsection, we examine the asymptotic properties of the simplified estimators, including their uniform consistency rate, asymptotic distribution, and the construction of confidence intervals.
\subsubsection{Uniform consistency}
\begin{thm}\label{T1}
Suppose that assumptions  (A1), (A2), (A3)(i),(v), (A4)(i) hold true, along with the following conditions:
\begin{equation}\label{1}
 \lim _{n \rightarrow \infty} n \phi_{1}\left(h_{1}\right)=\infty \,\ \hbox{and} \,\
 \lim _{n \rightarrow \infty}\left(n \phi_{1}\left(h_{1}\right)\right)^{-1} \log n=0
\end{equation}
Additionally, for a sequence of positive real numbers $\lambda_{n}$ tending to zero, as $n \rightarrow \infty,$ and  $\eta=\eta_{n}= o(h_{1})$, assume that the following conditions are satisfied:
\begin{equation}\label{2}
\begin{aligned}
\lim _{n \rightarrow \infty} \frac{\log N\left(\eta, \mathcal{C}, d_{\mathcal{C}}\right)}{n\lambda_{n}^{2} \phi_{1}\left(h_{1}\right)\ell_{n}^{-2}}=0 \,\ &  \text{and} &
 \sum_{n \geq 1} \exp \left[-\lambda_{n}^{2}\ell_{n}^{-2} \mathcal{O}\left(n \phi_{1}\left(h_{1}\right)\right)\right]<\infty\text{,}
\end{aligned}
\end{equation}
where $\ell_{n}$ is a sequence of positive numbers tending to infinity as $n \rightarrow \infty,$ defined by\\ $\ell_{n}=\left(\frac{\log n}{\lambda_{n}\phi_{1}(h_{1})^{(\rho-1)/\rho}}\right)^{1/(\rho-1)},$ and
$\rho$ is specified in (A3)(i).
Then, the following result holds:
\begin{equation*}
\sup _{x \in \mathcal{C}}\left|m_{n,0}(x)-m(x)\right|=\mathcal{O}_{a . s .}\left(h_{1}^{\alpha}\right)+\mathcal{O}_{a \cdot s .}\left(\lambda_{n}\right),
\end{equation*}
where $\alpha$ is defined in (A4)(i).
\end{thm}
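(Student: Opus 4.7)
The plan is the classical bias--variance-type argument for Nadaraya--Watson estimators, adapted to the ergodic functional setting and the MAR mechanism. I would start by writing
\begin{equation*}
m_{n,0}(x) - m(x) = \frac{\widehat{g}_n(x) - m(x)\,\widehat{f}_n(x)}{\widehat{f}_n(x)},
\end{equation*}
with $\widehat{f}_n(x) = (n\phi_1(h_1))^{-1}\sum_{t=1}^n \delta_t K_t(x)$, $\widehat{g}_n(x) = (n\phi_1(h_1))^{-1}\sum_{t=1}^n \delta_t Y_t K_t(x)$, and $K_t(x) := K(h_1^{-1} d_1(x, X_t))$. Substituting \eqref{model} in the numerator yields $\widehat{g}_n(x) - m(x)\widehat{f}_n(x) = \widehat{B}_n(x) + \widehat{R}_n(x)$, where $\widehat{B}_n(x) = (n\phi_1(h_1))^{-1}\sum_{t=1}^n \delta_t (m(X_t)-m(x)) K_t(x)$ is a bias term and $\widehat{R}_n(x) = (n\phi_1(h_1))^{-1}\sum_{t=1}^n \delta_t \sqrt{U(X_t)}\,\varepsilon_t K_t(x)$ is a noise term.

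The denominator is controlled first: the ergodic theorem, the conditional marginal expansion (A2)(i)--(iv), the kernel properties (A1), and the MAR equality (A3)(v) jointly yield $\widehat{f}_n(x) \to \pi(x) f_1(x) M_1$ a.s.\ for a positive constant $M_1$ depending only on $K$ and $\tau_{0,1}$. Uniformity over $\mathcal{C}$ comes from an $\eta$-net argument based on $N(\eta, \mathcal{C}, d_{\mathcal{C}})$, the H\"older continuity (A1)(ii) of $K$, and the choice $\eta = o(h_1)$; the lower bounds (A2)(vi)--(vii) then give $\inf_{x\in\mathcal{C}} \widehat{f}_n(x) \geq \tfrac{1}{2}\theta_0\theta_1 M_1 > 0$ eventually, a.s. The bias piece is immediate: the support of $K$ confines $d_1(x, X_t) \leq h_1$ on $\{K_t(x) > 0\}$, so (A4)(i) gives $\sup_{\mathcal{C}}|\widehat{B}_n(x)| \leq c_1 h_1^\alpha \sup_{\mathcal{C}}\widehat{f}_n(x) = \mathcal{O}_{a.s.}(h_1^\alpha)$.

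The crux is the noise term $\widehat{R}_n(x)$, for which I would truncate: write $\varepsilon_t = \varepsilon_t' + \varepsilon_t''$ with $\varepsilon_t' = \varepsilon_t \mathbf{1}\{|\varepsilon_t| \leq \ell_n\}$. The untruncated tail is eliminated by Markov's inequality and (A3)(i): the $\rho$-th moment condition combined with the definition of $\ell_n$ makes $\sum_n \mathbb{P}(\max_{t\leq n}|\varepsilon_t| > \ell_n)$ summable, so the remainder vanishes a.s.\ by Borel--Cantelli. For the truncated part, the MAR identity \eqref{pieq} combined with \eqref{epsilon} and (A3)(v) yields $\mathbb{E}[\delta_t \varepsilon_t' \mid \mathcal{G}_{t-1}] = o(1)$, and after recentering the summands form a bounded martingale-difference array (with respect to a suitable enlargement of $(\mathcal{G}_t)$ accommodating the $\delta_t$) whose conditional variance is of order $(n\phi_1(h_1))^{-1}$ by (A2) and (A3)(i). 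A Bernstein-type inequality for martingale differences then gives, at each point $c_k$ of an $\eta$-net, a deviation bound of the form $\exp(-c\,\lambda_n^2 \ell_n^{-2}\, n\phi_1(h_1))$, matching the summability in \eqref{2}; the union bound over the $N(\eta, \mathcal{C}, d_{\mathcal{C}})$ net points is absorbed by the first part of \eqref{2}, and the local oscillation $\sup_{d_{\mathcal{C}}(x, x') < \eta}|\widehat{R}_n(x) - \widehat{R}_n(x')|$ is $o(\lambda_n)$ by (A1)(ii) and $\eta = o(h_1)$. This produces $\sup_{\mathcal{C}}|\widehat{R}_n(x)| = \mathcal{O}_{a.s.}(\lambda_n)$, and combining the three estimates gives the claim.

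The main obstacle is the noise step: one has to simultaneously handle the unbounded innovations (truncation at the level $\ell_n$ supplied by the theorem), the ergodic (non-mixing) dependence (forcing the use of martingale---rather than strong-mixing---exponential inequalities, with a filtration that properly integrates the MAR indicators $\delta_t$), and the uniformity over the infinite-dimensional class $\mathcal{C}$ (quantified precisely by the entropy bound \eqref{2}, whose two conditions are tailored exactly to this union-bound-plus-exponential-inequality step).
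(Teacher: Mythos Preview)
Your proposal follows essentially the same strategy as the paper's proof: the same numerator/denominator decomposition (the paper normalizes by $n\,\mathbb{E}(K_1(x))$ rather than $n\phi_1(h_1)$, which is cosmetic), the same uniform lower bound on the denominator via convergence to $\pi(x)$, the same bias bound from (A4)(i), and the same truncation + martingale exponential inequality + $\eta$-net scheme for the noise term. One small correction is needed: truncating $\varepsilon_t$ alone does not give almost surely bounded martingale increments, because the factor $\sqrt{U(X_t)}$ is only moment-bounded under (A3)(i); the paper truncates the full product $\mathcal{L}_t=\delta_t\sqrt{U(X_t)}\,\varepsilon_t$ at level $\ell_n$, after which the recentering, tail, and Bernstein steps go through exactly as you describe.
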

\begin{rem}
 The uniform consistency rate for the regression operator is identical to that derived in \cite{Chaouch2019} when the response variable is fully observed. Moreover, the same rate has been obtained in \cite{LL10} for the homoscedastic scalar-on-function regression model. Additionally, if $\lambda_n = \mathcal{O}\left( \sqrt{ \log n/(n\phi_1(h_1)})\right)$, condition (\ref{2}) is satisfied, and consequently, the uniform consistency rate for $m_{n,0}(x)$ becomes $\mathcal{O}(h_1^\alpha) + \mathcal{O}\left( \sqrt{ \log n/(n\phi_1(h_1)})\right).$ This matches the one established in \cite{Ferraty2006} for independent data.
\end{rem}
\begin{thm}\label{T2}
Assume that assumptions  (A1), (A2), (A3)(i),(iv)-(v),  (A4)(i)-(ii),  conditions (\ref{1})-(\ref{2}), and the following conditions hold:
\begin{equation}\label{3}
 \lim _{n \rightarrow \infty} n \phi_{2}\left(h_{2}\right)=\infty \,\ \hbox{and} \,\
 \lim _{n \rightarrow \infty}\left(n \phi_{2}\left(h_{2}\right)\right)^{-1} \log n=0.
\end{equation}
Furthermore, for a sequence of positive real numbers $\lambda_{n}^{\prime}$ tending to zero as $n \rightarrow \infty$, suppose the following conditions are satisfied:
\begin{equation}\label{4}
\begin{aligned}
\lim _{n \rightarrow \infty} \frac{\log N\left(\eta, \mathcal{C}, d_{\mathcal{C}}\right)}{n\left(\lambda_{n}^{\prime}\right)^{2} \phi_{2}\left(h_{2}\right)\left(\ell_{n}^{\prime}\right)^{-2}}=0 & & \text{and} & &
 \sum_{n \geq 1} \exp \left[-\left(\lambda_{n}^{\prime}\right)^{2}\left(\ell_{n}^{\prime}\right)^{-2} \mathcal{O}\left(n \phi_{2}\left(h_{2}\right)\right)\right]<\infty,
\end{aligned}
\end{equation}
where $\ell_{n}^{\prime}$ is a sequence of positive numbers tending to $\infty$, as $n \rightarrow \infty,$ defined as\\ $\ell_{n}^{\prime}=\left(\frac{\log n}{\lambda_{n}^{\prime}\phi_{2}(h_{2})^{(\rho-1)/\rho}}\right)^{1/(\rho-1)},$ with
$\rho$ specified in (A3)(i).  Then, the following result holds:
\begin{equation*}
\sup _{x \in \mathcal{C}}\left|U_{n,0}(x)-U(x)\right|=\mathcal{O}_{a . s .}\left(h_{1}^{2\alpha}+h_{2}^{\beta}\right)+\mathcal{O}_{a \cdot s .}\left(\lambda_{n}^{\prime}+\lambda_{n}^2\right),
\end{equation*}
where $\alpha$ and $\beta$ are given in (A4)(i) and (A4)(ii), respectively.
\end{thm}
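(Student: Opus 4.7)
The plan is a standard two-step ``residuals first, then plug-in'' reduction that mirrors the structure of Theorem~\ref{T1}. To begin, I would introduce the pseudo-estimator
\begin{equation*}
\widetilde{U}_{n,0}(x) \;=\; \frac{\displaystyle\sum_{t=1}^n \delta_t \bigl(Y_t - m(X_t)\bigr)^2 W\!\left(h_2^{-1} d_2(x, X_t)\right)}{\displaystyle\sum_{t=1}^n \delta_t\, W\!\left(h_2^{-1} d_2(x, X_t)\right)},
\end{equation*}
which employs the unknown true residuals, and then write $U_{n,0}(x)-U(x) = \bigl[\widetilde{U}_{n,0}(x) - U(x)\bigr] + \bigl[U_{n,0}(x) - \widetilde{U}_{n,0}(x)\bigr]$. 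The supremum of each bracket over $x \in \mathcal{C}$ is controlled separately.

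For the first bracket, I would run a proof that is \emph{structurally identical} to the argument giving Theorem~\ref{T1}, but with the substitutions $Y_t \leftrightarrow (Y_t-m(X_t))^2 = U(X_t)\varepsilon_t^2$, $m \leftrightarrow U$, $K \leftrightarrow W$, $h_1 \leftrightarrow h_2$, $\phi_1 \leftrightarrow \phi_2$, $d_1 \leftrightarrow d_2$, $\alpha \leftrightarrow \beta$, $\lambda_n \leftrightarrow \lambda_n'$. The moment condition (A3)(iv), the continuity of $\omega$ and $U^{\kappa+2}$ from (A3)(ii) and (A4)(iii), the smoothness (A4)(ii), and hypotheses \eqref{3}--\eqref{4} exactly play the roles played in Theorem~\ref{T1} by their counterparts. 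One obtains $\sup_{x\in\mathcal{C}}|\widetilde{U}_{n,0}(x)-U(x)| = \mathcal{O}_{a.s.}(h_2^\beta) + \mathcal{O}_{a.s.}(\lambda_n')$.

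For the second bracket, I would expand
\begin{equation*}
\bigl(Y_t - m_{n,0}(X_t)\bigr)^2 - \bigl(Y_t - m(X_t)\bigr)^2 \;=\; \bigl(m_{n,0}(X_t)-m(X_t)\bigr)^2 \;-\; 2\bigl(Y_t-m(X_t)\bigr)\bigl(m_{n,0}(X_t)-m(X_t)\bigr),
\end{equation*}
insert this in the numerator and use the fact that the kernel-weighted denominator behaves like $n\phi_2(h_2) f_1(x)$ (an ergodic-theorem fact reused from the Theorem~\ref{T1} analysis, plus (A2)(vi)--(vii) to keep it bounded away from zero uniformly on $\mathcal{C}$). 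The quadratic term is bounded uniformly by $\sup_{x\in\mathcal{C}_\ast}(m_{n,0}-m)^2$ on an $h_2$-enlargement $\mathcal{C}_\ast$ of $\mathcal{C}$, which by Theorem~\ref{T1} delivers the contribution $\mathcal{O}_{a.s.}(h_1^{2\alpha}+\lambda_n^2)$. The cross term will be treated by a Cauchy--Schwarz inequality on the kernel-weighted sums, bounding its square by $\widetilde{U}_{n,0}(x)$ (which is $\mathcal{O}_{a.s.}(1)$ by Step~2) times the same weighted average of $(m_{n,0}-m)^2$; combined with a martingale-difference argument that exploits $\mathbb{E}(\varepsilon_t\mid \G_{t-1})=0$ together with a leave-one-out substitution $m_{n,0}(X_t) \to m_{n,0}^{(-t)}(X_t)$, the cross term is pushed to the order of the diagonal $h_1^{2\alpha}+\lambda_n^2$. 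Collecting the two brackets yields the stated rate.

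The main obstacle is precisely this cross-term analysis: a naive Cauchy--Schwarz would only give the cross term at the square-root order $\mathcal{O}_{a.s.}(h_1^\alpha+\lambda_n)$, which is strictly worse than the rate announced. To recover $h_1^{2\alpha}+\lambda_n^2$, one must handle the dependence of $m_{n,0}(X_t)$ on $Y_t$ via the leave-one-out device and the martingale-difference property of $\varepsilon_t$ (controlled through (A3)(i)--(iii), (A3)(v)), showing that the stochastic part of $m_{n,0}-m$ is approximately orthogonal (in the weighted $L^2$ sense) to the innovations $\sqrt{U(X_t)}\varepsilon_t$. The remaining pieces---ergodic LLN for denominators and Hölder bounds for biases via (A4)(i)--(ii)---are direct transcriptions of the Theorem~\ref{T1} machinery.
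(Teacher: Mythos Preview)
Your decomposition is equivalent to the paper's (once you split $(Y_t-m(X_t))^2=U(X_t)(\varepsilon_t^2-1)+U(X_t)$ inside the first bracket you recover exactly the paper's four pieces $\vartheta_{n,0}^{[1]},\dots,\vartheta_{n,0}^{[4]}$), and your treatment of the first bracket and of the quadratic term is fine. The gap is in the cross term.

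You correctly note that a naive Cauchy--Schwarz on the cross term only yields $\mathcal{O}_{a.s.}(h_1^{\alpha}+\lambda_n)$, which is too weak, and you then propose to repair this via a leave-one-out substitution $m_{n,0}(X_t)\to m_{n,0}^{(-t)}(X_t)$ plus an approximate-orthogonality argument. Under the ergodic (non-mixing, non-i.i.d.) dependence assumed here this is not workable: $m_{n,0}^{(-t)}(X_t)$ is still a function of the whole sample $\{(X_s,Y_s):s\neq t\}$, hence not $\mathcal{G}_{t-1}$-measurable, so the martingale-difference structure of $\varepsilon_t$ gives you nothing, and there is no independence to fall back on. The ``approximate orthogonality'' you invoke is not supported by any of the stated hypotheses.

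The paper avoids this entirely by a much simpler observation. Write the cross term as
\[
\vartheta_{n,0}^{[2]}(x)=\frac{2}{n\,\mathbb{E}[W_1(x)]}\sum_{t=1}^n \delta_t\bigl(m(X_t)-m_{n,0}(X_t)\bigr)\sqrt{U(X_t)}\,\varepsilon_t\,W_t(x),
\]
pull out $\sup_{u\in\mathcal{C}}|m_{n,0}(u)-m(u)|$, and recognize what remains:
\[
\bigl|\vartheta_{n,0}^{[2]}(x)\bigr|\;\le\;2\sup_{u\in\mathcal{C}}|m_{n,0}(u)-m(u)|\times \bigl|\overline{V}_{n,0}(x)\bigr|,\qquad
\overline{V}_{n,0}(x)=\frac{1}{n\,\mathbb{E}[W_1(x)]}\sum_{t=1}^n \delta_t\sqrt{U(X_t)}\,\varepsilon_t\,W_t(x).
\]
But $\overline{V}_{n,0}$ has exactly the same structure as the term $V_{n,0}$ already handled in the proof of Theorem~\ref{T1} (it is the $W$-kernel analogue), so the same truncation/exponential-inequality lemma gives $\sup_{x\in\mathcal{C}}|\overline{V}_{n,0}(x)|=\mathcal{O}_{a.s.}(\lambda_n')$ under \eqref{3}--\eqref{4} and (A3)(i),(v). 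Hence the cross term is $\mathcal{O}_{a.s.}\bigl((h_1^{\alpha}+\lambda_n)\lambda_n'\bigr)=o_{a.s.}(\lambda_n')$, which is absorbed into the $\mathcal{O}_{a.s.}(\lambda_n')$ of the final rate. No leave-one-out, no orthogonality argument. Replace your cross-term step by this factorisation and the proof goes through; as a side remark, (A3)(ii) and (A4)(iii) are not needed here---they enter only for the asymptotic distribution.
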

\begin{rem}
The initial estimator $U_{n,0}(x)$ exhibits the same uniform almost sure consistency rate as that derived in \cite{Chaouch2019}, where the response variable is fully observed. Furthermore, by setting $h_1 = h_2 = h$, $d_1(\cdot, \cdot) = d_2(\cdot, \cdot)$, $\alpha = \beta$, $\phi_1(\cdot) = \phi_2(\cdot) = \phi(\cdot)$, and $\lambda_n = \lambda_n' = \mathcal{O}\left( \sqrt{ \log n/(n\phi(h)})\right)$, condition  \eqref{3} is satisfied. Consequently, the uniform convergence rate of  $U_{n,0}(x)$ is given by $\mathcal{O}(h^\beta) + \mathcal{O}\left( \sqrt{ \log n/(n\phi(h)})\right)$. In the finite-dimensional case ($\mathcal{E}=\mathbb{R}^d$), the rate simplifies to $\mathcal{O}(h^\beta) + \mathcal{O}\left( \sqrt{ \log n/n h^d}\right),$ which aligns with the rate established in \cite{LA05} for a nonlinear autoregressive model with ARCH errors.
\end{rem}

\subsubsection{Asymptotic distribution}
\begin{thm}\label{T3}
Assume that assumptions (A1)-(A4) and conditions (\ref{1})-(\ref{4}) hold. Additionally, suppose that the following conditions are satisfied:
\begin{eqnarray}
    \label{extracond}
    \sqrt{n\phi_2(h_2)} h_1^{2\alpha} \rightarrow 0, \;  \sqrt{n\phi_2(h_2)} \lambda_n^2 \rightarrow 0, \; \text{and}\; \sqrt{n\phi_2(h_2)} \lambda_n^\prime \rightarrow 0\; \text{as}\; n\rightarrow \infty.
\end{eqnarray}
Then, for any fixed $x\in \mathcal{E}$ such that $f_1(x)>0$, the following results hold:
\begin{itemize}
\item[$(i)$] $ \sqrt{n \phi_{2}\left(h_{2}\right)}\left(U_{n,0}(x)-U(x) - B_{n,0}(x)\right) \overset{\mathcal{D}}
{\longrightarrow} \mathcal{N}\left(0, \sigma_{0}^{2}(x)\right),
$
\end{itemize}
where the ``bias'' term $B_{n,0}(x) = \mathcal{O}(h_2^\beta)$ with $\beta$ as defined in assumption (A4)(ii). The asymptotic conditional variance is $\sigma_{0}^{2}(x):=\frac{M_{2, W, 2} U^{2}(x)\omega(x)}{M_{1, W, 2}^{2}  
 \pi(x)f_{1}(x)}$ and 
 \begin{eqnarray}\label{MJ}
 M_{j, W, 2}=W^{j}(1)-\int_{0}^{1}\left(W^{j}\right)^{\prime}(u) \tau_{0,2}(u) d u ,\,\ for \,\ j\in \{1,2\}.
 \end{eqnarray}
\begin{itemize}
\item[$(ii)$] If, in addition, $\sqrt{n\phi_2(h_2)} h_2^\beta \rightarrow 0\; \text{as}\; n\rightarrow \infty,$ then the estimator satisfies: 
$$ \sqrt{n \phi_{2}\left(h_{2}\right)}\left(U_{n,0}(x)-U(x)\right) \overset{\mathcal{D}}
{\longrightarrow} \mathcal{N}\left(0, \sigma_{0}^{2}(x)\right).
$$
\end{itemize}
\end{thm}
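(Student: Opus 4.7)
The plan is to decompose $U_{n,0}(x)=\widehat{U}_N(x)/\widehat{U}_D(x)$, with $W_t:=W(h_2^{-1}d_2(x,X_t))$, numerator $\widehat{U}_N(x):=(n\phi_2(h_2))^{-1}\sum_{t=1}^n\delta_t(Y_t-m_{n,0}(X_t))^2 W_t$ and denominator $\widehat{U}_D(x):=(n\phi_2(h_2))^{-1}\sum_{t=1}^n\delta_t W_t$. The same ergodic small-ball analysis underlying the proof of Theorem \ref{T2}, applied under (A1), (A2)(iv)--(vi) and (A3)(v), yields $\widehat{U}_D(x)\overset{a.s.}{\longrightarrow}M_{1,W,2}\pi(x)f_1(x)$, so by Slutsky's lemma the problem reduces to the asymptotic normality of the centered numerator $\widehat{U}_N(x)-U(x)\widehat{U}_D(x)$.

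Using $(Y_t-m_{n,0}(X_t))^2=U(X_t)\varepsilon_t^2+2\sqrt{U(X_t)}\,\varepsilon_t\Delta_t+\Delta_t^2$ with $\Delta_t:=m(X_t)-m_{n,0}(X_t)$, I would split the centered numerator into a deterministic bias $B_n=(n\phi_2(h_2))^{-1}\sum_t\delta_t[U(X_t)-U(x)]W_t$, a main stochastic term $S_n=(n\phi_2(h_2))^{-1}\sum_t\delta_t U(X_t)(\varepsilon_t^2-1)W_t$, a cross term $C_n$ involving $\varepsilon_t\Delta_t$, and a quadratic remainder $R_n$ in $\Delta_t^2$. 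The H\"{o}lder condition (A4)(ii) gives $B_n=O(h_2^\beta)$, which matches the announced $B_{n,0}(x)$ after dividing by the limit of $\widehat{U}_D(x)$. Since $W_t$ localises $X_t$ inside an $h_2$-ball about $x\in\mathcal{C}$, Theorem \ref{T1} bounds $|\Delta_t|$ by $\sup_{u\in\mathcal{C}}|m_{n,0}(u)-m(u)|=\mathcal{O}_{a.s.}(h_1^\alpha+\lambda_n)$, giving $R_n=\mathcal{O}_{a.s.}(h_1^{2\alpha}+\lambda_n^2)$. To obtain the sharper rate $C_n=\mathcal{O}_{a.s.}(h_1^{2\alpha}+\lambda_n^2)$ required by \eqref{extracond}, I would introduce a leave-one-out variant $m_{n,0}^{(-t)}(X_t)$ (independent of $Y_t$) so that the dominant part of the cross sum becomes a martingale difference whose variance is of that order, and absorb the residual $m_{n,0}(X_t)-m_{n,0}^{(-t)}(X_t)=\mathcal{O}(1/(n\phi_1(h_1)))$ separately.

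The heart of the argument is a martingale central limit theorem for $\sqrt{n\phi_2(h_2)}\,S_n$. Enlarging the filtration to $\widetilde{\mathcal{G}}_{t-1}:=\sigma(\mathcal{G}_{t-1}\cup\sigma(\delta_t))$ is legitimate under MAR because $\delta_t\perp Y_t\mid X_t$, so the martingale property \eqref{epsilon} still yields $\mathbb{E}(\varepsilon_t^2-1\mid\widetilde{\mathcal{G}}_{t-1})=0$; thus $\eta_{n,t}:=(n\phi_2(h_2))^{-1/2}\delta_t U(X_t)(\varepsilon_t^2-1)W_t$ is a martingale difference triangular array. Its cumulative conditional variance
\[
\sum_{t=1}^n\mathbb{E}(\eta_{n,t}^2\mid\widetilde{\mathcal{G}}_{t-1})=(n\phi_2(h_2))^{-1}\sum_{t=1}^n\delta_t U^2(X_t)\omega(X_t)W_t^2
\]
converges almost surely to $M_{2,W,2}U^2(x)\omega(x)\pi(x)f_1(x)$ by continuity of $U,\omega,\pi$ at $x$ (from (A3)(ii), (A3)(v), (A4)(iii)) together with the small-ball limit $(n\phi_2(h_2))^{-1}\sum_t\delta_t W_t^2\overset{a.s.}{\longrightarrow}M_{2,W,2}\pi(x)f_1(x)$, obtained via an integration-by-parts against $\tau_{0,2}$ from (A2)(iv) combined with the ergodic theorem. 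The Lindeberg condition follows from the $(2+\kappa)$-moment bound (A3)(iii). Hall--Heyde's CLT for martingale difference arrays then delivers $\sqrt{n\phi_2(h_2)}\,S_n\overset{\mathcal{D}}{\longrightarrow}\mathcal{N}(0,M_{2,W,2}U^2(x)\omega(x)\pi(x)f_1(x))$; dividing by the limit $M_{1,W,2}\pi(x)f_1(x)$ of $\widehat{U}_D(x)$ gives part $(i)$ with the announced $\sigma_0^2(x)$, and part $(ii)$ is immediate since the extra condition $\sqrt{n\phi_2(h_2)}\,h_2^\beta\to 0$ absorbs $B_{n,0}(x)$ into the $o_P(1)$ remainder.

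The main obstacle is the sharp $\mathcal{O}_{a.s.}(h_1^{2\alpha}+\lambda_n^2)$ control of $C_n$: the naive uniform bound delivers only $\mathcal{O}_{a.s.}(h_1^\alpha+\lambda_n)$, which is too weak for \eqref{extracond}, and this forces the leave-one-out decomposition outlined above. A secondary technical point is identifying the limit of $(n\phi_2(h_2))^{-1}\sum_t\delta_t W_t^2\omega(X_t)U^2(X_t)$ as a multiple of $M_{2,W,2}$ rather than $M_{1,W,2}$, which must be extracted from the ergodic small-ball machinery of (A2) and the specific form of $\tau_{0,2}$, not from any mixing inequality.
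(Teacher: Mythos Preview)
Your overall architecture---the four-term decomposition into bias $B_n$, main stochastic term $S_n$, cross term $C_n$, and quadratic remainder $R_n$, followed by a martingale CLT for $S_n$ and Slutsky---is exactly the paper's strategy, and your variance computation correctly recovers $\sigma_0^2(x)$.

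However, you have misidentified the obstacle. You claim that the ``naive'' bound on $C_n$ gives only $\mathcal{O}_{a.s.}(h_1^\alpha+\lambda_n)$ and that a leave-one-out construction is needed to reach $\mathcal{O}_{a.s.}(h_1^{2\alpha}+\lambda_n^2)$. Neither assertion is correct. The paper simply factorises
\[
|C_n|\;\le\;2\sup_{u\in\mathcal{C}}|m_{n,0}(u)-m(u)|\times\Bigl|\frac{1}{n\mathbb{E}[W_1(x)]}\sum_{t=1}^n\delta_t\sqrt{U(X_t)}\,\varepsilon_t W_t(x)\Bigr|=:2\sup|\Delta|\times|\overline{V}_{n,0}(x)|.
\]
The second factor $\overline{V}_{n,0}(x)$ is itself $\mathcal{O}_{a.s.}(\lambda_n')$ by the same truncation-plus-exponential-inequality argument that handles $V_{n,0}(x)$ in Theorem~\ref{T1} (it is a centred martingale-type sum in $\varepsilon_t$). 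Hence $C_n=\mathcal{O}_{a.s.}\bigl((h_1^\alpha+\lambda_n)\lambda_n'\bigr)$, and the \emph{third} condition in \eqref{extracond}, namely $\sqrt{n\phi_2(h_2)}\,\lambda_n'\to 0$, kills it outright---no squared rate is required. Your proposed leave-one-out device is therefore unnecessary, and in the present ergodic (non-i.i.d.) setting it is also technically awkward: $m_{n,0}^{(-t)}(X_t)$ depends on observations with indices both below and above $t$, so it is not adapted to any of the filtrations $\mathcal{F}_{t-1}$ or $\mathcal{G}_{t-1}$ used here, and the claimed martingale structure for the resulting cross sum would not hold without further work.

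A minor point: enlarging the filtration to include $\delta_t$ is harmless but not needed; the paper works directly with $\mathcal{F}_{t-1}$, using $\mathbb{E}[\delta_t(\varepsilon_t^2-1)\mid\mathcal{G}_{t-1}]=\pi(X_t)\,\mathbb{E}[\varepsilon_t^2-1\mid\mathcal{G}_{t-1}]=0$ via the MAR assumption and \eqref{epsilon}.
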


Theorem \ref{T3} extends the asymptotic distribution result established in Theorem 3 in \citet{Chaouch2019}, which considers the case where data are fully observed. Notably, the asymptotic conditional variance $\sigma_0^2(x)$ depends on the conditional probability of observing data, $\pi(x)$. A higher (resp. lower) missing-at-random rate - corresponding to smaller (resp. larger) $\pi(x)$ - results in a larger (resp. smaller) value of $\sigma_0^2(x)$. Consequently, the efficiency of the initial estimator decreases (resp., increases) with an increasing MAR rate. If the data are fully observed (i.e. $\pi(x) = 1, \forall x\in\mathcal{E}$), the asymptotic conditional variance  $\sigma_0^2(x)$ reduces to the form derived in  \citet{Chaouch2019}, highlighting the optimal efficiency under a complete data scenario. 
 


\subsubsection{Asymptotic confidence intervals}
Our objective here is to construct asymptotic confidence intervals for $U(x)$ for any fixed curve $x \in \mathcal{E} $ based
on a normal approximation. It is important to note that the asymptotic variance in Theorem \ref{T3} involves several unknown
quantities: $\pi(\cdot),$ $\omega(\cdot),$  $U(\cdot),$ $M_{1, W, 2},$ $M_{2, W, 2}$ and $\tau_{0,2}(u).$ These parameters are replaced with their empirical counterparts as follows. The conditional variance $U(x)$ is replaced by its estimator $U_{n,0}(x)$, while $\omega(x)$ and  $\pi(x)$ are estimated using:
\begin{equation}\label{w_n}
 \omega_{n,0}(x)=\frac{\displaystyle\sum_{t=1}^{n}\delta_t(\hat{\varepsilon}^{2}_{t,0}-1)^{2}H\left(h_3^{-1} d_{3}(x,X_{t})\right)}{\displaystyle\sum_{t=1}^{n} \delta_t H\left(h_3^{-1} d_{3}(x,X_{t})\right)},
\end{equation}
where $\hat{\varepsilon}_{t,0}=[Y_{t}-m_{n,0}(X_{t})]/[U_{n,0}(X_{t})]^{1/2},$  and
 \begin{equation}\label{pi_n}
 \pi_{n}(x)=\frac{\displaystyle\sum_{t=1}^{n}\delta_{t}\tilde{H}\left(h_4^{-1}d_{4}(x,X_{t})\right)}{\displaystyle\sum_{t=1}^{n} \tilde{H}\left(h_4^{-1}d_{4}(x,X_{t})\right)}.
\end{equation}
Here, $H$ and $\tilde{H}$ are kernel functions, $d_3(\cdot, \cdot)$ and $d_4(\cdot, \cdot)$ are semi-metrics, and $h_3$ and $h_4$ are bandwidths used to estimate $\omega(x)$ and $\pi(x)$, respectively. Additionally, by applying assumptions (A2)(i) and (A2)(iv), the estimator of $\tau_{0,2}$ is defined as 
$
\widehat{\tau}_{0,2}(u)=\widehat{F}_{x,2}(uh_{2})/\widehat{F}_{x,2}(u),
$
where
$
\widehat{F}_{x,2}(u)=n^{-1}\sum_{t=1}^{n}\mathds{1}_{\{d_{2}(x,X_{t})\leq u\}}.
$
The value of $\tau_{0,2}(u)$ is replaced by its estimator $\widehat{\tau}_{0,2}(u)$ in (\ref{MJ}) to obtain the plug-in estimator $\widehat{M}_{j, W, 2}$ of  $M_{j, W, 2}$, for $j=\{1,2\}.$\\

\begin{cor}\label{Cor}
  Under the conditions of Theorem \ref{T3}, we have:
 \begin{equation}\label{Cor+1}
  \left(\frac{\color{black} \widehat{M}_{1, W, 2}\color{black}}{\color{black}\sqrt{\widehat{M}_{2, W, 2} \color{black}} }  \sqrt{\frac{n \widehat{F}_{x, 2}(h_2)\pi_{n}(x)}{\omega_{n,0}(x)\color{black} U^{2}_{n,0}(x) }}\right) \left(U_{n,0}(x)-U(x)\right) \overset{\mathcal{D}}
{\longrightarrow} \mathcal{N}\left(0, 1\right),\,\ \hbox{as}\,\ n\rightarrow\infty.
\end{equation}
\end{cor}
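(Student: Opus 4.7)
The strategy is to derive Corollary \ref{Cor} from Theorem \ref{T3}(ii) via Slutsky's theorem. I would first rearrange the data-driven studentizer so that
$$\frac{\widehat{M}_{1, W, 2}}{\sqrt{\widehat{M}_{2, W, 2}}} \sqrt{\frac{n \widehat{F}_{x, 2}(h_2) \pi_{n}(x)}{\omega_{n,0}(x) U_{n,0}^2(x)}} = \frac{\sqrt{n \phi_2(h_2)}}{\sigma_0(x)} \cdot R_n(x),$$
using the identity $\sigma_0^{-1}(x) = (M_{1, W, 2}/\sqrt{M_{2, W, 2}}) \sqrt{\pi(x) f_1(x)/(\omega(x) U^2(x))}$ and collecting all ratios of empirical to population quantities in $R_n(x)$. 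Since Theorem \ref{T3}(ii) yields $\sqrt{n \phi_2(h_2)} \sigma_0^{-1}(x)(U_{n,0}(x) - U(x)) \overset{\mathcal{D}}{\longrightarrow} \mathcal{N}(0,1)$, it suffices, by Slutsky's theorem, to show $R_n(x) \overset{\mathbb{P}}{\longrightarrow} 1$.

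This reduces to five consistencies, four of which are routine. \textbf{(a)} The ergodic theorem applied to $\mathds{1}_{\{d_2(x, X_t)\leq h_2\}}$ together with (A2)(i) gives $\widehat{F}_{x,2}(h_2)/[\phi_2(h_2) f_1(x)] \overset{\mathbb{P}}{\longrightarrow} 1$. \textbf{(b)} The same argument at bandwidth $uh_2$ yields $\widehat{\tau}_{0,2}(u) \overset{\mathbb{P}}{\longrightarrow} \tau_{0,2}(u)$; combined with (A2)(iv) and dominated convergence in \eqref{MJ}, this gives $\widehat{M}_{j, W, 2} \overset{\mathbb{P}}{\longrightarrow} M_{j, W, 2}$ for $j\in\{1,2\}$. \textbf{(c)} Under the MAR mechanism \eqref{pieq}, $\pi_n(x)$ is a Nadaraya--Watson estimator of the fully observed conditional mean $\mathbb{E}(\delta_t\mid X_t = x)$, so the ergodic kernel-regression argument underlying Theorem \ref{T1} (using (A3)(v)) delivers $\pi_n(x) \overset{\mathbb{P}}{\longrightarrow} \pi(x)$. \textbf{(d)} $U_{n,0}(x) \overset{\mathbb{P}}{\longrightarrow} U(x)$ is immediate from Theorem \ref{T2}.

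The only genuinely delicate piece is \textbf{(e)}: $\omega_{n,0}(x) \overset{\mathbb{P}}{\longrightarrow} \omega(x)$, where the plug-in residuals $\widehat{\varepsilon}_{t,0} = (Y_t - m_{n,0}(X_t))/U_{n,0}^{1/2}(X_t)$ enter to the fourth power. I would introduce the oracle version
$$\omega_n^{\star}(x) = \frac{\sum_{t=1}^{n} \delta_t (\varepsilon_t^2 - 1)^2 H(h_3^{-1} d_3(x, X_t))}{\sum_{t=1}^{n} \delta_t H(h_3^{-1} d_3(x, X_t))}.$$
Under \eqref{pieq} one has $\mathbb{E}(\delta_t (\varepsilon_t^2 - 1)^2 \mid X_t) = \pi(X_t)\omega(X_t)$ and $\mathbb{E}(\delta_t\mid X_t) = \pi(X_t)$, so $\omega_n^{\star}(x) \overset{\mathbb{P}}{\longrightarrow} \omega(x)$ by the same kernel-regression reasoning as in (c), invoking the continuity in (A3)(ii) and the $(2+\kappa)$-moment bound (A3)(iii). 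It remains to show $|\omega_{n,0}(x) - \omega_n^{\star}(x)| \overset{\mathbb{P}}{\longrightarrow} 0$. Writing $Y_t - m_{n,0}(X_t) = U^{1/2}(X_t)\varepsilon_t + (m(X_t) - m_{n,0}(X_t))$ and expanding $U(X_t)/U_{n,0}(X_t) = 1 + (U(X_t) - U_{n,0}(X_t))/U_{n,0}(X_t)$ gives
$$\widehat{\varepsilon}_{t,0}^2 - \varepsilon_t^2 = \varepsilon_t^2 \cdot \frac{U(X_t) - U_{n,0}(X_t)}{U_{n,0}(X_t)} + \text{remainder involving }(m - m_{n,0}),$$
which is uniformly in $t$ of order $o_{a.s.}(1)\cdot(1 + \varepsilon_t^2)$ by the uniform almost sure rates in Theorems \ref{T1}--\ref{T2} and the positivity of $U$ near $x$ implied by (A4)(iii). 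The main obstacle is bounding the kernel-weighted averages of the cross-terms $(\varepsilon_t^2 - 1)(\widehat{\varepsilon}_{t,0}^2 - \varepsilon_t^2)$ and $(\widehat{\varepsilon}_{t,0}^2 - \varepsilon_t^2)^2$, but the $(2+\kappa)$-moment condition (A3)(iii) absorbs the $\varepsilon_t$-factors and the ergodic theorem controls the averages, rendering the difference $o_{\mathbb{P}}(1)$. Combining (a)--(e) via the continuous mapping theorem produces $R_n(x) \overset{\mathbb{P}}{\longrightarrow} 1$, and Slutsky's theorem concludes the proof.
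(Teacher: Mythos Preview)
Your overall strategy---factor the studentizer as $\sqrt{n\phi_2(h_2)}\,\sigma_0^{-1}(x)\cdot R_n(x)$, invoke Theorem~\ref{T3}(ii), and reduce to five plug-in consistencies handled by Slutsky---is exactly the paper's approach, and your treatment of items (a)--(d) matches the paper's (which cites \cite{LL10} for (a)--(b), \cite{LING15} for (c), and Theorem~\ref{T2} for (d)).

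For the delicate step (e), you and the paper organize the argument differently. You introduce the oracle estimator $\omega_n^{\star}(x)$ built from the true $\varepsilon_t$, prove its consistency directly, and then bound $\omega_{n,0}(x)-\omega_n^{\star}(x)$ by expanding $\widehat{\varepsilon}_{t,0}^2-\varepsilon_t^2$ and controlling kernel-weighted cross-terms via the uniform rates of Theorems~\ref{T1}--\ref{T2}. The paper instead works with $\omega_{n,0}$ in one piece: it splits the numerator into a martingale-difference part $\mathfrak{T}_{1,n}$ and a conditional-expectation part $\mathfrak{T}_{2,n}$; inside $\mathfrak{T}_{2,n}$ it writes $(\widehat{\varepsilon}_{t,0}^2-1)=\mathcal{A}+\mathcal{B}$ with $\mathcal{B}=\varepsilon_t^2-1$ (the oracle piece) and $\mathcal{A}$ the plug-in remainder, showing $\mathbb{E}(\delta_t\mathcal{A}^2\mid\mathcal{G}_{t-1})$ and $\mathbb{E}(\delta_t\mathcal{A}\mathcal{B}\mid\mathcal{G}_{t-1})$ are $o_{\mathbb{P}}(1)$; for $\mathfrak{T}_{1,n}$ it uses Burkholder's inequality and an explicit eighth-moment expansion. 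Both routes are valid and rest on the same ingredients, but the paper's martingale-first decomposition is the natural device in the ergodic (non-mixing) setting, where your appeal to ``the ergodic theorem controls the averages'' would in practice have to be implemented precisely via that $\mathfrak{T}_{1,n}/\mathfrak{T}_{2,n}$ split; your oracle-first presentation is cleaner conceptually but hides this machinery.
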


This result is pivotal for constructing confidence intervals of $U(x)$. Specifically, equation (\ref{Cor+1}) leads to the following asymptotic $100(1-\nu)\%$ confidence interval for the conditional variance $U(x):$
 \begin{equation}\label{CIS}
 \mathrm{CI}^{\text{S}}_{\nu}=U_{n,0}(x)\left(1 \pm q_{\nu / 2} \frac{\sqrt{\widehat{M}_{2, W, 2}}}{\widehat{M}_{1, W, 2}} \sqrt{\frac{\omega_{n,0}(x)}{n \widehat{F}_{x, 2}(h_2)\pi_{n}(x)}}\right) \text {, }
\end{equation}
where $q_{\nu/ 2}$ denotes the upper $\nu / 2$ quantile of the standard normal distribution.

The confidence interval in \eqref{CIS} reveals that as the missing rate increases ($\pi(x)\rightarrow 0$), the interval becomes wider, reflecting reduced precision in the interval estimation.

\subsection{Asymptotic properties of the nonparametric imputed estimator}

In this subsection, we explore the asymptotic properties of the nonparametric imputed estimators for the regression and conditional variance operators.

\subsubsection{Uniform consistency}
\begin{thm}\label{T5}
Under the same conditions stated in Theorem \ref{T1}, we have:
\begin{equation*}
\sup _{x \in \mathcal{C}}\left|m_{n,1}(x)-m(x)\right|=\mathcal{O}_{a . s .}\left(h_{1}^{\alpha}\right)+\mathcal{O}_{a \cdot s .}\left(\lambda_{n}\right),
\end{equation*}
where $\alpha$ is given in (A4)(i).
\end{thm}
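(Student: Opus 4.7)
The plan is to decompose $m_{n,1}(x) - m(x)$ into three contributions and to bound each one separately, exploiting the uniform rate for $m_{n,0}$ already granted by Theorem~\ref{T1}. Using $\widehat{Y}_t = \delta_t Y_t + (1-\delta_t) m_{n,0}(X_t)$ together with the model $Y_t = m(X_t) + \sqrt{U(X_t)}\,\varepsilon_t$, I would write, with $K_t := K(h_1^{-1} d_1(x, X_t))$,
$$m_{n,1}(x) - m(x) = N_n(x) + I_n(x) + B_n(x),$$
where
$$N_n(x) = \frac{\sum_{t=1}^n \delta_t \sqrt{U(X_t)}\, \varepsilon_t\, K_t}{\sum_{t=1}^n K_t}, \quad I_n(x) = \frac{\sum_{t=1}^n (1-\delta_t)\bigl[m_{n,0}(X_t) - m(X_t)\bigr] K_t}{\sum_{t=1}^n K_t},$$
$$B_n(x) = \frac{\sum_{t=1}^n \bigl[m(X_t) - m(x)\bigr] K_t}{\sum_{t=1}^n K_t}.$$

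The deterministic bias term $B_n$ is easy: since $K$ is supported on $[0,1]$, every nonzero $K_t$ forces $d_1(x, X_t) \leq h_1$, so the Hölder condition (A4)(i) immediately yields $\sup_{x\in\mathcal{C}} |B_n(x)| \leq c_1 h_1^\alpha$. The imputation term $I_n$ is pointwise dominated in absolute value by $\sup_u |m_{n,0}(u) - m(u)|$ over the $h_1$-neighbourhood of $\mathcal{C}$ cut out by the support of $K$, because $(1-\delta_t) \leq 1$ and $K_t/\sum_s K_s$ is a convex weight. Theorem~\ref{T1} then gives $\sup_{x\in\mathcal{C}} |I_n(x)| = \mathcal{O}_{a.s.}(h_1^\alpha + \lambda_n)$, so neither $B_n$ nor $I_n$ degrades the target rate.

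The main obstacle is the noise term $N_n$. The key observation is that $\delta_t \sqrt{U(X_t)}\,\varepsilon_t$ is a martingale difference sequence with respect to $(\mathcal{G}_{t-1})$: by (A3)(v) and \eqref{epsilon},
$$\mathbb{E}\bigl[\delta_t \sqrt{U(X_t)}\,\varepsilon_t \mid \mathcal{G}_{t-1}\bigr] = \sqrt{U(X_t)}\,\pi(X_t)\,\mathbb{E}[\varepsilon_t \mid \mathcal{G}_{t-1}] = 0.$$
The strategy is therefore to mimic the treatment of the noise numerator from the proof of Theorem~\ref{T1}: cover $\mathcal{C}$ by $N(\eta, \mathcal{C}, d_{\mathcal{C}})$ balls of radius $\eta = o(h_1)$; use (A1)(ii) to control the oscillation of $K_t$ across each ball; truncate using the moment bound in (A3)(i); and apply a Bernstein-type exponential inequality for bounded martingale differences on each ball. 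Conditions \eqref{1}-\eqref{2} are precisely what is needed to make the covering error negligible and the resulting tail probabilities summable, yielding
$$\sup_{x\in\mathcal{C}}\left|\frac{1}{n\phi_1(h_1)} \sum_{t=1}^n \delta_t \sqrt{U(X_t)}\,\varepsilon_t\, K_t\right| = \mathcal{O}_{a.s.}(\lambda_n).$$

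Finally, by (A2) and the ergodic theorem, the normalized denominator $(n\phi_1(h_1))^{-1}\sum_t K_t$ converges almost surely and uniformly in $x \in \mathcal{C}$ to a strictly positive limit proportional to $f_1(x)$, so it can be inverted safely. Dividing gives $\sup_{x\in\mathcal{C}} |N_n(x)| = \mathcal{O}_{a.s.}(\lambda_n)$, and combining the three bounds produces the claim. The reason the rate is unchanged compared with $m_{n,0}$ is structural: the imputation step contributes only an error of the same order as $m_{n,0}$ itself (through $I_n$), while the noise enters only via the observed sub-sample, so the machinery developed for Theorem~\ref{T1} transfers almost verbatim.
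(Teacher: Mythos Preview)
Your proposal is correct and follows essentially the same route as the paper. The paper writes the identical three-way split (with your $N_n$, $I_n$, $B_n$ corresponding to $V_{n,0}(x)$, $\mathcal{I}_{n,1}(x)$, $\mathcal{I}_{n,2}(x)$, each divided by the normalized kernel sum $m_{n,1}^{[1]}(x)$), bounds $B_n$ and $I_n$ exactly as you do, and handles the noise term by invoking the lemma on $\sup_{x\in\mathcal{C}}|V_{n,0}(x)|$ already established for Theorem~\ref{T1}; your sketch of the truncation/covering/Bernstein argument is precisely the content of that lemma.
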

\begin{thm}\label{T6}
Under the conditions outlined in Theorem \ref{T2}, we have:
\begin{equation*}
\sup _{x \in \mathcal{C}}\left|U_{n,1}(x)-U(x)\right|=\mathcal{O}_{a . s .}\left(h_{1}^{2\alpha}+h_{2}^{\beta}\right)+\mathcal{O}_{a \cdot s .}\left(\lambda_{n}^{\prime}+\lambda_{n}^2\right),
\end{equation*}
where $\beta$ is as defined in (A4)(ii).
\end{thm}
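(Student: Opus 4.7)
The plan is to reduce the analysis of $U_{n,1}(x) - U(x)$ to a decomposition in which every piece is either a deterministic bias controlled by the H\"older condition (A4)(ii), or a term inherited from the uniform rate of the simplified estimators via Theorems \ref{T1} and \ref{T2}, or a martingale-type remainder controlled by the same exponential inequality that drives the proof of Theorem \ref{T2}. Setting $W_t(x) = W(h_2^{-1} d_2(x,X_t))$ and exploiting the identity
\[
\widehat{r}_t - U(x) = \delta_t\bigl[r_t - U(X_t)\bigr] + (1-\delta_t)\bigl[U_{n,0}(X_t) - U(X_t)\bigr] + \bigl[U(X_t) - U(x)\bigr],
\]
I would obtain the splitting $U_{n,1}(x) - U(x) = A_n(x) + B_n(x) + C_n(x)$ with common denominator $\sum_{t=1}^n W_t(x)$.

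The bias term $C_n(x)$ is immediate: assumption (A1)(i) forces $W_t(x)=0$ unless $d_2(x,X_t)\leq h_2$, and (A4)(ii) then yields $\sup_{x\in\mathcal{C}}|C_n(x)| = \mathcal{O}(h_2^\beta)$. For the imputation-bias term $B_n(x)$, bounding $(1-\delta_t)\leq 1$ gives $|B_n(x)| \leq \sup_u |U_{n,0}(u) - U(u)|$, so Theorem \ref{T2} delivers exactly the claimed rate for this piece.

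The substantive work is in $A_n(x)$. I would substitute $Y_t = m(X_t) + \sqrt{U(X_t)}\varepsilon_t$ into $r_t = (Y_t - m_{n,0}(X_t))^2$ and expand the square, producing three subterms with summands $\delta_t U(X_t)(\varepsilon_t^2-1)W_t(x)$ (call it $A_{1,n}$), $2\delta_t\sqrt{U(X_t)}\varepsilon_t[m(X_t)-m_{n,0}(X_t)]W_t(x)$ (call it $A_{2,n}$), and $\delta_t[m(X_t)-m_{n,0}(X_t)]^2 W_t(x)$ (call it $A_{3,n}$). Theorem \ref{T1} directly yields $\sup_x|A_{3,n}(x)| = \mathcal{O}_{a.s.}(h_1^{2\alpha} + \lambda_n^2)$. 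For $A_{2,n}$, factoring out $\sup_u |m_{n,0}(u)-m(u)| = \mathcal{O}_{a.s.}(h_1^\alpha + \lambda_n)$, which tends to zero, reduces the problem to the weighted martingale average $\sum_t\delta_t\sqrt{U(X_t)}\varepsilon_t W_t(x)/\sum_t W_t(x)$, uniformly $\mathcal{O}_{a.s.}(\lambda_n')$ by the same machinery used below; the product is then dominated by $\lambda_n'$.

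The main obstacle is the pure noise term $A_{1,n}(x)$, whose numerator is a sum of $\mathcal{F}_t$-martingale increments with conditional second moment governed by (A3)(ii) and (A3)(iv). Proving $\sup_{x\in\mathcal{C}}|A_{1,n}(x)| = \mathcal{O}_{a.s.}(\lambda_n')$ requires reusing the Bernstein-type exponential inequality and the covering-number chaining argument that delivered the rate $\lambda_n'$ in Theorem \ref{T2}. The one genuine change is that the denominator is now $\sum_t W_t(x)$ rather than $\sum_t \delta_t W_t(x)$; by the ergodic theorem under (A2)(i)--(iii), both denominators have the same order $n\phi_2(h_2)f_1(x)M_{1,W,2}$ up to a multiplicative factor bounded away from zero on $\mathcal{C}$ (the factor being $1$ here and $\pi(x)\geq \theta_1$ for the $\delta_t$-weighted case, via (A2)(vii) and (A3)(v)), so the ratio conversion from the Theorem \ref{T2} proof is transparent. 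Combining the bounds for $A_n(x)$, $B_n(x)$ and $C_n(x)$ yields the claimed uniform rate.
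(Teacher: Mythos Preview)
Your proposal is correct and follows essentially the same route as the paper: the identity you write for $\widehat{r}_t - U(x)$ yields exactly the paper's five-term decomposition $\vartheta_{n,0}^{[1]}+\vartheta_{n,0}^{[2]}+\vartheta_{n,0}^{[3]}+\Theta_{n,1}+\Theta_{n,2}$ divided by $U_{n,1}^{[1]}(x)=(n\mathbb{E}W_1(x))^{-1}\sum_t W_t(x)$, with your $A_{3,n},A_{2,n},A_{1,n},B_n,C_n$ matching those pieces in that order. The paper likewise handles the new denominator via the uniform convergence $U_{n,1}^{[1]}(x)\to 1$ (Lemma~7 of La\"{\i}b--Louani 2011) and then reuses verbatim the bounds on $\vartheta_{n,0}^{[1]},\vartheta_{n,0}^{[2]},\vartheta_{n,0}^{[3]}$ already established in the proof of Theorem~\ref{T2}, exactly as you outline.
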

\begin{rem} The almost sure uniform convergence rates for the estimators $m_{n,1}(x)$ and $ U_{n,1}(x)$ are identical to those derived in Theorems \ref{T1} and \ref{T2}, respectively.
\end{rem}

\subsubsection{Asymptotic distribution}
\begin{thm}\label{T7}
Suppose the assumptions of Theorem \ref{T3} are satisfied, the conditions in \eqref{extracond} hold, and $\sqrt{n\phi_2(h_2)}h^\beta \rightarrow 0$, as $n\rightarrow\infty.$  Then, the following result holds:
$$ \sqrt{n \phi_{2}\left(h_{2}\right)}\left(U_{n,1}(x)-U(x)\right) \overset{\mathcal{D}}
{\longrightarrow} \mathcal{N}\left(0, \sigma^{2}_1(x)\right),
$$
where 
$
\sigma^{2}_1(x)=\frac{M_{2, W, 2} U^{2}(x)\omega(x)\pi(x)}{M_{1, W, 2}^{2}  f_{1}(x)}=\sigma_{0}^{2}(x)+\frac{M_{2, W\color{black}, 2} U^{2}(x)(\pi^2(x)-1)\omega(x)}{\pi(x)M_{1, W, 2}^{2}  f_{1}(x)},$ and $\sigma_0^2(x)$ is as defined in Theorem \ref{T3}.
\end{thm}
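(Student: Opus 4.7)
The plan is to reduce the asymptotic distribution of $U_{n,1}(x)$ to that of $U_{n,0}(x)$ (Theorem~\ref{T3}) via an algebraic decomposition. Writing $W_t = W(h_2^{-1} d_2(x, X_t))$ and recalling $\widehat{r}_t = \delta_t (Y_t - m_{n,0}(X_t))^2 + (1-\delta_t) U_{n,0}(X_t)$, I would first observe that, by the very definition of $U_{n,0}(x)$, one has $\sum_t \delta_t [(Y_t - m_{n,0}(X_t))^2 - U(x)] W_t = [U_{n,0}(x) - U(x)] \sum_t \delta_t W_t$. This yields the fundamental decomposition
$$U_{n,1}(x) - U(x) = \underbrace{\frac{\sum_t \delta_t W_t}{\sum_t W_t}}_{=: A_n(x)}\,\bigl[U_{n,0}(x) - U(x)\bigr] + \underbrace{\frac{\sum_t (1-\delta_t) W_t \bigl[U_{n,0}(X_t) - U(x)\bigr]}{\sum_t W_t}}_{=: R_n(x)}.$$

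Next I would handle $A_n(x)$ by the ergodic theorem: using (A2) together with the continuity of $\pi(\cdot)$ in (A3)(v), the standard kernel approximations give $\sum_t W_t \sim n\phi_2(h_2) M_{1,W,2} f_1(x)$ and $\sum_t \delta_t W_t \sim n\phi_2(h_2) M_{1,W,2} f_1(x)\pi(x)$, since $\pi(X_t) = \pi(x) + o(1)$ whenever $d_2(X_t,x) \leq h_2$. Hence $A_n(x) \xrightarrow{a.s.} \pi(x)$. Combined with Theorem~\ref{T3}(ii) (whose hypotheses, including $\sqrt{n\phi_2(h_2)}h_2^\beta \to 0$, are exactly those assumed here) and Slutsky's lemma, this produces
$$A_n(x)\sqrt{n\phi_2(h_2)}\,[U_{n,0}(x) - U(x)] \xrightarrow{\mathcal{D}} \mathcal{N}\bigl(0,\pi^2(x)\sigma_0^2(x)\bigr),$$
and the identity $\pi^2(x)\sigma_0^2(x) = M_{2,W,2}U^2(x)\omega(x)\pi(x)/[M_{1,W,2}^2 f_1(x)] = \sigma_1^2(x)$ matches the stated variance.

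It remains to show $\sqrt{n\phi_2(h_2)}\,R_n(x) \xrightarrow{a.s.} 0$. I would bound
$$|R_n(x)| \leq \sup_{u:\,d_2(u,x)\leq h_2}|U_{n,0}(u) - U(x)| \cdot \frac{\sum_t (1-\delta_t)W_t}{\sum_t W_t},$$
where the second factor tends a.s.\ to $1-\pi(x)$, and the supremum is controlled via the triangle inequality $|U_{n,0}(u)-U(x)| \leq |U_{n,0}(u)-U(u)| + |U(u)-U(x)|$: Theorem~\ref{T2} dominates the first piece uniformly over $\mathcal{C}$ by $\mathcal{O}_{a.s.}(h_1^{2\alpha}+h_2^\beta+\lambda_n^{\prime}+\lambda_n^2)$, while the H\"older condition (A4)(ii) dominates the second by $\mathcal{O}(h_2^\beta)$ on the ball of radius $h_2$. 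The rate conditions in \eqref{extracond} together with $\sqrt{n\phi_2(h_2)}h_2^\beta \to 0$ then force $\sqrt{n\phi_2(h_2)}\,R_n(x) = o_{a.s.}(1)$.

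The main obstacle is this last negligibility step: because $U_{n,0}$ is evaluated at the random curves $X_t$ (not at the fixed $x$), a pointwise CLT for $U_{n,0}(x)$ is not enough, and one really needs the full strength of the uniform consistency rate of Theorem~\ref{T2} together with the calibration \eqref{extracond}. Once $R_n(x)$ is shown to be asymptotically negligible at the scale $\sqrt{n\phi_2(h_2)}$, a final application of Slutsky's lemma to the decomposition above concludes the proof.
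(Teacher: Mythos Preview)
Your argument is correct, but it proceeds along a different route from the paper. The paper does not reduce to Theorem~\ref{T3}; instead it goes back to the finer decomposition used in the proof of Theorem~\ref{T6}, writing
\[
U_{n,1}(x)-U(x)=\frac{1}{U_{n,1}^{[1]}(x)}\Bigl(\vartheta_{n,0}^{[1]}(x)+\vartheta_{n,0}^{[2]}(x)+\vartheta_{n,0}^{[3]}(x)+\Theta_{n,1}(x)+\Theta_{n,2}(x)\Bigr),
\]
shows that all terms except $\vartheta_{n,0}^{[3]}(x)$ are $o_{a.s.}\bigl((n\phi_2(h_2))^{-1/2}\bigr)$, and then invokes Lemma~\ref{Lemma-T3} directly for the martingale CLT of $\vartheta_{n,0}^{[3]}(x)$, together with $U_{n,1}^{[1]}(x)\to 1$ a.s. Your decomposition is algebraically equivalent (indeed $A_n(x)=U_{n,0}^{[1]}(x)/U_{n,1}^{[1]}(x)$ and your $R_n(x)$ is exactly $(\Theta_{n,1}(x)+\Theta_{n,2}(x)-\vartheta_{n,0}^{[4]}(x))/U_{n,1}^{[1]}(x)$), but it packages things differently: you treat Theorem~\ref{T3} as a black box and recover $\sigma_1^2(x)$ via Slutsky and the identity $\pi^2(x)\sigma_0^2(x)=\sigma_1^2(x)$, whereas the paper identifies $\sigma_1^2(x)$ directly as the asymptotic variance of $\vartheta_{n,0}^{[3]}(x)$. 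Your approach is slightly more modular and makes transparent why the imputed and simplified estimators differ only through the factor $\pi(x)$; the paper's approach makes it clearer that the same martingale term drives both CLTs and that only the normalizing denominator ($U_{n,0}^{[1]}$ versus $U_{n,1}^{[1]}$) changes. Either way, the negligibility of the remainder is handled identically, via Theorem~\ref{T2} and the rate conditions in \eqref{extracond}.
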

When $\pi(x) \rightarrow 1$ (i.e. low MAR rate) $\sigma^{2}_1(x)$ and $\sigma_0^2(x)$ converge to the asymptotic conditional variance obtained in \citet{Chaouch2019} for the case of complete data. Conversely, under high MAR rate ($\pi(x) \rightarrow 0$), $\sigma_0^2(x) \rightarrow \infty$ while $\sigma^{2}_1(x)\rightarrow 0.$ This indicates that the nonparametric imputed estimator is more efficient that the simplified estimator when the missing rate is high.

\subsubsection{Asymptotic confidence intervals}
Similar to the simplified estimator, the following corollary is used to construct asymptotic confidence intervals for $U(x).$

To achieve this, we define the imputed estimator of $\omega(x)$. For any fixed $x\in \mathcal{E}$, the estimator is given by:
\begin{equation}
\omega_{n,1}(x) = \frac{\displaystyle\sum_{t=1}^{n} (\hat{\varepsilon}^{2}_{t,1}-1)^{2}H\left(h_3^{-1} d_{3}(x,X_{t})\right)}{\displaystyle\sum_{t=1}^{n}  H\left(h_3^{-1} d_{3}(x,X_{t})\right)},
\end{equation}
where $\hat{\varepsilon}_{t,1}=[\widehat{Y}_{t}-m_{n,1}(X_{t})]/[U_{n,1}(X_{t})]^{1/2}.$ 

\begin{cor}\label{CorIm}
  Under the conditions of Theorem \ref{T7}, the following result holds:
  \begin{equation}\label{Cor+2}
  \left( \frac{\widehat{M}_{1, W, 2}}{\color{black}\sqrt{\widehat{M}_{2, W, 2} \color{black}} }  \sqrt{\frac{n \widehat{F}_{x, 2}(h_2)}{\omega_{n,1}(x)\pi_{n}(x) U^{2}_{n,1}(x)  }}\right) \left(U_{n,1}(x)-U(x)\right) \overset{\mathcal{D}}
{\longrightarrow} \mathcal{N}\left(0, 1\right),\,\ \hbox{as}\,\ n\rightarrow\infty.
\end{equation}
\end{cor}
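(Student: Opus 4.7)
\textbf{Plan for Corollary \ref{CorIm}.} The strategy is Slutsky: I start from the weak convergence already proved in Theorem \ref{T7},
\[
Z_n := \sqrt{n\phi_2(h_2)}\bigl(U_{n,1}(x)-U(x)\bigr)\;\overset{\mathcal{D}}{\longrightarrow}\;\mathcal{N}\bigl(0,\sigma_1^2(x)\bigr),
\]
and rewrite the studentized statistic in the corollary as
\[
T_n \;=\; \frac{\widehat{M}_{1,W,2}}{\sqrt{\widehat{M}_{2,W,2}}\, U_{n,1}(x)}\,
\sqrt{\frac{\widehat{F}_{x,2}(h_2)/\phi_2(h_2)}{\omega_{n,1}(x)\,\pi_n(x)}}\; Z_n \;=:\; A_n \, Z_n .
\]
The plan is to show $A_n \overset{\mathbb{P}}{\longrightarrow} 1/\sigma_1(x)$, which by Slutsky's theorem and the formula for $\sigma_1^2(x)$ in Theorem \ref{T7} gives $T_n \overset{\mathcal{D}}{\longrightarrow}\mathcal{N}(0,1)$. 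The work therefore reduces to verifying the convergence in probability of each of the five plug-in components appearing in $A_n$.

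For the components involving the operators at the point $x$, I would argue as follows. The consistency $U_{n,1}(x)\to U(x)$ is already a consequence of Theorem \ref{T6} (with the rate conditions of Theorem \ref{T7}). The convergence $\pi_n(x)\to\pi(x)$ follows by the standard Nadaraya--Watson argument applied to the Bernoulli response $\delta_t$: using the ergodic decomposition technique already deployed for $m_{n,0}$ together with assumption (A3)(v) on the continuity of $\pi$ in a neighborhood of $x$, one obtains consistency in probability (even a.s.). The ratio $\widehat{F}_{x,2}(h_2)/\phi_2(h_2)\to f_1(x)$ is obtained from assumption (A2)(i) combined with the ergodic theorem applied to the indicator process $\mathds{1}_{\{d_2(x,X_t)\leq h_2\}}$ normalized by $\phi_2(h_2)$. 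For $\widehat{M}_{j,W,2}\to M_{j,W,2}$, I would first establish $\widehat{\tau}_{0,2}(u)=\widehat{F}_{x,2}(uh_2)/\widehat{F}_{x,2}(h_2)\to\tau_{0,2}(u)$ uniformly in $u\in[0,1]$ by combining (A2)(i) and (A2)(iv), then plug into the integral formula \eqref{MJ}; finiteness of the limiting integral by (A2)(iv) together with the dominated convergence theorem (justified by boundedness of $\tau_{0,2}$ and of $(\mathcal{K}^j)'$ on $[0,1)$) closes the argument.

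The delicate component is $\omega_{n,1}(x)\to\omega(x)$, since its definition uses the \emph{fitted} residuals $\hat{\varepsilon}_{t,1}=[\widehat{Y}_t-m_{n,1}(X_t)]/[U_{n,1}(X_t)]^{1/2}$ rather than the unobserved true residuals $\varepsilon_t$. The key step will be a standard residual-substitution argument: write
\[
\hat{\varepsilon}_{t,1}^2-1 \;=\; \bigl(\varepsilon_t^2-1\bigr)\,\frac{U(X_t)}{U_{n,1}(X_t)} \;+\; R_{t,n},
\]
where the remainder $R_{t,n}$ collects terms involving $m_{n,1}(X_t)-m(X_t)$ and $U_{n,1}(X_t)-U(X_t)$, and show with the help of Theorems \ref{T5} and \ref{T6} and the continuity assumptions (A3)(ii), (A4)(iii) that the residual substitution contributes only $o_\mathbb{P}(1)$ to the kernel average in \eqref{w_n}. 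The leading term, built from the true residuals, is then handled exactly as in the proof of $\omega_{n,0}\to\omega$ for the simplified estimator: one recognizes it as a Nadaraya--Watson estimator of the conditional expectation $\mathbb{E}[(\varepsilon_t^2-1)^2\mid X_t=x]=\omega(x)$, and invokes the ergodic/small-ball consistency arguments underlying Theorem \ref{T1} together with (A3)(iii).

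Once all five limits are established, they combine by the continuous mapping theorem into $A_n\overset{\mathbb{P}}{\longrightarrow} M_{1,W,2}\sqrt{f_1(x)}\bigl/\bigl(\sqrt{M_{2,W,2}}\,U(x)\sqrt{\omega(x)\pi(x)}\bigr)=1/\sigma_1(x)$, and Slutsky's theorem applied to $T_n=A_n Z_n$ yields the asymptotic standard normal limit stated in \eqref{Cor+2}. The main obstacle, as noted, will be the uniform control of the residual substitution inside $\omega_{n,1}(x)$; the rate conditions \eqref{extracond} (and the added condition $\sqrt{n\phi_2(h_2)}h^\beta\to 0$ imposed in Theorem \ref{T7}) are the ones that guarantee this error is asymptotically negligible.
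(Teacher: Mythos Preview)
Your proposal is correct and follows essentially the same Slutsky-based route as the paper: factor the studentized statistic as $A_nZ_n$, invoke Theorem~\ref{T7} for $Z_n$, and establish consistency in probability of each plug-in component of $A_n$. The paper handles the $\widehat{M}_{j,W,2}$, $\widehat{F}_{x,2}(h_2)$ and $\pi_n(x)$ terms by citing \cite{LL10} and \cite{LING15} directly (equations \eqref{P1}--\eqref{P2*}), and for $\omega_{n,1}(x)\to\omega(x)$ simply refers back to the argument for $\omega_{n,0}$, which proceeds via a martingale-difference/conditional-expectation split rather than your explicit residual-substitution expansion; both techniques are standard and lead to the same conclusion.
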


An asymptotic $100(1-\nu)\%$ confidence interval for $U(x)$ is given by:
 \begin{equation}\label{CINI}
 \mathrm{CI}^{\text{NPI}}_{\nu} = U_{n,1}(x) \left( 1 \pm q_{\nu / 2} \frac{\sqrt{\widehat{M}_{2, W, 2}}}{\widehat{M}_{1, W, 2}} \sqrt{\frac{\omega_{n,1}(x)\pi_{n}(x)}{n \widehat{F}_{x, 2}(h_2)}} \right),
\end{equation}
where $q_{\nu / 2}$ denotes the upper $\nu / 2$ quantile of the  standard normal distribution.

Comparing the simplified-based confidence interval \eqref{CIS} and the nonparametric imputation-based confidence interval \eqref{CINI}, it becomes evident that higher missing rates ($\pi(x) \rightarrow 0$) result in significantly wider $\mathrm{CI}_\nu^{\text{S}}$ compared to $\mathrm{CI}_\nu^{\text{NPI}}.$ This highlights that imputing  missing data provides a more precise interval estimation of $U(x)$.

\section{Finite sample performance}\label{sec4}

In this section, we conduct a simulation study to evaluate the performance and effectiveness of the proposed estimation methods. 

\subsection{Data generating process}

Consider the process $(X_t, Y_t, \delta_t)_{t=1,\cdots,n}$ as a strict stationary process such that, for any $t=1, \dots, n,$ the functional covariate $\{X_t(\lambda): \lambda \in [-1,1]\}$ is  sampled at $100$ equally spaced points within the interval $[-1,1]$. the covariate is generated according to the following model: 
\begin{eqnarray}\label{simX}
X_t(\lambda)=A(2-\cos(\pi\lambda \omega))+(1-A)\cos(\pi \lambda \omega),
\end{eqnarray}
where $\omega \sim \mathcal{N}(0,1)$, and $A \sim \text{Bernoulli}\bigg(\dfrac{1}{2}\bigg).$ A sample of 100 simulated curves is presented in Figure \ref{funcurv}.
\begin{figure}[h!]
\begin{center}
\includegraphics[width=10cm,height=5.5cm]{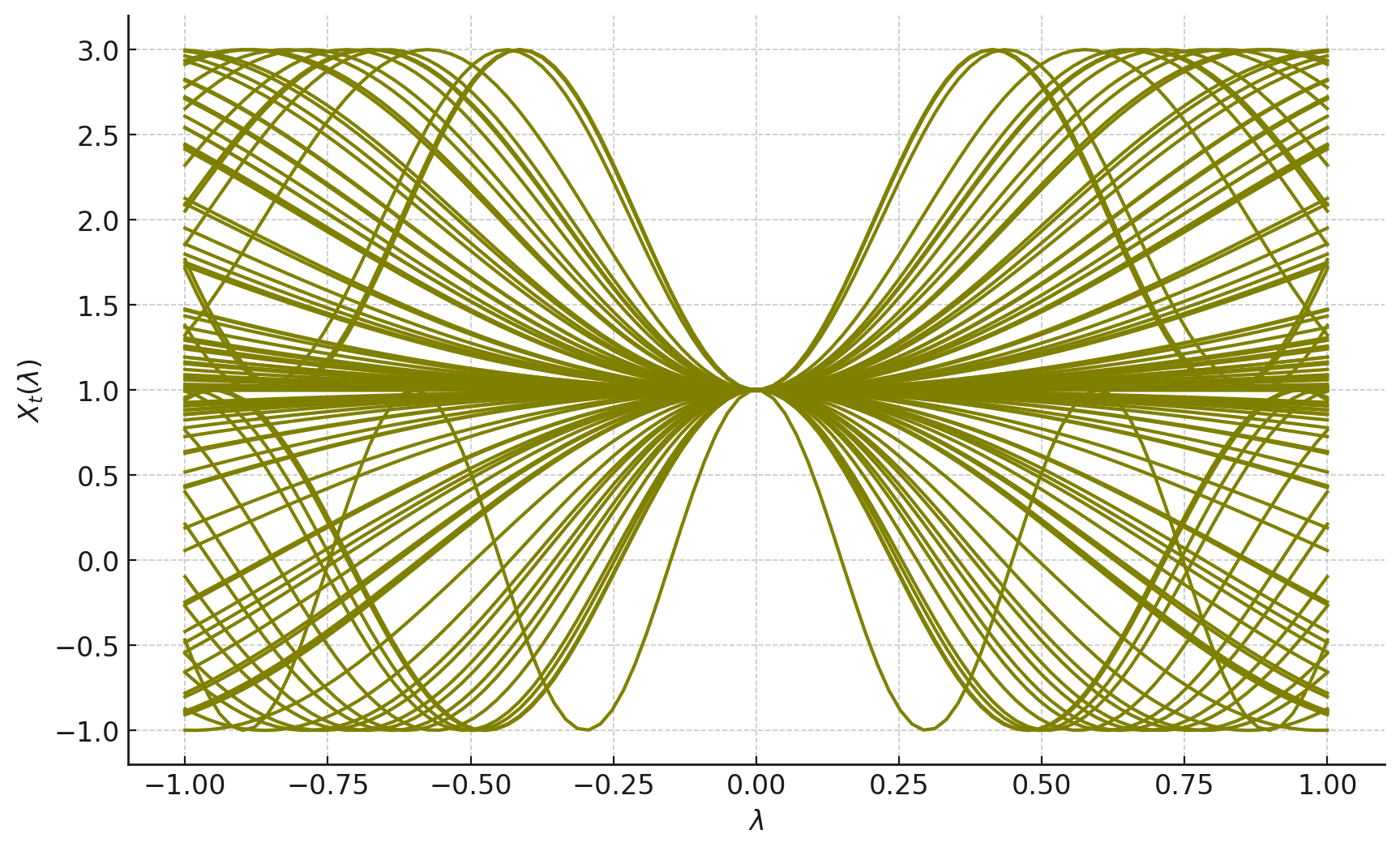}
\end{center}
\caption{A sample of simulated curves $X_t(\lambda)$.}
\label{funcurv}
\end{figure}

The response variable $Y$ is generated using the following heteroscedastic functional regression model:
$$ Y_t=m(X_t)+\sqrt{U(X_t)}\varepsilon_t,$$
\noindent where the regression and variance operators are defined at any fixed point $x$ as:
\begin{equation}\label{Reg&VarFun}
    m(x)=\int_{-1}^1 \lambda x(\lambda) d\lambda, \quad U(x)=\int_{-1}^1 |\lambda| x^2(\lambda) d\lambda.
\end{equation}
\noindent The errors $\varepsilon_t$ are generated according to one of the following models:\\
\noindent {\bf Model 1:} The $\varepsilon_t$'s are i.i.d, distributed according to $\mathcal{N}(0,1)$.\\
\noindent{\bf Model 2:} $\varepsilon_t=0.5\;\varepsilon_{t-1}+\xi_t$, where $\xi_t \sim \mathcal{N}(0,1)$.\\
\noindent {\bf Model 3:} $\varepsilon_t=-0.25\;\varepsilon_{t-1}+\xi_t$, where $\xi_t \sim \mathcal{N}(0,1)$.\\
\noindent {\bf Model 4:} $\varepsilon_t=\dfrac{1}{2}\varepsilon_{t-1}+\xi_t$, where $\xi_t \in\{ -1, 1\}$ and $\xi_t \sim \texttt{Bernoulli}\bigg(\dfrac{1}{2}\bigg)$.\\

The four models correspond to varying dependence structures. Model 1 assumes independent and identically distributed data. Models 2 and 3 consider $\alpha$-mixing processes. Finally, Model 4 provides an example of an ergodic process that is not mixing. 

    

We assume that missing at random observations in  the response variable $Y$ are generated by the following conditional probability distribution:

$$\pi(x)=\mathbb{P}(\delta=1|X=x)=\text{expit}\bigg(2\eta\int_{-1}^1 x^2(\lambda) d\lambda \bigg),$$
where $\text{expit}(u)=\dfrac{e^u}{1+e^u}$ and $\eta \in \{0.2,0.8\}$.

The value of $\eta$ determines the MAR rate. A higher value of $\eta$ leads to a higher probability $\pi(x),$ and consequently, a lower missing data rate. Specifically, for $\eta=0.8,$ the MAR rate is approximately $10\%,$ while for $\eta = 0.2,$ the MAR increases to $30\%.$ Figure \ref{Y_t MAR} illustrates the process $Y_t$ under the MAR mechanism for $\eta = 0.2$ and $\eta = 0.8$, respectively.

\begin{figure}[h!]%
    \centering
    \subfloat[\centering ]{{\includegraphics[width=7.5cm,height=5.5cm]{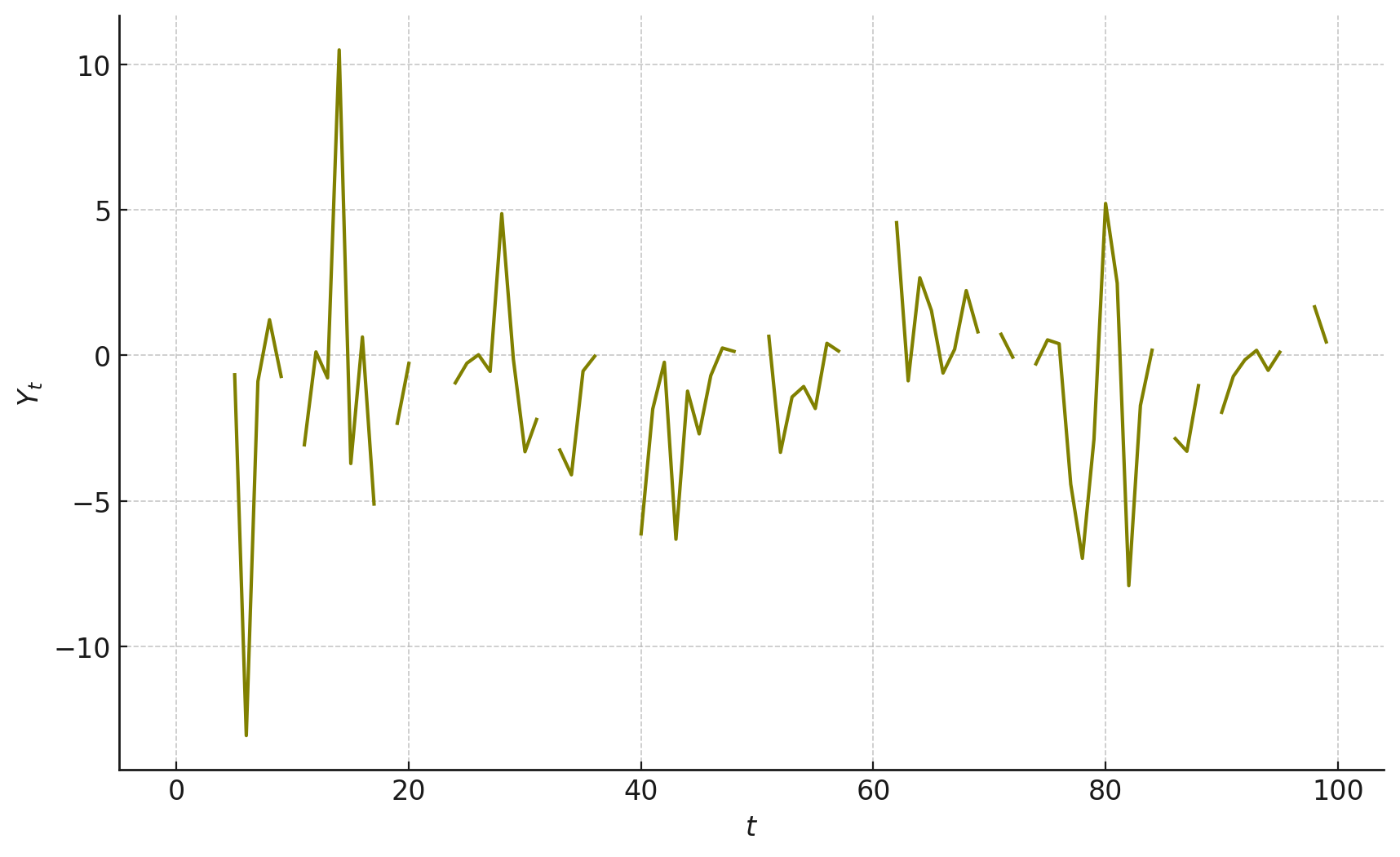} }}%
    \qquad
    \subfloat[\centering ]{{\includegraphics[width=7.5cm,height=5.5cm]{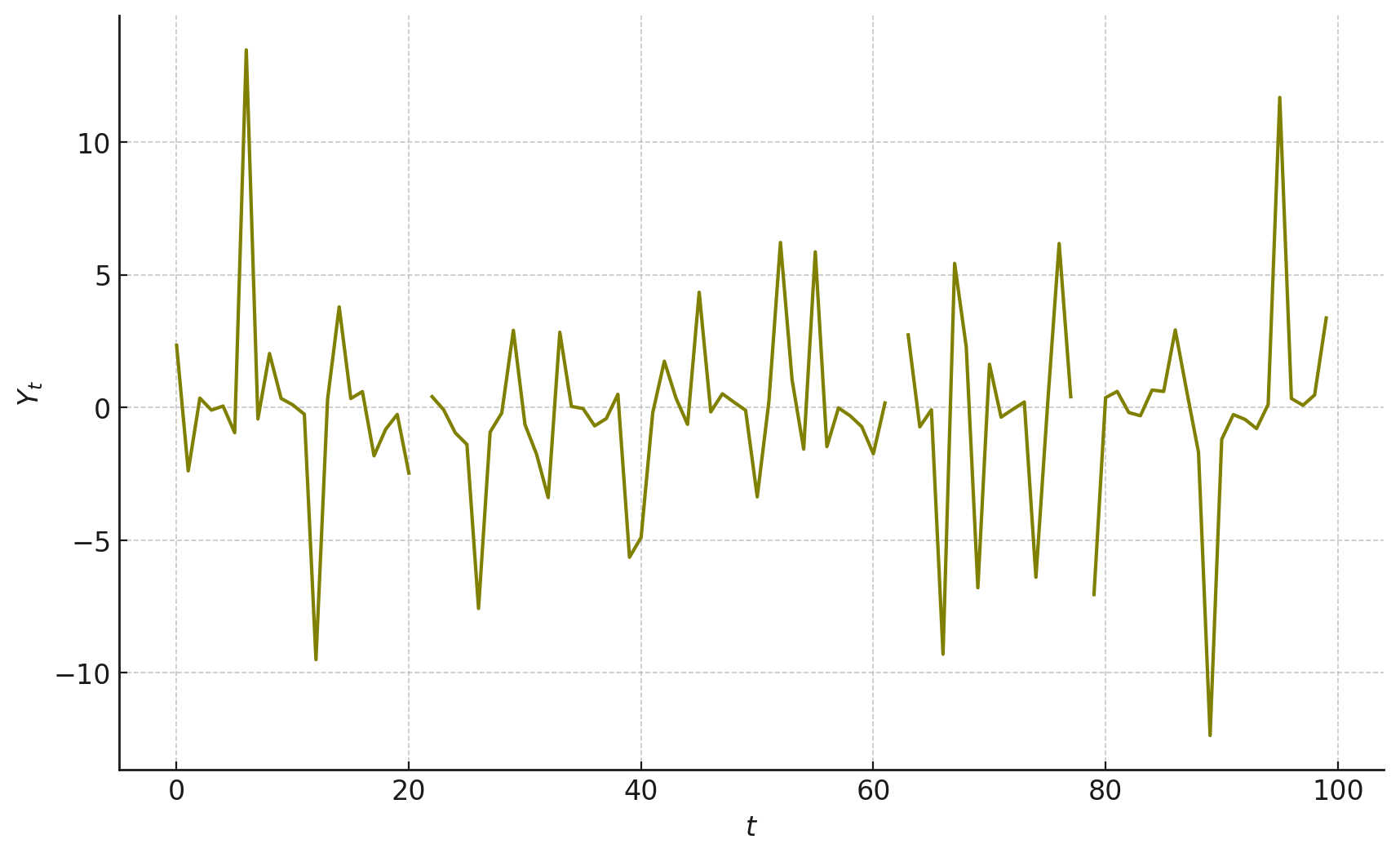} }}%
   
    \caption{Missing at random mechanism applied to the generated process $(Y_t)_{t=1, \dots, 100}$ with $\eta=0.2$ ($30\%$ MAR) in (a) and $\eta =0.8$ ($10\%$ MAR) in (b).}%
    
    \label{Y_t MAR}%
\end{figure}

\subsection{Tuning parameters selection}\label{tunpar}
The estimators discussed in Section \ref{main} depend on three key tuning parameters: the kernel, the semi-metric and the smoothing parameter $h.$
 \subsubsection{Choice of the kernel:} we use the quadratic kernel $K(u) = \frac{3}{2}(1-u^2)\1_{[0,1]}(u)$ to calculate both the simplified and the nonparametric imputed estimators. 
 
 \subsubsection{Choice of the semi-metric:} The choice of the semi-metric has been extensively discussed in \citet{Ferraty2006} (chapter 3), where several options are proposed depending on the structure of the functional data. In this study, due to the smoothness of the curves $X_t(\lambda)$, we employ a semi-metric based on the $L_2$-norm of the first derivatives of the curves, defined as:
$$d(X_t,X_s)=\bigg[\int_{-1}^1\bigg\{X^{(1)}_t(\lambda)-X^{(1)}_s(\lambda)\bigg\}^2 d\lambda\bigg]^{1/2}, \quad \forall t\neq s.$$

\subsubsection{Cross-validation based bandwidth selection:}
For the smoothing parameters, we adapt the well-known cross-validation technique to accommodate the missing data framework. 

The optimal bandwidth for the simplified regression operator estimator is determined by minimizing the following criterion:
\begin{eqnarray}
    \label{h_m}
    h^{\text{opt}}_{1} = \argmin_{h_1} \sum_{t=1}^n \delta_t \left(Y_t - m_{n,0}^{(-t)}(X_t; h_1)\right)^2,
\end{eqnarray}
where $ m_{n,0}^{(-t)}(X_t; h_1)$ represents the simplified estimator of the regression operator evaluated at $X_t$ with a given bandwidth $h_1$ excluding the $t^{th}$ observation.

Similarly, the optimal bandwidth for the simplified conditional variance component is obtained by:

\begin{eqnarray}
    \label{h_u}
    h^{\text{opt}}_{2} = \argmin_{h_2} \sum_{t=1}^n \delta_t \left(r_t - U_{n,0}^{(-t)}(X_t; h_2)\right)^2,
\end{eqnarray}
where $r_t = (Y_t - m_{n,0}(X_t; h_1^{\text{opt}}))^2$ and $U_{n,0}^{(-t)}(X_t; h_2)$ is the simplified conditional variance estimator obtained after excluding the $t^{th}$ observation. 

To construct the confidence intervals, we estimate $\pi(x)$ and $\omega(x)$ using formulas \eqref{w_n} and \eqref{pi_n}, respectively. The optimal bandwidth for these estimators is chosen as follows:

\begin{equation}\label{h_3}
h_3^{\mathrm{opt}}=\underset{h_3}{\arg \min } \sum_{t=1}^n \delta_t\left(\left(\hat{\varepsilon}_{t, 0}^2-1\right)^2-\omega_{n, 0}^{(-t)}\left(X_t ; h_3\right)\right)^2,
\end{equation}
where $\omega_{n, 0}^{(-t)}\left(X_t ; h_3\right)$ is the value of the simplified estimator of $\omega\left(X_t\right)$ obtained after removing the $t^{\text {th }}$ residual $\hat{\varepsilon}_{t, 0}$ from the sample.

The optimal bandwidth for the estimator of $\pi(x)$ is determined by:
$$
h_4^{\mathrm{opt}}=\underset{h_4}{\arg \min } \sum_{t=1}^n\left(\delta_t-\pi_n^{(-t)}\left(X_t ; h_4\right)\right)^2.
$$

 Finally, the optimal bandwidths for the nonparametric imputed estimators of $m(\cdot), U(\cdot)$ and $\omega(\cdot)$ are derived similarly to \eqref{h_m}, \eqref{h_u} and \eqref{h_3}, with the adjustment of removing $\delta_t$ from the formulas and replacing $m_{n, 0}^{(-t)}\left(X_t ; h_1\right), U_{n, 0}^{(-t)}\left(X_t ; h_2\right), \omega_{n, 0}^{(-t)}\left(X_t ; h_3\right)$ and $\hat{\varepsilon}_{t, 0}$ by $m_{n, 1}^{(-t)}\left(X_t ; h_1\right), U_{n, 1}^{(-t)}\left(X_t ; h_2\right)$,  $\omega_{n, 1}^{(-t)}\left(X_t ; h_3\right)$ and $\hat{\varepsilon}_{t, 1}$, respectively.

\subsection{Performance assessment}
\subsubsection{Study of the consistency of the estimators}

The objective is to evaluate the performance of the conditional variance estimators over a grid of $J=100$ randomly selected curves, denoted as $\mathfrak{J}:= \{x^\star_1, \dots, x^\star_J\}$, simulated using equation \eqref{simX}. To assess the consistency of the different estimators, we conduct $B=500$ replications. For each replication, we estimate the conditional variance at each curve in the grid $\mathfrak{J}$ and compute, at each iteration $b\in \{ 1, \dots, B\},$ the Mean Square Error ($\text{MSE}_b$) defined as:
\begin{equation}\label{MSE}
\text{MSE}_b = \dfrac{1}{J}\sum_{j=1}^J\left(\mathcal{U}_{n,b}(x^\star_j) - U(x^\star_j) \right)^2,
\end{equation}
where $\mathcal{U}_{n,b}(x^\star_j)$ represents one of the three estimators of the conditional variance- namely, the complete estimator $U_{n,c}(x^\star_j)$, the simplified estimator $U_{n,0}(x^\star_j)$, or the nonparametric imputed estimator $U_{n,1}(x^\star_j)$- evaluated at the $j^{\text{th}}$ point in the grid $\mathfrak{J}$, during the $b^{\text{th}}$ replication.

The Mean Integrated Square Error (MISE) is then computed as: $\text{MISE} := B^{-1}\sum_{b=1}^B \text{MSE}_b.$ Figures \ref{MISE_300} and \ref{MISE_50} present a heatmaps of the MISE values across models and MAR rates, with sample sizes of $n=300$ and $n=50$, respectively. To provide deeper insights into the variability of the MSE, these figures also include the first quartile $Q1$, the median, and the third quartile $Q3$ of the distribution of $(\text{MSE}_b)_{b=1, \dots, B}.$ This additional information allows us to assess not only the average performance of the estimators (through the MISE) but also the dispersion and variability of the MSE as the sample changes across replications.
\begin{figure}[h!]%
    \includegraphics[width=15cm,height=9cm]{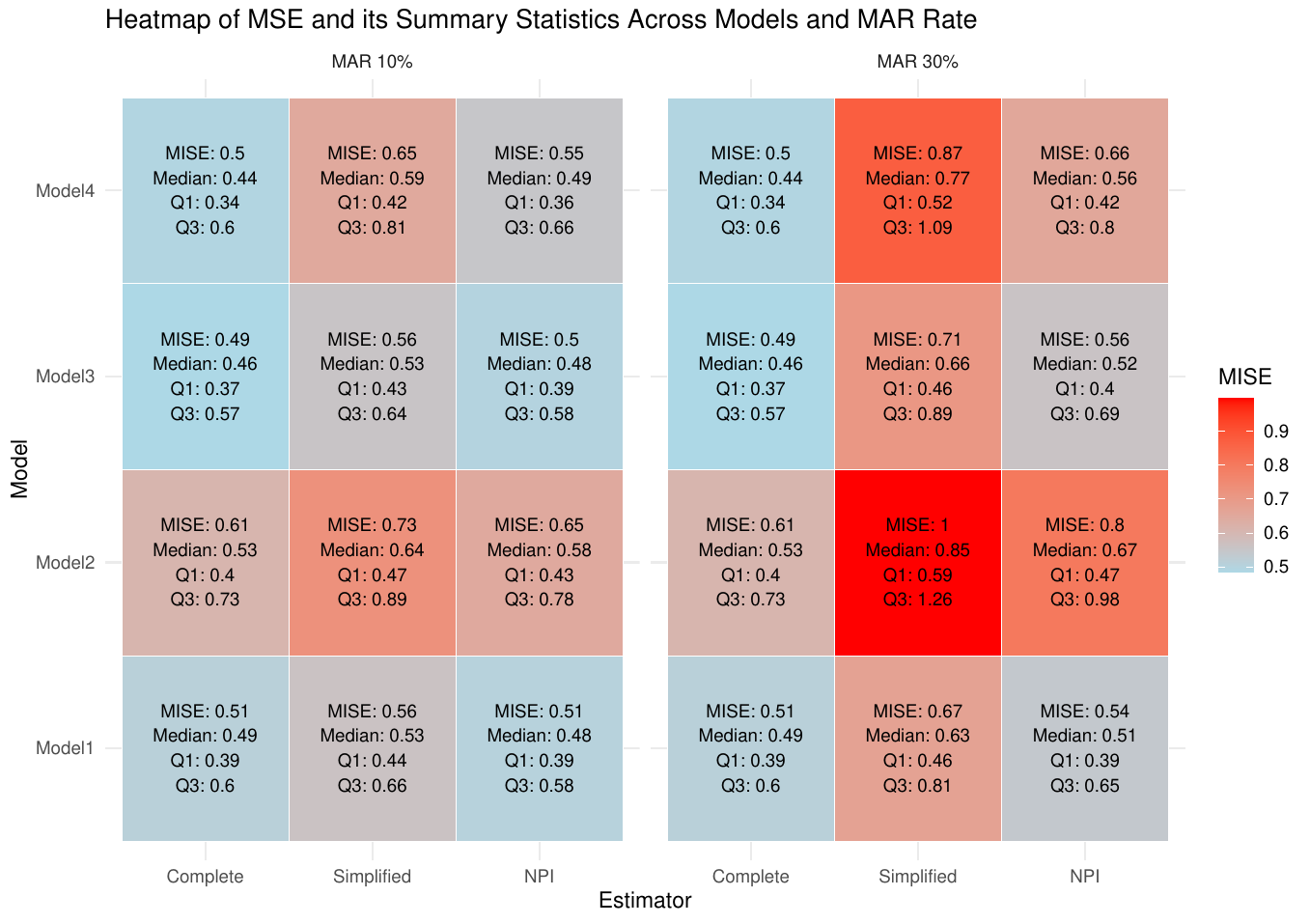} %
    \caption{Heatmaps of Mean Squared Error (MSE) along with summary statistics (median, first quartile $Q1$, and third quartile $Q3$) across models and missing at random rates for $n = 300.$}%
    
    \label{MISE_300}%
\end{figure}

\begin{figure}[h!]%
    \includegraphics[width=15cm,height=9cm]{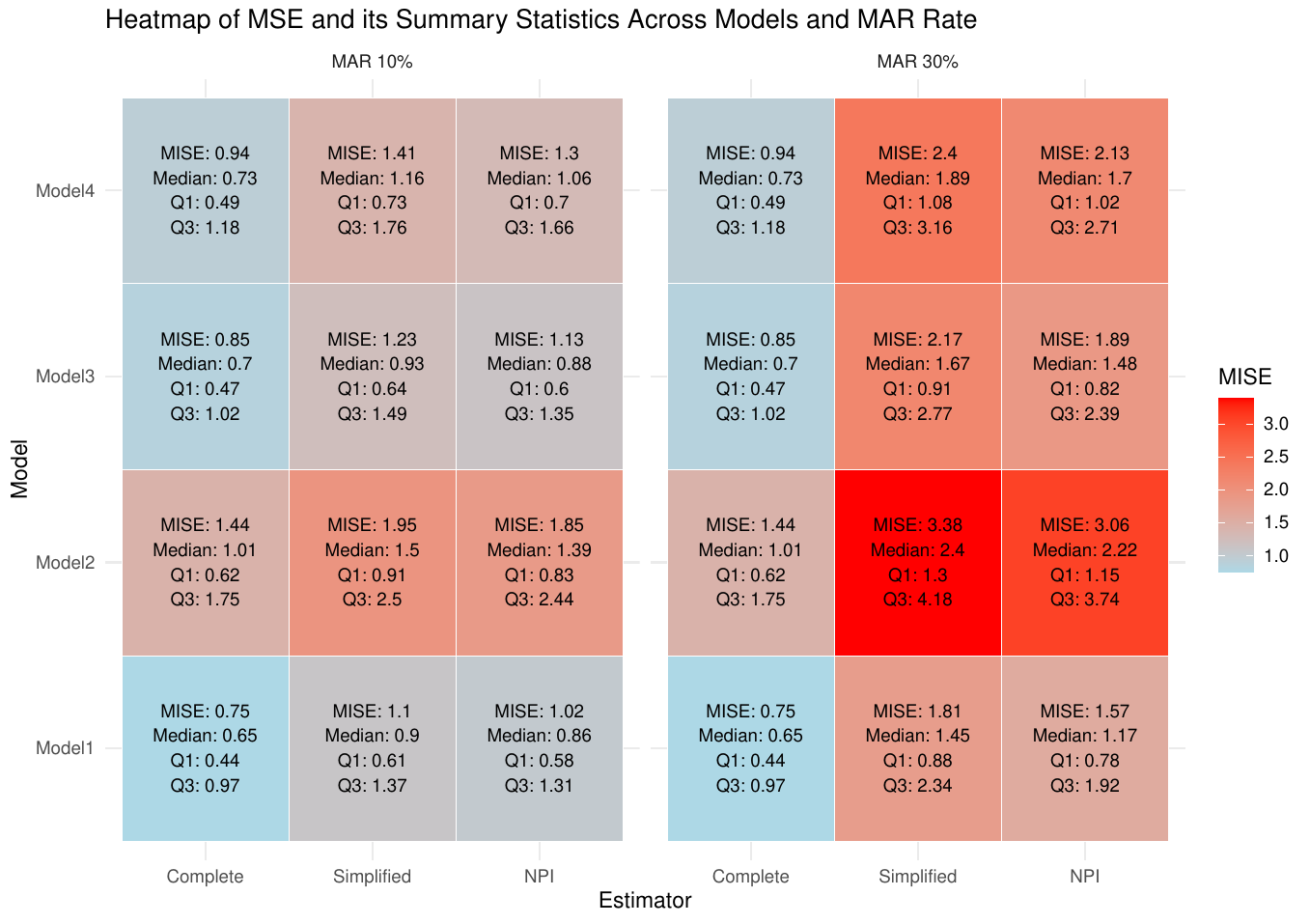} %
    \caption{Heatmaps of Mean Squared Error (MSE) along with summary statistics (median, first quartile $Q1$, and third quartile $Q3$) across models and missing at random rates for $n = 50.$}%
    
    \label{MISE_50}%
\end{figure}

The colors in Figures \ref{MISE_300} and \ref{MISE_50} represent the MISE values for each model, missing at random rate, and estimator type (complete, simplified, or nonparametric imputed). Across all four models, the complete estimator consistently outperforms the simplified and nonparametric imputed estimators. For a low MAR rate ($\text{MAR}=10\%$) and a larger sample size $n=300$, the nonparametric imputed estimator emerges as a strong competitor to the complete estimator. On the other hand, the nonparametric imputed estimator exhibits better performance than the simplified estimator across all models and sample sizes. The performance gap between these two estimators becomes more pronounced as the MAR rate increases.

To quantify the relative improvement of the nonparametric imputed estimator over the simplified estimator, we calculate its efficiency using the formula:
$$
\text{Eff.}(\%) = \dfrac{\text{MISE}_{\text{Simp}} - \text{MISE}_{\text{NPI}}}{\text{MISE}_{\text{Simp}}}\times 100.
$$

Positive values of $\text{Eff.(\%)}$ indicate that the MISE for the simplified estimator is higher than that for the nonparametric imputed estimator. A higher efficiency percentage reflects a greater advantage of the imputed estimator over the simplified one.

\begin{table}[h!]
\caption{MISE and efficiency for each model under MAR = 30\% and MAR = 10\% (values in parentheses).}\label{summary20}
\centering
\begin{tabular}{ccccc||ccccc}
    \hline
    \texttt{Model} & \multicolumn{4}{c||}{$n=300$} & \multicolumn{4}{c}{$n=50$}\\
    \hline
  & Comp & Simp. & NPI. & Eff.($\%$) & Comp & Simp. & NPI. & Eff.($\%$)\\
    \hline
   \texttt{Model1}    & 0.51 & 0.67 & 0.54 & 19.40 & 0.75 & 1.81 & 1.57 & 13.25 \\
    & & (0.56) & (0.51) & (8.90) & & (1.10) & (1.02) & (7.27)\\
    \hline
    
     \texttt{Model2}    & 0.61 & 1.00  & 0.80 & 20.00 & 1.44 & 3.38 & 3.06 & 9.46\\
     & & (0.73) & (0.65) & (10.9) & & (1.95) & (1.85) & (5.12)\\ 
     
    \hline
    
     \texttt{Model3}     & 0.49 & 0.71 & 0.56 & 21.12 & 0.85 & 2.17 & 1.89 & 12.90 \\
      & & (0.56) & (0.50) & (10.71) & & (1.23) & (1.13) & (8.13)\\ 
    \hline
     \texttt{Model4}    &  0.5 & 0.87 & 0.66 & 24.13 & 0.94 & 2.40 & 2.13 & 11.25  \\
     & & (0.65) & (0.55) & (15.38) & & (1.41) & (1.30) & (7.80)\\ 
    \hline
    \label{eff}
\end{tabular}
\end{table}

Table \ref{eff} summarizes the MISE values and efficiency scores for each data-generating model, considering different MAR rates and sample sizes ($n=300$ and $n=50$). The results consistently show that the nonparametric imputed estimator is more efficient than the simplified estimator across all models and settings. The efficiency of the imputed estimator increases substantially as the MAR rate and sample size grow, highlighting its robustness and improved accuracy under challenging scenarios.

\subsubsection{Assessment of confidence intervals estimation}

Given the formulas for the asymptotic confidence intervals of the simplified and imputed estimators, provided in \eqref{CIS} and \eqref{CINI}, respectively, one can observe that, asymptotically, as the MAR rate increases (i.e. $\pi_n(x)$ decreases), the length of $\text{CI}_\nu^{\text{S}}$ increases, whereas the length of $\text{CI}_\nu^{\text{NPI}}$ decreases. This observation is numerically validated in Figure \ref{CIres} (top-right). Consequently, Figure \ref{CIres} (top-left) illustrates that, unlike $\text{CI}_\nu^{\text{S}}$, the coverage rate of $\text{CI}_\nu^{\text{NPI}}$ decreases as the MAR rate increases. For comparison, we employ the asymptotic confidence interval for complete data introduced in Section 3.3.1 of \citet{Chaouch2019}.

To enable a fair comparison between the estimation methods, we adopt a criterion that balances the trade-off between coverage rate and confidence interval length, ensuring a comprehensive evaluation of the interval estimation performance. The criterion is defined as follows: 
$$
\text{Coverage Efficiency} = \dfrac{\text{Coverage rate}}{\text{Average CI length}}\times 100.
$$

\begin{figure}[h!]
    \centering
    \begin{minipage}{0.5\textwidth}
        \centering
        \includegraphics[width=\linewidth, height=5cm]{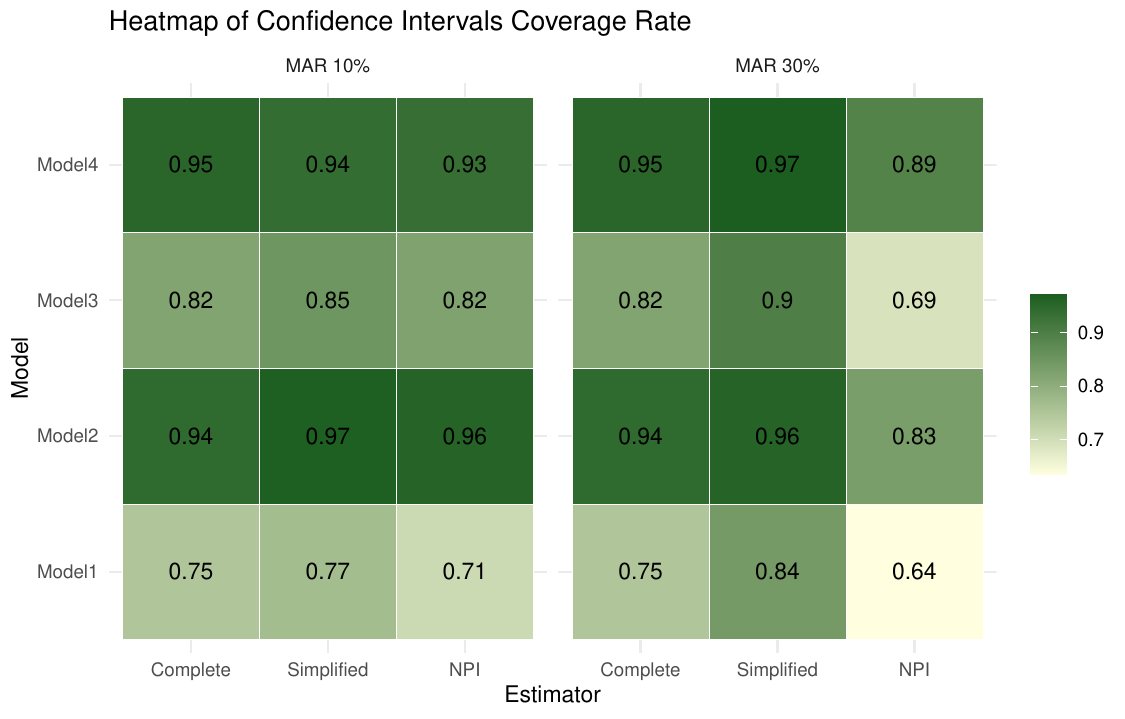}
    \end{minipage}\hfill
    \begin{minipage}{0.5\textwidth}
        \centering
        \includegraphics[width=\linewidth, height=5cm]{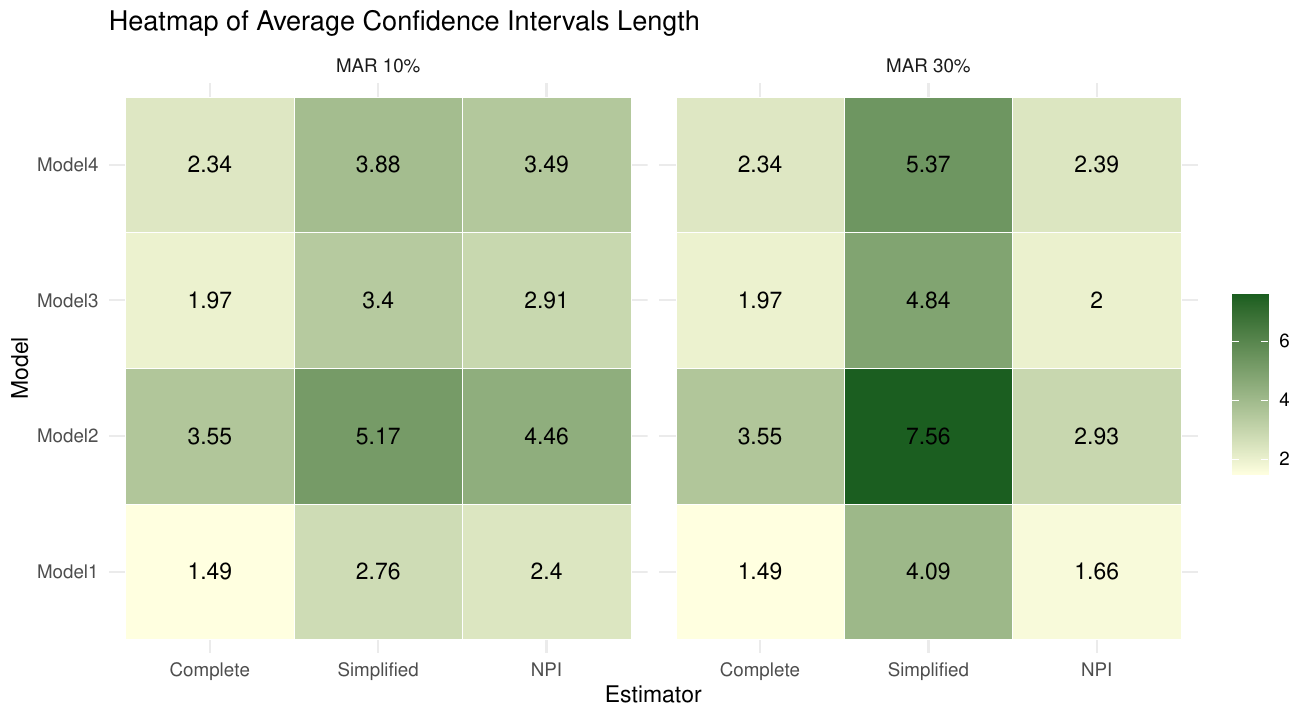}
    \end{minipage}
    
    \vspace{1em}  
    \begin{minipage}{0.6\textwidth}
        \centering
        \includegraphics[width=\linewidth, height=5cm]{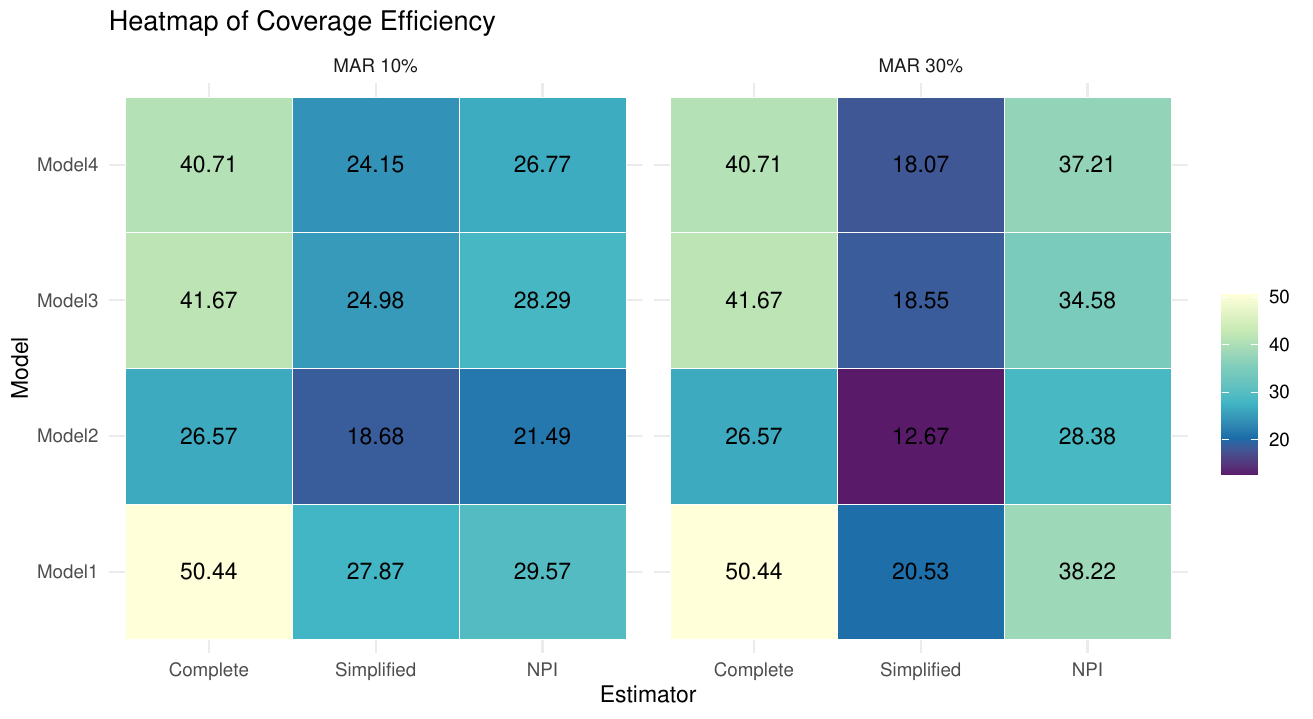}
    \end{minipage}
    \caption{Coverage rate (top-left), average confidence intervals length (top-right), and coverage efficiency (bottom) across models and MAR rates, based on 500 simulations with $n=300.$}
    \label{CIres}
\end{figure}

Figure \ref{CIres} (bottom) illustrates the coverage efficiency across various models and MAR rates. The results indicate that, for all models, the highest coverage efficiency is achieved when the data are completely observed. Additionally, the figure demonstrates that $\text{CI}_\nu^{\text{NPI}}$ outperforms $\text{CI}_\nu^{\text{S}}$ in terms of coverage efficiency, particularly as the missing rate increases. Similar findings are observed for $n = 50$, with the corresponding results displayed in Figure \ref{CIres_50}. Finally, as expected, a comparison between Figures \ref{CIres} and \ref{CIres_50} reveals that the performance of the estimated confidence intervals declines as the sample size decreases.

\begin{figure}[h!]
    \centering
    \begin{minipage}{0.5\textwidth}
        \centering
        \includegraphics[width=\linewidth, height=5cm]{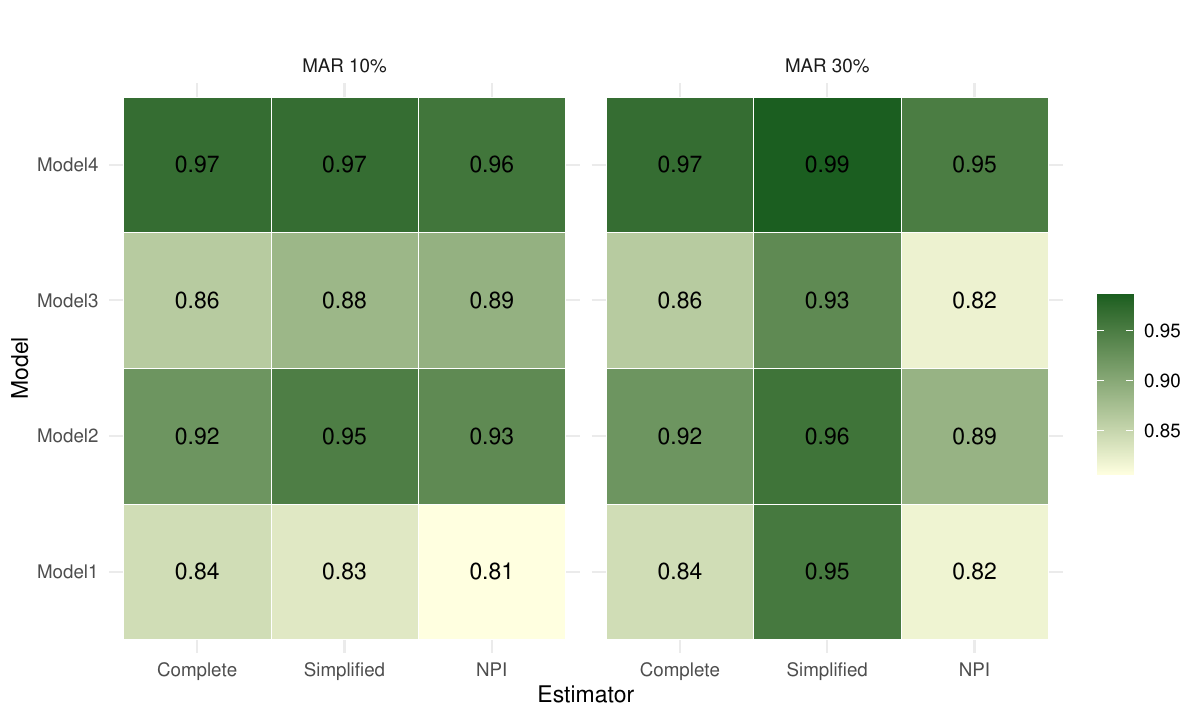}
    \end{minipage}\hfill
    \begin{minipage}{0.5\textwidth}
        \centering
        \includegraphics[width=\linewidth, height=5cm]{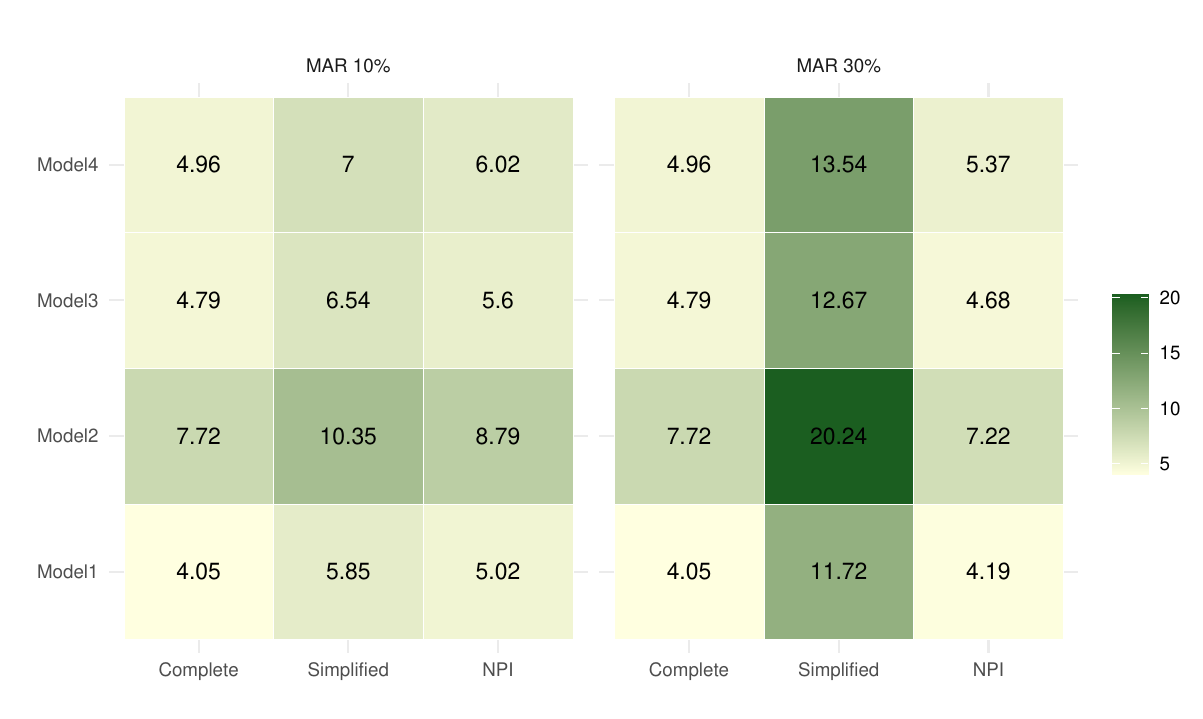}
    \end{minipage}
    
    \vspace{1em}  
    \begin{minipage}{0.6\textwidth}
        \centering
        \includegraphics[width=\linewidth, height=5cm]{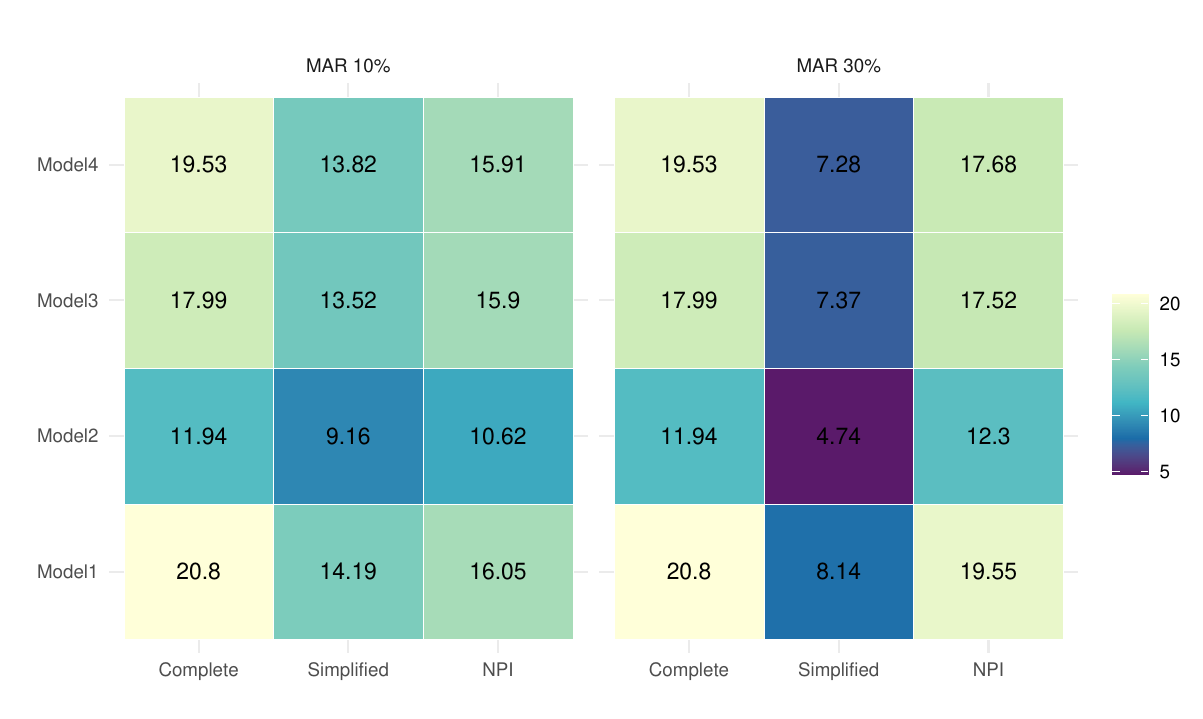}
    \end{minipage}
    \caption{Coverage rate (top-left), average confidence intervals length (top-right), and coverage efficiency (bottom) across models and MAR rates, based on 500 simulations with $n=50.$}
    \label{CIres_50}
\end{figure}

\section{Daily Gas price volatility modeling using intraday EU/USD exchange rate curves}\label{sec5}
In recent decades, natural energy sources have become a prominent area of research due to their critical role in industrial and economic systems. Energy sources are essential to numerous industrial sectors, significantly affecting export revenues, exchange rates, and stock markets. Among these, natural gas is particularly noteworthy as a vital resource, with its price having a substantial impact on the economies of many countries. Beyond its primary use in generating heat and electricity, natural gas plays a pivotal role in the chemical, heavy, and food industries, where it is a key component in the production of plastics, detergents, and paints.

The central role of natural gas in the global economy necessitates an investigation into its relationship with other macroeconomic and financial factors. For instance, using the Granger causality test, \citet{Sul} identified a strong connection between natural gas prices and exchange rates. This relationship is attributed to the fact that natural gas, like other commodities, is quoted and settled in U.S. dollars. When the dollar depreciates relative to other currencies, the price of natural gas tends to rise, whereas when the dollar appreciates, natural gas prices typically decline (see Figure \ref{Gas_EUUSD}).

\begin{figure}[h!]%
    \centering
    \includegraphics[width=15cm,height=7.5cm]{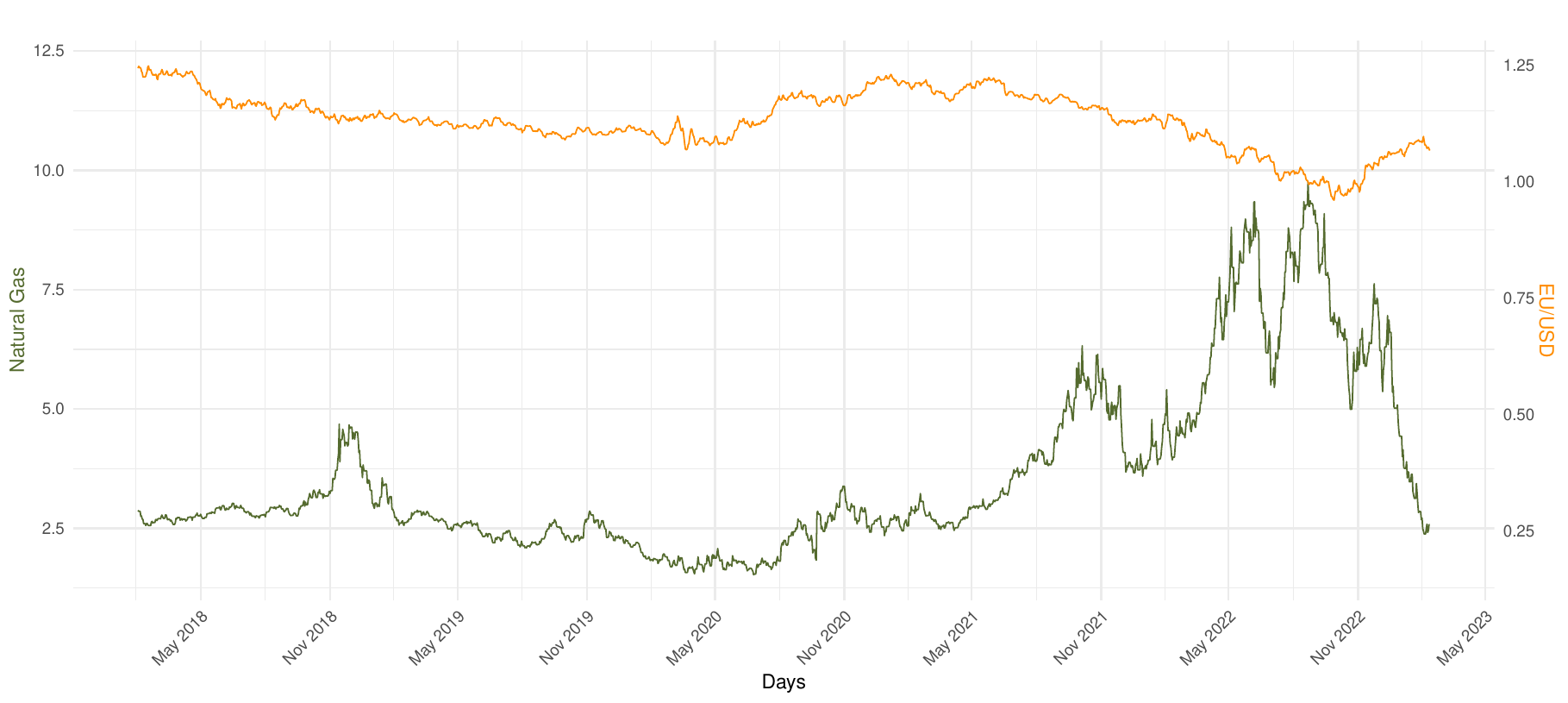} %
    \caption{Closing price of natural gas and EU/USD exchange rate.}%
    
    \label{Gas_EUUSD}%
\end{figure}

In this section, we aim to estimate the conditional volatility of daily natural gas log returns, utilizing the intraday curve of EU/USD exchange rate log returns as the predictor.

\subsection{Data description and random sample construction}

The dataset encompasses trading days from February 1, 2018, to February 1, 2023, totaling 1837 trading days. Figure \ref{brentGasClosingPrice} illustrates a 1-day frequency time series of natural gas closing prices. The log returns of natural gas prices are computed as $r_{t}^g:=\log(P_{t}^g/P_{t-1}^g) \times 100$, where $P_t^g$ represents the natural gas price observed on day $t.$ The results of the KPSS test indicate that the process  $r_{t}^g$ is stationary, with a $\text{p-value}$ of 0.1. Figure \ref{brentGasClosingPrice} displays both the natural gas prices $(P^g_t)_t$ and their corresponding log returns $(r_{t}^g)_t.$ 

\begin{rem}
    The theoretical framework presented in this paper assumes that the process is stationary and ergodic. When dealing with real-world data, the stationarity assumption can be evaluated using statistical tools such as the KPSS or Augmented Dickey-Fuller tests. However, to the best of our knowledge, there are no established methods for directly testing the ergodicity assumption. A similar challenge arises when developing theory for mixing processes, where this assumption cannot be verified unless a well-known mixing model is successfully fitted to the data. In practice, researchers typically proceed by implicitly assuming that the data meet the mixing or ergodicity assumptions and instead focus on assessing the performance of the proposed methodology. 
\end{rem}

Figure \ref{brentGasClosingPrice} also highlights periods of pronounced volatility in natural gas prices, as evidenced by sharp spikes in log returns. Notable periods include the end of 2018 to early 2019, driven by high winter demand and supply constraints; March 2020, marked by extreme fluctuations triggered by the onset of the COVID-19 pandemic and collapsing global demand; and late 2021 to 2022, characterized by sustained volatility resulting from the Russian-Ukrainian war and subsequent disruptions in energy supply. Even during mid-2022 to early 2023, volatility persisted due to geopolitical uncertainties and inflationary pressures, despite a gradual decline in prices. These fluctuations underscore the critical impact of economic, geopolitical, and seasonal factors on the dynamics of the natural gas market.

\begin{figure}[h!]%
    \centering
    \includegraphics[width=15cm,height=7.5cm]{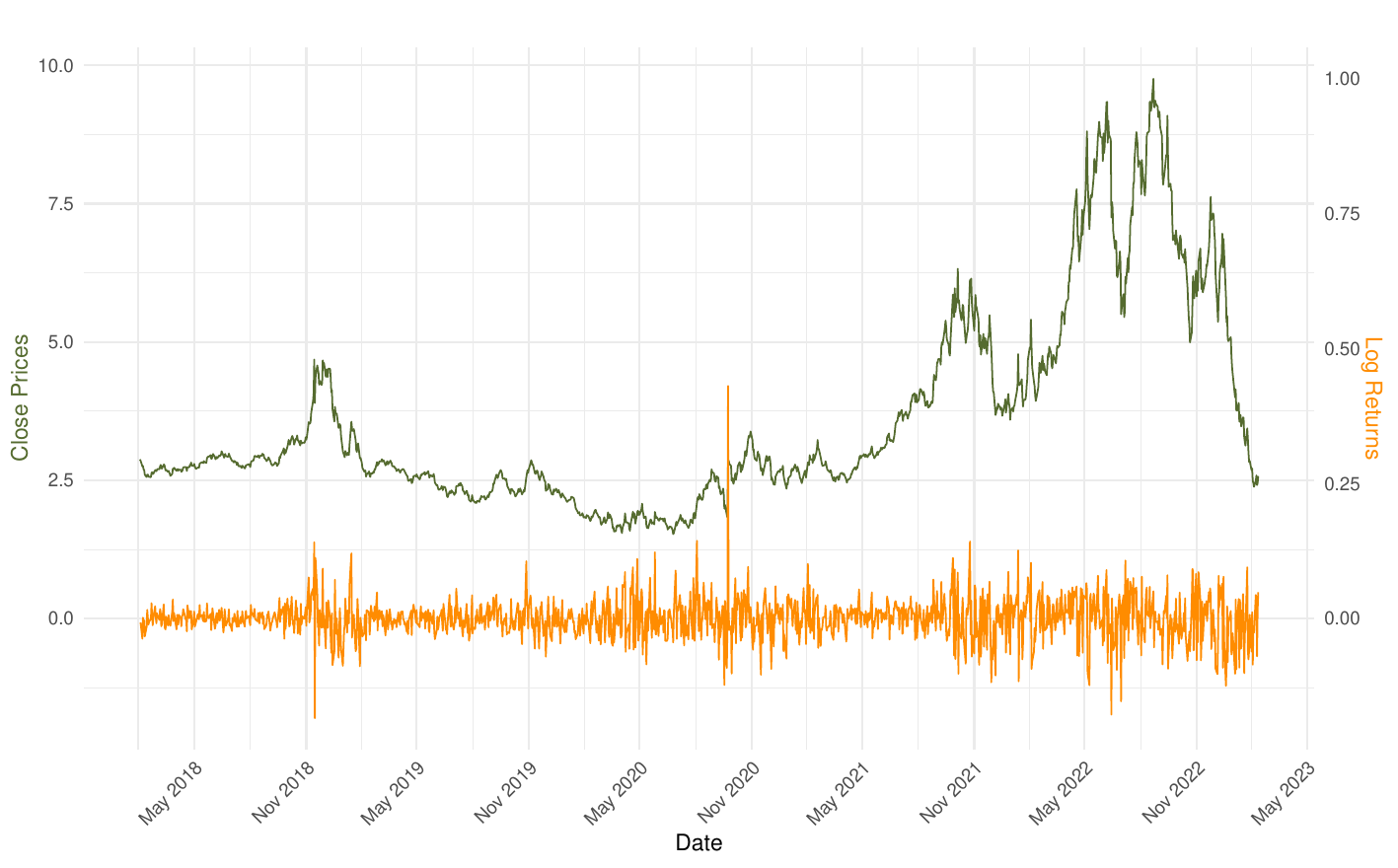} %
    \caption{Closing Price of Natural Gas as well as its log return.}%
    
    \label{brentGasClosingPrice}%
\end{figure}

Over the same period, we have access to 1-hour frequency data for the EU/USD exchange rate, denoted as $P^{\text{xr}}_t$. The corresponding log-returns are calculated as $r^{\text{xr}}_t := \log(P_{t}^{\text{xr}}/P_{t-1}^{\text{xr}}) \times 100$. Our random sample $(X_t, Y_t)_{t=1,\cdots, 1836}$ is defined such that $Y_t = r_t^g$ and $X_t(\lambda) = r_t^{\text{xr}}\left(\lambda + (t-1)\times 1836 \right),$ for $t= 1, \dots, n = 1836,$ and $\lambda\in [1, 23].$ Figure \ref{EUUSD_price_return}(a) illustrates the intraday (1-hour frequency) EU/USD log-return curves, while Figure \ref{EUUSD_price_return}(b) presents the corresponding closing price curves.


    


\begin{figure}[h!]%
    \centering
    \subfloat[\centering ]{{\includegraphics[width=7cm,height=5cm]{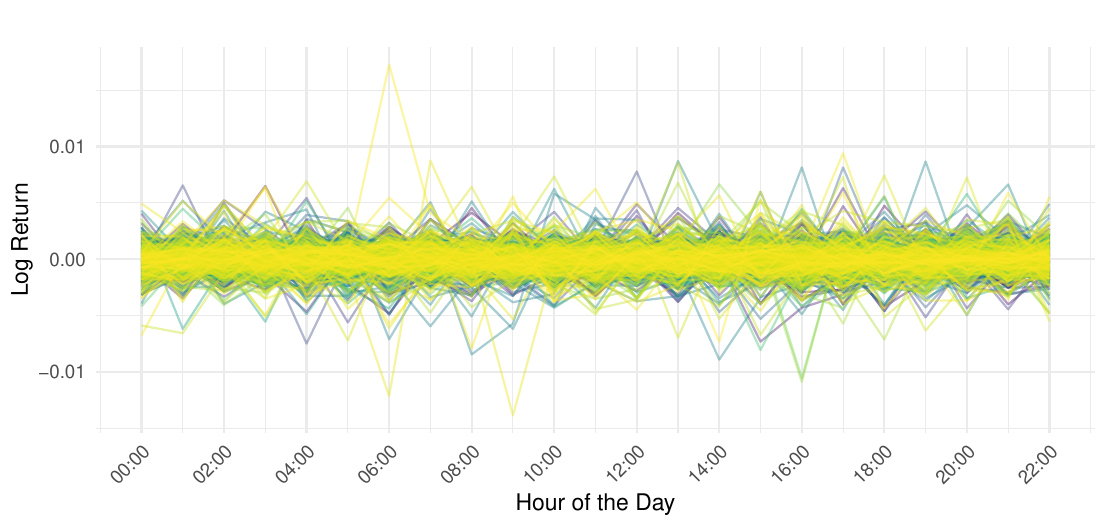} }}%
     \qquad
    \subfloat[\centering ]{{\includegraphics[width=7cm,height=5cm]{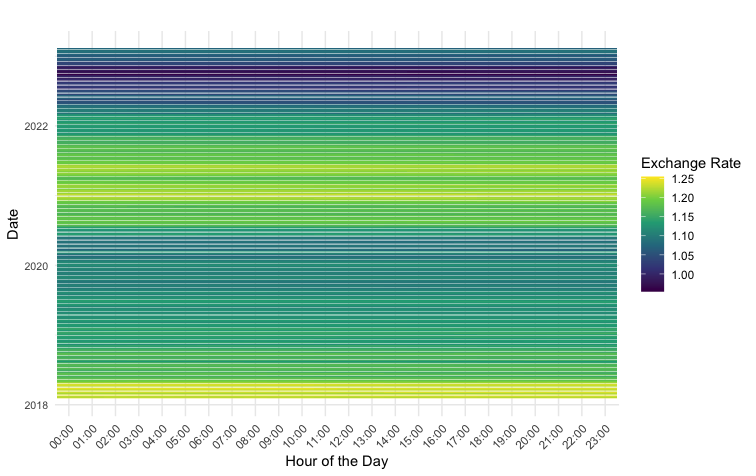} }}%
 
    \caption{(a) Intraday (1-hour frequency) EU/USD log-return curves. (b) HeatMaps of intraday EUR/USD exchange rates.}%
    \label{EUUSD_price_return}%
\end{figure}

Before proceeding with modeling the conditional volatility of gas returns using intraday exchange rate returns, it is essential to verify that the conditional volatility component is significant. To this end, we consider the model $Y_t = m(X_t) + \eta_t$ for $t\in \{1, \dots, n \}$. Figure \ref{etavsxbar} displays the residuals $\widehat{\eta}_t := Y_t - m_n(X_t)$ plotted against the average daily exchange rate log-returns $(\bar{X}_t)$. The plot clearly indicates a non-constant conditional variance in the residuals. Consequently, we adopt a model of the form $Y_t = m(X_t) + \sqrt{U(X_t)}\; \varepsilon_t,$ which incorporates a conditional volatility component.

 \begin{figure}[h!]%
    \centering
    \includegraphics[width=7cm,height=5cm]{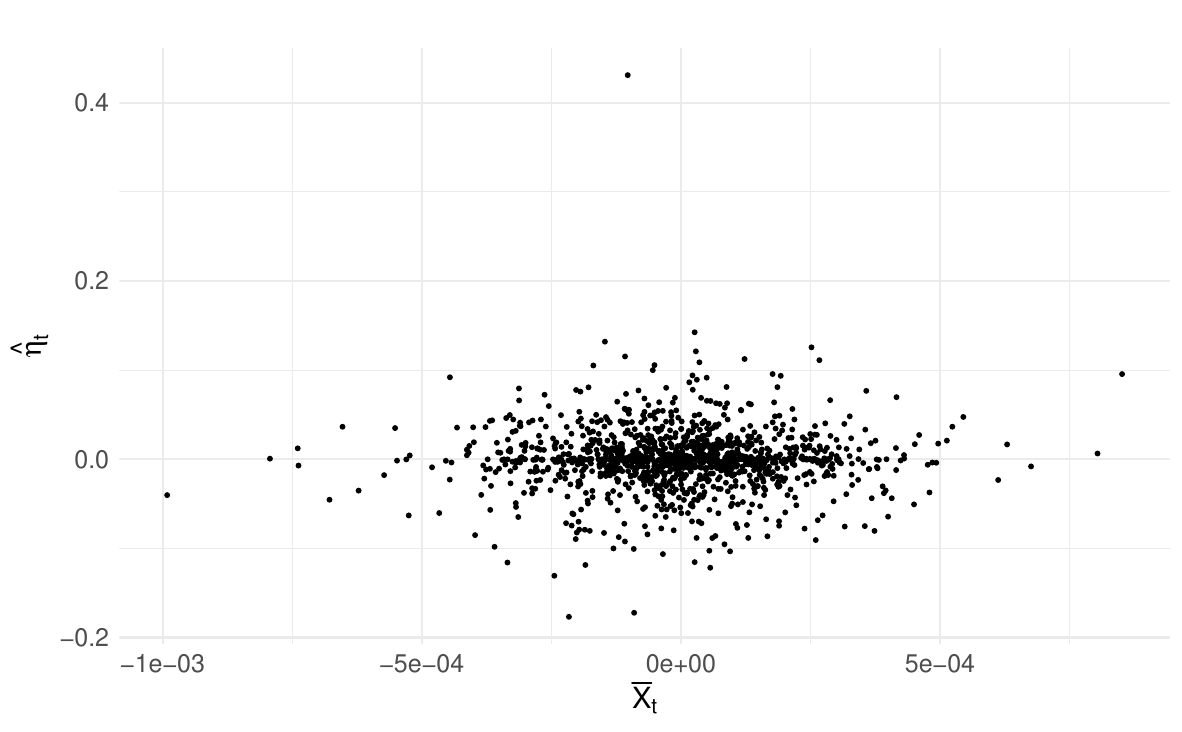} %
    \caption{Residuals $\widehat{\eta}_t$ plotted against the average daily log return of the EU/USD exchange rate.}%
    
    \label{etavsxbar}%
\end{figure}

 \begin{figure}[h!]
    \centering
    \begin{subfigure}[b]{\textwidth}
        \centering
        \includegraphics[width=\textwidth]{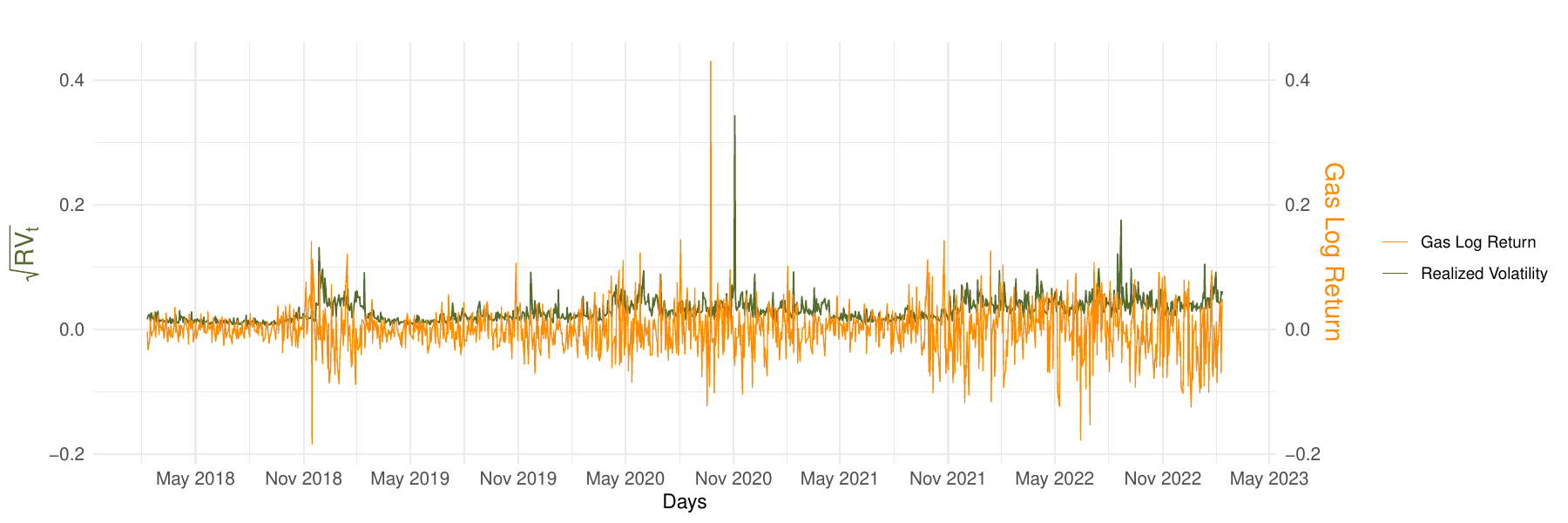}
        \caption{}
        \label{fig:subfig_a}
    \end{subfigure}

    \begin{subfigure}[b]{\textwidth}
        \centering
        \includegraphics[width=\textwidth]{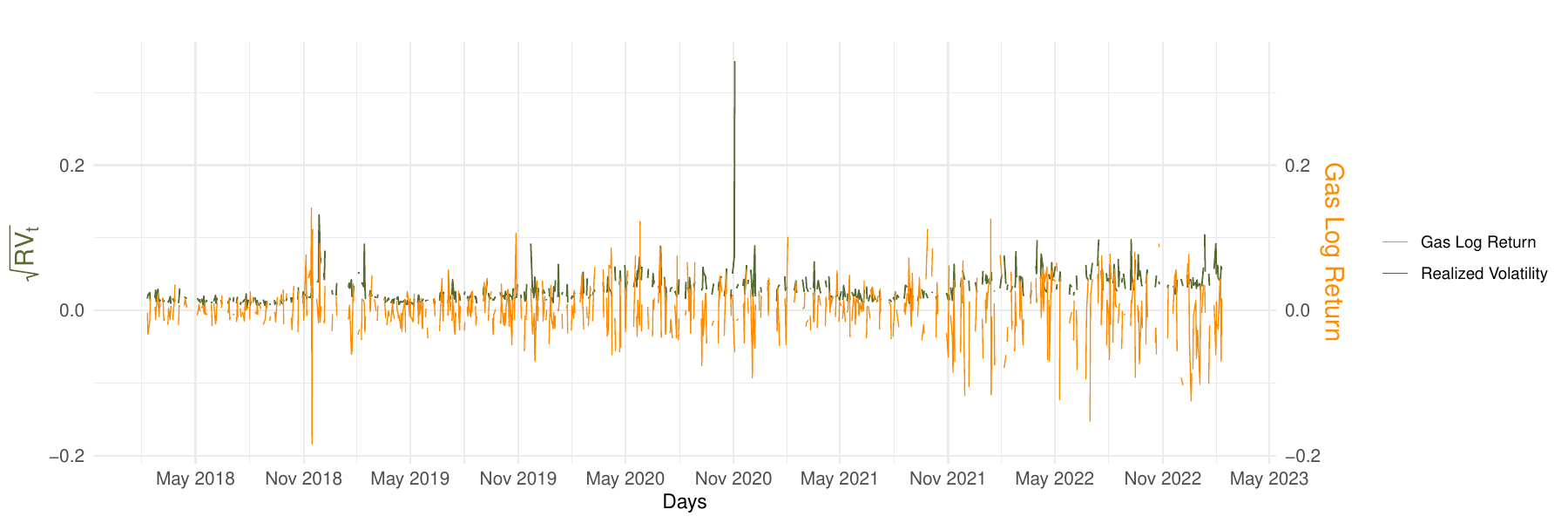}
        \caption{}
        \label{fig:subfig_b}
    \end{subfigure}

    \caption{(a) Log Return and Realized Volatility of natural gas with complete data. (b) Log Return and Realized Volatility of natural gas with $35\%$ missing data.}
    \label{RV_with_MAR}
\end{figure}

\begin{rem}
It is worth emphasizing that the dataset analyzed in this study is fully observed. This does not limit the applicability of our methodology to real-world scenarios, as the proposed approach remains valid even when the data are originally incomplete. The completeness of the dataset allows us to compute the true realized volatility of gas returns (as outlined in Section \ref{realizedvol}). In contrast, if the data were incomplete, it would not be possible to calculate the true conditional volatility. As a result, essential performance metrics such as the mean squared error and confidence interval coverage rate could not be evaluated, thereby constraining the ability to assess the accuracy and reliability of the point and interval estimations.
 \end{rem}

We consider that the missing at random mechanism is governed by the following conditional probability distribution:
 $$\pi(x)=\mathbb{P}(\delta=1|X=x)=\text{expit}\bigg(2\zeta\int_{1}^{23} x^2(\lambda) d\lambda \bigg),$$
where $\text{expit}(u)=\dfrac{e^u}{1+e^u}$ and $\zeta \in \{0.8, 2\}$. 

The parameter $\zeta$ controls the missing at random rate. Specifically, when  $\zeta = 0.8$ (resp. $\zeta=2$), approximately $35\%$ (resp. $15\%$) of the gas returns are missing. Figure \ref{RV_with_MAR}(a) shows the gas log-returns alongside the true realized volatility when data are fully observed. In contrast, Figure \ref{RV_with_MAR}(b) depicts the natural gas log-returns and the corresponding realized volatility with $35\%$ of the data missing. The objective of this study is to reconstruct the gaps in the realized volatility curve in Figure \ref{RV_with_MAR}(b) using the MAR gas log-returns as the response variable and the intraday exchange rate log-returns as the predictor.

\subsection{Realized volatility and performance measures}\label{realizedvol}
Volatility refers to a latent variable that cannot be directly observed, making the evaluation of its prediction a challenging task. However, the literature has demonstrated that true volatility can be approximated. For example, the daily realized volatility (see \cite{Merton1980}), computed from intraday return values, is widely regarded as a practical approximation of true volatility.

Using log-return data sampled at a 1-hour frequency, the realized volatility for a given day $t$ is calculated as:
$$ RV_t=\sum_{h=1}^{24} r_{t,h}^2,$$
where $r_{t,h}$ represents the log-return of natural gas observed at hour $h$ on day $t.$

\cite{Kara} established that the realized volatility converges to the true volatility as the sampling frequency of the log-returns from the original path increases. This implies that, for instance, using 1-minute frequency data for intraday gas log-returns provides a more accurate approximation of the true volatility compared to lower-frequency data.

To assess the accuracy of the estimators in estimating conditional volatility, we calculate the daily squared error, defined as
$$\text{SE}_t:=\left(\sqrt{\mathcal{U}_t(X_t)}-\sqrt{RV_t}\right)^2,$$
where $\mathcal{U}_t(X_t)$ represents the conditional variance estimator derived from either complete data ($U_{n,c}(\cdot)$), missing data ($U_{n,0}(\cdot)$), or imputed data ($U_{n,1}(\cdot)$). Additionally, rather than reporting the mean squared error (MSE) as a performance metric, we present the first, second, and third quartiles of the sequence $(\text{SE}_t)_t$. This approach provides not only an insight into the average squared error but also a clearer understanding of the variability in the errors.



\subsection{Point and interval estimation of daily conditional volatility }

The tuning parameters used to construct the conditional volatility estimators are defined as follows. We employed the quadratic kernel for $K, W$ and $H$, the bandwidths $(h_i)_{i=1, \dots, 3}$ are selected using the cross-validation technique described in Subsection \ref{tunpar}. For the semi-metric, given that the curves of exchange rate log-returns are not smooth, we utilized the PCA-based semi-metric, denoted as $d_4^{\mathrm{PCA}}(\cdot, \cdot)$. This semi-metric is based on the projection onto the four eigenfunctions, $v_1(\cdot), \dots, v_4(\cdot)$, associated with the four largest eigenvalues of the empirical covariance operator of the functional predictor $X.$ The PCA-based semi-metric is defined as:
 \begin{equation}
d_4^{\mathrm{PCA}}\left(X_{t}, X_{s}\right)=\sqrt{\sum_{k=1}^4\left(\int_1^{1439}\left(X_{t}(m)-X_{s}(m)\right) v_k(m) dm\right)^2} .
\end{equation}

Table \ref{Tab:AE} summarizes the squared errors $(\text{SE}_t)_t$, in percentage, for estimating the conditional volatility of gas log-returns using intraday EU/USD exchange rate returns under missing at random (MAR) scenarios of $15\%$ and $35\%.$ The estimator with complete data demonstrates stable performance across both MAR levels, with consistent error percentiles and mean squared error (MSE), highlighting its robustness when sufficient complete cases are available. The simplified estimator performs comparably to the complete estimator at MAR = 15\% but shows a significant increase in error metrics at MAR = 35\%, suggesting its reduced effectiveness under higher missingness rates. The nonparametric imputed estimator, while slightly less accurate at MAR = 15\% due to wider error dispersion, performs relatively better than the simplified estimator at MAR = 35\%, indicating its adaptability to higher missingness. Overall, the estimator with complete data is the most robust, while the nonparametric  imputed estimator offers a viable alternative for handling substantial missing data.

\begin{table}[h!]
\centering
\caption{Summary statistics of $(\text{SE}_t)$ (in $\%$) for each estimator under MAR rates of $15\%$ and $35\%.$}
    \begin{tabular}{l l l l l |l l l l}
    \hline
   Estimators & \multicolumn{4}{l}{$\text{MAR} = 15\%$} &
 \multicolumn{4}{|l}{$\text{MAR} = 35\%$}\\ \cline{2-9} & $\small Q_{25\%}$   & $\small Q_{50\%}$   & $\small Q_{75\%}$  & \small MSE & $\small Q_{25\%}$ & $\small Q_{50\%}$ & $\small Q_{75\%}$ & \small MSE\\
 \hline
 $\text{Complete}$ & 0.37 & 1.37 & 3.48 & 2.72 & 0.37 & 1.37 & 3.48  & 2.72 \\
 
 $\text{Simplified}$ & 0.33 & 1.34 & 3.51  & 2.78 & 0.62 & 2.15 & 4.43 &3.57  \\
   $\text{NP Imp.}$ & 0.40 & 1.49 & 3.71  & 2.80 & 0.55 & 2.09 & 4.14  & 3.39  \\
  
 \hline
    \end{tabular}
\label{Tab:AE}
\end{table}

Figure \ref{CI} illustrates the trade-off between coverage rate and confidence interval (CI) length for the complete, simplified, and nonparametric imputed estimators under MAR of 15\% and 35\%. The estimator with completely observed data consistently achieves a coverage rate of approximately 96\% with the shortest CI lengths ($\approx 0.08$), highlighting its efficiency in maintaining both narrow intervals and reliable coverage. The simplified estimator achieves the highest coverage rate ($\approx 99\%$) at both MAR levels but at the cost of significantly longer confidence intervals, particularly for MAR = 35\%, where the CI length exceeds 0.14. The nonparametric imputed estimator strikes a balance between the two, achieving slightly higher coverage rates than the complete estimator (around 97\%) while maintaining moderate CI lengths ($\approx 0.1$). In conclusion, the estimator with complete data is offering compact intervals, the simplified estimator prioritizing high coverage, and the nonparametric imputed estimator is providing a balance between coverage and CI length.

 \begin{figure}[h!]%
    \centering
    \includegraphics[width=12cm,height=9cm]{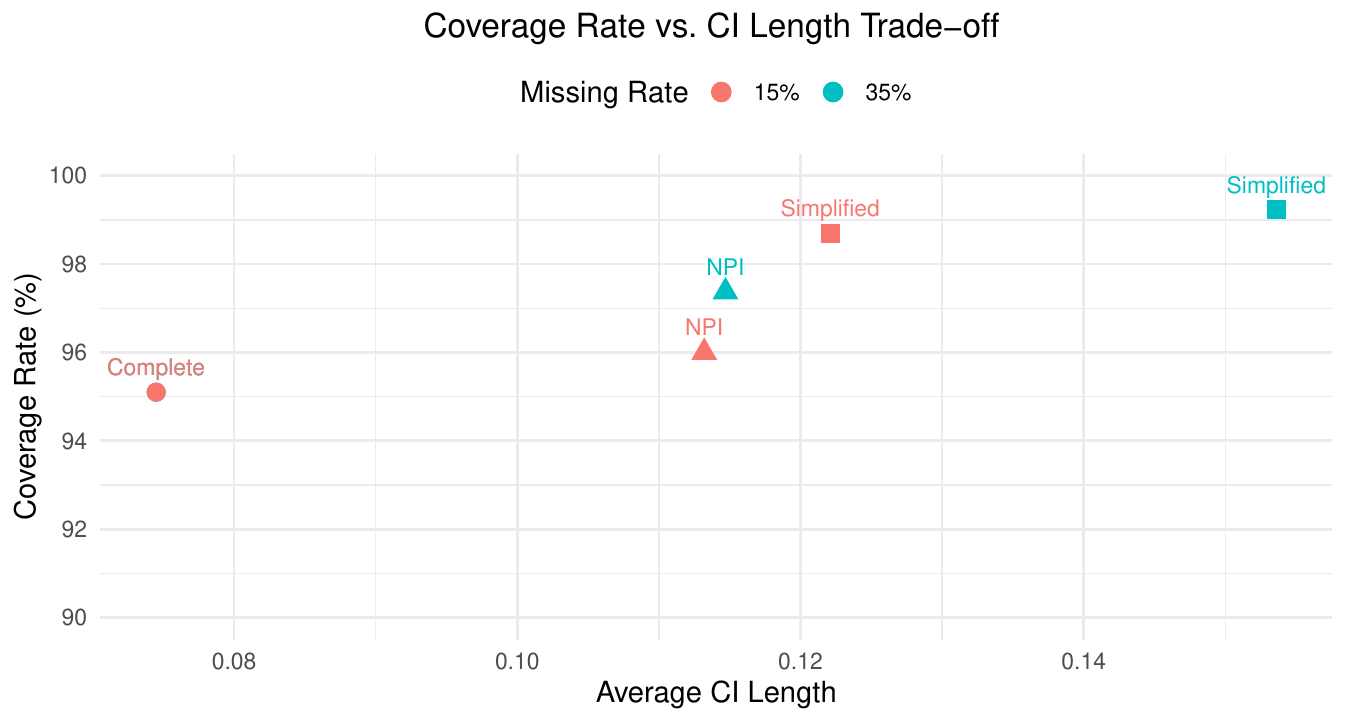} %
    \caption{Coverage rate versus CI length trade-off for volatility prediction under MAR rates of $15\%$ and $35\%.$}%
    
    \label{CI}%
\end{figure}

    
   


\section{Discussion and future work}\label{sec6}
In this paper, we investigated the nonparametric estimation of the conditional volatility of a real-valued response variable using a functional random variable as predictors, with a focus on addressing missing data challenges under a missing at random mechanism. Our methodology extended existing functional regression frameworks by incorporating both simplified and nonparametric imputed estimators for the regression and volatility operators. Detailed asymptotic properties were established for each estimator. Through simulation studies, we demonstrated the superior accuracy and efficiency of the imputed estimators, particularly under high MAR rates, highlighting their robustness in challenging scenarios. Real-world applications to energy commodities confirmed the practical utility of the proposed approach, effectively modeling conditional volatility while accounting for critical economic and geopolitical factors. Moreover, by constructing confidence intervals, we provided a comprehensive framework to evaluate the reliability of our estimators. Overall, this work contributes to advancing nonparametric techniques for volatility modeling, offering valuable insights and tools for both theoretical research and applied financial econometrics.
\\
This work can be extended in several directions. One potential avenue is to reduce the dimensionality of the functional predictor by employing a single functional index model (SFIM) for volatility estimation. SFIM has demonstrated its effectiveness in enhancing the consistency of regression operator estimators. However, to the best of our knowledge, its application to conditional variance estimation remains unexplored. Furthermore, in many contexts, particularly in finance and economics, volatility may depend on multiple predictors rather than a single functional covariate. Including additional real-valued covariates that capture economic or geopolitical conditions could provide a more comprehensive explanation of high-volatility periods. In this regard, it would be valuable to extend the current results to the semi-functional partial linear regression model (as introduced by \citet{Aneiros2006}), specifically addressing scenarios where the error terms are heteroscedastic. Such advancements could significantly broaden the applicability and robustness of the proposed methodology.
\section*{Proof of main results}
In order to prove the main results of this paper, we introduce additional notations and necessary Lemmas. Let $j \in \{1, 2 \}.$ Let
\begin{eqnarray}
  m^{[j]}_{n,0}(x) &=& \frac{1}{n\mathbb{E}(K_{1}(x))}\sum_{t=1}^{n}Y^{j-1}_{t}\delta_{t}K_{t}(x),\label{*} \\
  \overline{m}^{[j]}_{n,0}(x) &=& \frac{1}{n\mathbb{E}(K_{1}(x))}\sum_{t=1}^{n}\mathbb{E}[Y^{j-1}_{t}\delta_{t}K_{t}(x)|\F_{t-1}],\notag \label{**}
\end{eqnarray}
where
$K_{t}(x)=K\left(\frac{d_{1}(x,X_{t})}{h_{1}}\right)$,\\
and
\begin{equation}\label{Vn0}
V_{n,0}(x)=\frac{1}{n \mathbb{E}\left[K_{1}(x)\right]} \sum_{t=1}^{n} \delta_{t}\sqrt{U(X_{t})}\varepsilon_{t}K_{t}(x)\text{.}\color{black}
\end{equation}
\begin{lem} [\cite{LL16}]\label{Lemma 6.1}
Let $(M_{n})_{n\geq 1} $ be a sequence of martingale differences with respect to the sequence of $\sigma$-fields $\F_{n}=\{\sigma(M_{1},\ldots, M_{n}):
 n \geq 1 \} $, where $\sigma (M_{1},\ldots, M_{n})$ is the $\sigma$-field generated by the random variables $M_{1},\ldots, M_{n}$. Set $S_{n}=\displaystyle\sum_{t=1}^{n}M_{t}$. Assume that there exist for any $ t \in \mathbb{N}$ some nonnegative constants $C$ and $d_{t}$ such that $|M_{t}| \leq C$ almost surely, and
 $\mathbb{E}(M_{t}^{2}| \F_{t-1}) \leq d_{t}^{2} $
almost surely.\\ Then, for any $\varepsilon >0$,
\begin{equation*}
\mathbb{P}(|S_{n}|>\varepsilon)\leq 2\exp\left\lbrace\frac{-\varepsilon^{2}}{(4D_{n}+2C\varepsilon)}\right\rbrace,
\end{equation*}
where $D_{n}=\sum_{t=1}^{n}d_{t}^{2}.$
\end{lem}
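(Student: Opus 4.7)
The plan is to prove this via the classical Chernoff/Cramér exponential-martingale argument, followed by an optimization over the exponential parameter. First, I would handle the upper tail $\mathbb{P}(S_n > \varepsilon)$ and obtain the corresponding lower tail bound by applying the same argument to $(-M_t)$, which is again a bounded martingale difference sequence with the same almost-sure bound $C$ and the same conditional variance bound $d_t^2$; the union bound then produces the factor of $2$ in front of the exponential.

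For the upper tail, I would start from the exponential Markov inequality: for any $\lambda > 0$,
\begin{equation*}
\mathbb{P}(S_n > \varepsilon) \;\leq\; e^{-\lambda \varepsilon} \, \mathbb{E}\!\left[e^{\lambda S_n}\right].
\end{equation*}
The core step is to control $\mathbb{E}[e^{\lambda M_t} \mid \F_{t-1}]$ uniformly in $t$. Using the almost sure bound $|\lambda M_t|\le \lambda C$ together with the Taylor-type inequality $e^{u}-1-u \le u^{2}$, valid whenever $|u|$ stays below a fixed numerical threshold, and exploiting the martingale property $\mathbb{E}[M_t\mid \F_{t-1}]=0$ along with the variance bound $\mathbb{E}[M_t^2\mid \F_{t-1}]\le d_t^2$, one obtains an estimate of the form
\begin{equation*}
\mathbb{E}\!\left[e^{\lambda M_t}\mid \F_{t-1}\right] \;\leq\; 1+ \lambda^{2}d_t^{2}\, \leq\, \exp\!\big(\lambda^{2}d_t^{2}\big),
\end{equation*}
provided $\lambda$ is restricted to the range $\lambda \le 1/C$. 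Iterating by conditioning successively on $\F_{n-1},\F_{n-2},\ldots,\F_0$ and using the tower property then yields the global bound $\mathbb{E}[e^{\lambda S_n}] \le \exp(\lambda^{2} D_n)$ on that range of $\lambda$.

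The final step is the optimization. Plugging this into the Chernoff bound gives $\mathbb{P}(S_n > \varepsilon)\le \exp(-\lambda\varepsilon+\lambda^{2}D_n)$ subject to $0<\lambda\le 1/C$. The unconstrained minimizer is $\lambda^{\star}=\varepsilon/(2D_n)$; one must then split into two regimes according to whether $\lambda^{\star}\le 1/C$ (i.e.\ $\varepsilon\le 2D_n/C$, the Gaussian regime giving $\exp(-\varepsilon^{2}/(4D_n))$) or $\lambda^{\star}>1/C$ (i.e.\ the Poissonian regime in which $\lambda=1/C$ is chosen and the bound becomes $\exp(-\varepsilon/(2C))$). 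Combining both cases using the elementary inequality $\min(a,b)\ge ab/(a+b)$ with $a=\varepsilon^{2}/(4D_n)$ and $b=\varepsilon/(2C)$ collapses the two regimes into the single Bernstein-type estimate $\exp(-\varepsilon^{2}/(4D_n+2C\varepsilon))$, which matches the statement exactly.

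The main technical obstacle is calibrating the one-step exponential moment bound so that the final constants in the denominator come out as $4$ and $2$ rather than anything larger: this requires choosing the right Taylor-type inequality for $e^{u}$ and making sure the admissible range for $\lambda$ is exactly $[0,1/C]$, so that the min-of-two-regimes trick produces the advertised denominator. Aside from this bookkeeping, the argument is a textbook martingale Bernstein inequality and no ergodicity or functional-data structure is involved.
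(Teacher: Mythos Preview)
Your argument is correct and is the standard Chernoff--Cram\'er route to the martingale Bernstein inequality: the one-step estimate $e^{u}-1-u\le u^{2}$ for $|u|\le 1$ gives $\mathbb{E}[e^{\lambda M_t}\mid\F_{t-1}]\le e^{\lambda^2 d_t^2}$ on the range $\lambda\le 1/C$, the tower property yields $\mathbb{E}[e^{\lambda S_n}]\le e^{\lambda^2 D_n}$, and your two-regime optimization together with $\min(a,b)\ge ab/(a+b)$ produces exactly the denominator $4D_n+2C\varepsilon$. There is no comparison to make with the paper's own proof, because the paper does not prove this lemma: it is quoted verbatim from La\"{\i}b and Louani (2016) and used as a black box. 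Your write-up therefore supplies what the paper omits, and the constants you obtain match the cited statement precisely.
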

\begin{lem}[\cite{LL10}]\label{Lemma 6.2}
Suppose that assumptions (A1)(i), (A2)(i)-(ii) and (A2)(iv) hold true. For $k \in \{1, 2, 3\} $, let
$\mathcal{K}_{t,k}(x)=\mathcal{K}\left(\frac{d_{k}(x,X_{t})}{h_{n,k}}\right)$ and  $M_{j,\mathcal{K},k}=\mathcal{K}^{j}(1)-\displaystyle\int_{0}^{1}(\mathcal{K}^{j})^{\prime}\tau_{0,k}(u)du$, where $1\leq j \leq 2+\kappa$, with $\kappa >0$, we have
\begin{itemize}
\item[$(i)$] $(\phi_{k}(h_{n,k}))^{-1}\mathbb{E}(\mathcal{K}_{t,k}^{j}(x)|\F _{t-1}) = M_{j,\mathcal{K} ,k}f_{t,1}(x) + \mathcal{O}_{a.s.}\left(\psi_{t,x}(h_{n,k})/\phi_{k}(h_{n,k})\right)$;
\item[$(ii)$] $(\phi_{k}(h_{n,k}))^{-1}\mathbb{E}(\mathcal{K}_{1,k}^{j}(x)) = M_{j,\mathcal{K} ,k}f_{1}(x) + o(1)$.
\end{itemize}
\end{lem}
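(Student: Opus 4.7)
The plan is to prove parts (i) and (ii) in parallel, with part (ii) essentially being a simpler version of (i), since the unconditional distribution function $F_{x,k}$ in (A2)(i) plays the role of $F_{x,k}^{\mathcal{F}_{t-1}}$ from (A2)(ii) with the remainder $o(\phi_k(u))$ replacing $\psi_{t,x}(u)$.

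The starting point is the representation, valid for $u\in[0,1]$ since $\mathcal{K}$ has support $[0,1]$ and is of class $\mathcal{C}^{1}$ on it by (A1)(i):
\begin{equation*}
\mathcal{K}^{j}(u)=\mathcal{K}^{j}(1)-\int_{u}^{1}(\mathcal{K}^{j})'(v)\,dv
=\mathcal{K}^{j}(1)-\int_{0}^{1}(\mathcal{K}^{j})'(v)\,\mathds{1}_{\{v\geq u\}}\,dv.
\end{equation*}
Applying this to $u=d_{k}(x,X_{t})/h_{n,k}$ and noting $\mathcal{K}^{j}(u)=\mathcal{K}^{j}(u)\mathds{1}_{\{u\leq 1\}}$, I would then take conditional expectation given $\mathcal{F}_{t-1}$, exchange the integral and the expectation by Fubini's theorem, and rewrite the indicator events in terms of the conditional small-ball function:
\begin{equation*}
\mathbb{E}\bigl[\mathcal{K}^{j}_{t,k}(x)\mid\mathcal{F}_{t-1}\bigr]
=\mathcal{K}^{j}(1)\,F^{\mathcal{F}_{t-1}}_{x,k}(h_{n,k})-\int_{0}^{1}(\mathcal{K}^{j})'(v)\,F^{\mathcal{F}_{t-1}}_{x,k}(h_{n,k}v)\,dv.
\end{equation*}

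The next step is to divide both sides by $\phi_{k}(h_{n,k})$ and substitute the expansion $F^{\mathcal{F}_{t-1}}_{x,k}(u)=\phi_{k}(u)f_{t,1}(x)+\psi_{t,x}(u)$ from (A2)(ii). The boundary term produces $\mathcal{K}^{j}(1)f_{t,1}(x)+\mathcal{O}_{a.s.}(\psi_{t,x}(h_{n,k})/\phi_{k}(h_{n,k}))$ directly. For the integral term, I would split
\begin{equation*}
\frac{F^{\mathcal{F}_{t-1}}_{x,k}(h_{n,k}v)}{\phi_{k}(h_{n,k})}
=f_{t,1}(x)\,\frac{\phi_{k}(h_{n,k}v)}{\phi_{k}(h_{n,k})}+\frac{\psi_{t,x}(h_{n,k}v)}{\phi_{k}(h_{n,k}v)}\cdot\frac{\phi_{k}(h_{n,k}v)}{\phi_{k}(h_{n,k})}.
\end{equation*}
Using (A2)(iv) the first ratio tends to $\tau_{0,k}(v)$ uniformly in $v\in[0,1]$, and the integrability condition $\bigl|\int_{0}^{1}(\mathcal{K}^{j})'(v)\tau_{0,k}(v)\,dv\bigr|<\infty$ together with (A1)(i) justifies passing to the limit inside the integral (dominated convergence, controlling $(\mathcal{K}^{j})'$ by (A1)(i) and the ratio by a bounded limit). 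The second term is handled using the almost-sure boundedness of $\psi_{t,x}(u)/\phi_{k}(u)$ and the convergence of $\phi_{k}(h_{n,k}v)/\phi_{k}(h_{n,k})$, so that the full integral contributes $-f_{t,1}(x)\int_{0}^{1}(\mathcal{K}^{j})'(v)\tau_{0,k}(v)\,dv$ plus an $\mathcal{O}_{a.s.}(\psi_{t,x}(h_{n,k})/\phi_{k}(h_{n,k}))$ remainder. Assembling the pieces recovers the definition of $M_{j,\mathcal{K},k}$ and delivers part (i).

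For part (ii), I would repeat the same integration-by-parts / Fubini argument but with the unconditional $F_{x,k}$, invoking (A2)(i) in place of (A2)(ii). The error term $o(\phi_{k}(h_{n,k}))$ being uniform in $x$ yields an $o(1)$ remainder after dividing by $\phi_{k}(h_{n,k})$, and (A2)(iv) again identifies the integral against $\tau_{0,k}$, giving $M_{j,\mathcal{K},k}f_{1}(x)+o(1)$.

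The principal technical obstacle will be the uniform control of the $\psi_{t,x}(h_{n,k}v)/\phi_{k}(h_{n,k})$ term inside the integral: showing that one may interchange the limit and the integral and that the resulting error is of the claimed order $\mathcal{O}_{a.s.}(\psi_{t,x}(h_{n,k})/\phi_{k}(h_{n,k}))$ rather than something larger. This relies crucially on the almost-sure boundedness of $\psi_{t,x}(u)/\phi_{k}(u)$ for $x\in\mathcal{C}$ and on the integrability of $(\mathcal{K}^{j})'\tau_{0,k}$ guaranteed by (A1)(i) and (A2)(iv); the key is to keep $\psi_{t,x}(h_{n,k})$ as the benchmark rate and absorb all $v$-dependence into a bounded factor.
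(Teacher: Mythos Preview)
The paper does not supply its own proof of this lemma: it is quoted directly from La\"{\i}b and Louani (2010) and used as a black box throughout. Your argument---writing $\mathcal{K}^{j}(u)=\mathcal{K}^{j}(1)-\int_{u}^{1}(\mathcal{K}^{j})'(v)\,dv$, applying Fubini to obtain the representation in terms of $F^{\mathcal{F}_{t-1}}_{x,k}(h_{n,k}v)$, and then invoking (A2)(ii) and (A2)(iv)---is exactly the standard route used in the cited reference, so your proposal is correct and coincides with the original proof.
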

\begin{lem}\label{Lemma 7.1.}
Suppose that assumptions (A1),(A2)(i)-(vi),(A3)(v),  and condition (\ref{2}) is satisfied.  
Then, 
\begin{equation}\label{sup1}
\underset{n\rightarrow\infty}{\lim}\sup_{x\in \mathcal{C}}\left|m_{n,0}^{[1]}(x)-\pi(x)\right|=0 \quad  a.s.
\end{equation}
\end{lem}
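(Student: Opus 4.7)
\bigskip
\noindent\textbf{Proof proposal for Lemma \ref{Lemma 7.1.}.}

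The plan is to write the classical bias/variance decomposition
$$
m_{n,0}^{[1]}(x)-\pi(x)
=\bigl(m_{n,0}^{[1]}(x)-\overline{m}_{n,0}^{[1]}(x)\bigr)+\bigl(\overline{m}_{n,0}^{[1]}(x)-\pi(x)\bigr),
$$
where $\overline{m}_{n,0}^{[1]}(x)=\frac{1}{n\mathbb{E}[K_1(x)]}\sum_{t=1}^n\mathbb{E}[\delta_tK_t(x)\mid\mathcal{F}_{t-1}]$, and control each piece uniformly in $x\in\mathcal{C}$ via a covering/union-bound argument on $\mathcal{C}$ using the entropy number $N(\eta,\mathcal{C},d_\mathcal{C})$.

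\medskip
\noindent\emph{Deterministic (bias-type) piece.} The MAR condition (\ref{pieq}) yields $\mathbb{E}[\delta_tK_t(x)\mid\mathcal{F}_{t-1}]=\mathbb{E}[\pi(X_t)K_t(x)\mid\mathcal{F}_{t-1}]$. Writing $\pi(X_t)=\pi(x)+(\pi(X_t)-\pi(x))$ and using $\mathrm{supp}(K)=[0,1]$, so that $K_t(x)\neq0$ forces $d_1(x,X_t)\leq h_1$, assumption (A3)(v) gives $|\pi(X_t)-\pi(x)|K_t(x)=o(1)\,K_t(x)$ uniformly on $\mathcal{C}$. Hence
$$
\overline{m}_{n,0}^{[1]}(x)=\pi(x)\cdot\frac{1}{n\mathbb{E}[K_1(x)]}\sum_{t=1}^n\mathbb{E}[K_t(x)\mid\mathcal{F}_{t-1}]+o(1).
$$
Invoking Lemma \ref{Lemma 6.2} with $j=1$ for both $\mathbb{E}[K_t(x)\mid\mathcal{F}_{t-1}]$ and $\mathbb{E}[K_1(x)]$, and then combining with (A2)(iii) (ergodic averaging of the $f_{t,1}$), (A2)(ii) (negligibility of $n^{-1}\sum_t\psi_{t,x}(h_1)$) and (A2)(vi) (lower bound $f_1(x)\geq\theta_0$ preventing the denominator from vanishing), the ratio converges a.s.\ to $1$ uniformly in $x\in\mathcal{C}$, giving $\sup_{x\in\mathcal{C}}|\overline{m}_{n,0}^{[1]}(x)-\pi(x)|\to0$ a.s.

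\medskip
\noindent\emph{Stochastic (martingale) piece.} The summands $\xi_t(x):=\{n\mathbb{E}[K_1(x)]\}^{-1}\bigl(\delta_tK_t(x)-\mathbb{E}[\delta_tK_t(x)\mid\mathcal{F}_{t-1}]\bigr)$ form a martingale-difference array with respect to $(\mathcal{F}_t)$. Because $\delta_t\in\{0,1\}$ and $K$ is bounded, Lemma \ref{Lemma 6.2}(ii) together with (A2)(vi) yields $|\xi_t(x)|\leq C/(n\phi_1(h_1))$ and $\mathbb{E}[\xi_t(x)^2\mid\mathcal{F}_{t-1}]\leq C/(n^2\phi_1(h_1))$. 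Applying the Bernstein-type bound of Lemma \ref{Lemma 6.1} with $\varepsilon=\lambda_n$ gives, at each fixed $x$,
$$
\mathbb{P}\bigl(|m_{n,0}^{[1]}(x)-\overline{m}_{n,0}^{[1]}(x)|>\lambda_n\bigr)\leq 2\exp\bigl[-\lambda_n^2\ell_n^{-2}\,\mathcal{O}(n\phi_1(h_1))\bigr],
$$
where the $\ell_n^{-2}$ factor is trivially absorbed into the constant (here truncation is not needed since $\delta_tK_t(x)$ is already bounded).

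\medskip
\noindent\emph{Uniformity and Borel--Cantelli.} Cover $\mathcal{C}$ by balls of radius $\eta=\eta_n=o(h_1)$ centred at $c_1,\ldots,c_{N_n}$ with $N_n=N(\eta_n,\mathcal{C},d_\mathcal{C})$; for $x\in\mathcal{C}$, let $c(x)$ denote the closest centre. The Hölder assumption (A1)(ii) on $K$ together with $\eta=o(h_1)$ controls $|K_t(x)-K_t(c(x))|$ and, combined with Lemma \ref{Lemma 6.2}(ii), the deviation $|\mathbb{E}[K_1(x)]-\mathbb{E}[K_1(c(x))]|$; the continuity (A3)(v) controls $|\pi(x)-\pi(c(x))|$. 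One therefore obtains
$$
\sup_{x\in\mathcal{C}}|m_{n,0}^{[1]}(x)-\pi(x)|\leq\max_{1\leq k\leq N_n}|m_{n,0}^{[1]}(c_k)-\pi(c_k)|+o(1),\qquad\text{a.s.}
$$
A union bound combined with the exponential estimate above reduces the problem to summability of $N_n\exp[-\lambda_n^2\ell_n^{-2}\mathcal{O}(n\phi_1(h_1))]$, which is precisely granted by condition (\ref{2}); Borel--Cantelli concludes.

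\medskip
\noindent\emph{Main obstacle.} The delicate step is the uniform handling of the covering argument in the infinite-dimensional space $\mathcal{E}$: one must simultaneously reconcile the Hölder regularity of $K$ (expressed via $d_1$) with the metric $d_\mathcal{C}$ used to construct the $\eta$-net, and ensure that the resulting fluctuation $|K_t(x)-K_t(c(x))|$ and the ratio $\mathbb{E}[K_1(x)]/\mathbb{E}[K_1(c(x))]$ are both negligible relative to $\lambda_n$. This is where the calibration $\eta=o(h_1)$ and the size condition (\ref{2}) on $\log N(\eta,\mathcal{C},d_\mathcal{C})$ become essential.
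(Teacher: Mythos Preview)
Your proposal is correct and follows essentially the same route as the paper: the same bias/variance split $m_{n,0}^{[1]}-\pi=(m_{n,0}^{[1]}-\overline{m}_{n,0}^{[1]})+(\overline{m}_{n,0}^{[1]}-\pi)$; the bias term handled via (A3)(v) (double conditioning on $\mathcal{G}_{t-1}$, continuity of $\pi$ on the support of the kernel) and the ergodic averaging of Lemma \ref{Lemma 6.2} / (A2)(ii)--(iii),(vi), which is exactly the content of Lemma~7 in La\"{\i}b--Louani (2011) that the paper invokes; the stochastic term handled by an $\eta$-net on $\mathcal{C}$, the H\"older regularity (A1)(ii) for the oscillation between $x$ and its centre, the Bernstein-type bound of Lemma \ref{Lemma 6.1} at each centre, a union bound over $N(\eta,\mathcal{C},d_\mathcal{C})$ points, and Borel--Cantelli via condition (\ref{2}). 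One small remark: the conditioning step $\mathbb{E}[\delta_tK_t(x)\mid\mathcal{F}_{t-1}]=\mathbb{E}[\pi(X_t)K_t(x)\mid\mathcal{F}_{t-1}]$ is justified in the paper via (A3)(v) (i.e.\ $\mathbb{E}[\delta_t\mid\mathcal{G}_{t-1}]=\pi(X_t)$) rather than the bare MAR identity (\ref{pieq}), since one needs $\mathbb{E}[\delta_t\mid X_t,\mathcal{F}_{t-1}]=\pi(X_t)$; you do invoke (A3)(v) elsewhere, so this is only a matter of attribution.
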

\begin{proof}
Note that
\begin{equation}\label{sup****}
\sup_{x\in \mathcal{C}}| m^{[1]}_{n,0}(x)|\leq\sup_{x\in \mathcal{C}}|m^{[1]}_{n,0}(x)-\overline{m}^{[1]}_{n,0}(x)\color{black}|+\sup_{x\in \mathcal{C}}|\overline{m}^{[1]}_{n,0}(x)|.
\end{equation}
Then, the first term in right-hand side of (\ref{sup****}) could be decomposed as follows.  For $\eta >0$ and $B(c_{k}, \eta):=\{x \in \mathcal{C}: d_{\mathcal{C}}(x, c_{k})<\eta \}$, we have
\begin{eqnarray}\label{decomp1}
\displaystyle{\sup_{x\in \mathcal{C}}}\left|m_{n,0}^{[1]}(x)-\overline{m}_{n,0}^{[1]}(x)\right|&\leq& \displaystyle{\max_{1\leq k \leq N(\eta, \mathcal{C}, d_{\mathcal{C}})}\sup_{x \in B(c_{k}, \eta)}}\left|m_{n,0}^{[1]}(x)-\overline{m}_{n,0}^{[1]}(x) \right| \nonumber\\
& \leq&  \displaystyle{\max_{1\leq k \leq N(\eta, \mathcal{C}, d_{\mathcal{C}})}\sup_{x \in B(c_{k}, \eta)}}\left|m_{n,0}^{[1]}(x)-m_{n,0}^{[1]}(c_{k}) \right|+\displaystyle{\max_{1\leq k \leq N(\eta, \mathcal{C}, d_{\mathcal{C}})}}\left|m_{n,0}^{[1]}(c_{k})-\overline{m}_{n,0}^{[1]}(c_{k}) \right|\nonumber\\
& & \quad +\displaystyle{\max_{1\leq k \leq N(\eta, \mathcal{C}, d_{\mathcal{C}})}\sup_{x \in B(c_{k}, \eta)}}\left|\overline{m}_{n,0}^{[1]}(x)-\overline{m}_{n,0}^{[1]}(c_{k}) \right|  \nonumber\\
&=: &\mathcal{H}_{1}+\mathcal{H}_{2}+\mathcal{H}_{3}.
\end{eqnarray}
We start with studying the term $\mathcal{H}_{1}.$
Observe that, for any $x \in B(c_{k}$, $\eta)$ and making use of the definition of $m_{n,0}^{[1]}(x)$ in \eqref{*}, one gets
$$\begin{aligned}
m_{n,0}^{[1]}(x)-m_{n,0}^{[1]}(c_{k})&=\frac{1}{n \mathbb{E}(K_{1}(x))\mathbb{E}(K_{1}(c_{k}))}\displaystyle{\sum _{t=1}^{n}}\delta_{t}\left[K_{t}(x)\mathbb{E}(K_{1}(c_{k}))-K_{t}(c_{k})\mathbb{E}(K_{1}(x)\right]\\
&=\frac{1}{n \mathbb{E}(K_{1}(x))}\displaystyle{\sum _{t=1}^{n}}\delta_{t}\left[K_{t}(x)-K_{t}(c_{k})\right]\\
& \quad +\frac{1}{n \mathbb{E}(K_{1}(x))\mathbb{E}(K_{1}(c_{k}))}\displaystyle{\sum _{t=1}^{n}}\delta_{t}K_{t}(c_{k})\left[\mathbb{E}(K_{1}(c_{k}))-\mathbb{E}(K_{1}(x))\right]\\
&= \mathcal{H}_{1,1}+\mathcal{H}_{1,2}.
\end{aligned}
$$
Condition (A1)(ii) and the fact that $|\delta_{t}|<1$ almost surely, one can bound the term $\mathcal{H}_{1,1}$ from above as follows: \\
$\begin{aligned}
\left|\mathcal{H}_{1,1}\right|
&\leq \frac{1}{n \left|\mathbb{E}(K_{1}(x))\right|}\displaystyle{\sum _{t=1}^{n}}\left| \delta_{t}\right| \left|K_{t}(x)-K_{t}(c_{k})\right|\\
& \leq \frac{1\color{black}}{n \left|\mathbb{E} (K_{1}(x))\right|} \displaystyle{\sum _{t=1}^{n}}|K_{t}(x)-K_{t}(c_{k})|\\
& \leq \frac{ 1\color{black}}{n \left|\mathbb{E} (K_{1}(x))\right|}  \left( a_{0}\displaystyle{\sum _{t=1}^{n}}\left|\frac{d_{1}\color{black}(x,X_{t})-d_{1}\color{black}(c_{k},X_{t})}{h_{1}}\right|^{\gamma}\right)\\
&\leq \frac{ 1\color{black}}{n \left|\mathbb{E}(K_{1}(x))\right|}  \left( a_{0} n \left[\frac{\eta\color{black}}{h_{1}}\right]^{\gamma}\right)\\
&\leq \frac{  \eta^{\gamma}a_{0}}{h_{1}^{\gamma}|\mathbb{E}(K_{1}(x))|}.
 \end{aligned}$\\
On the other hand, by condition (A1)(iii) and Lemma \ref{Lemma 6.2}, we have
$
\frac{K_{t}(x)}{|\mathbb{E}(K_{1}(x))|}\leq \frac{a_{2}}{a_{1}}=:a_{3}, \; \forall x \in \mathcal{E}.
$
Hence
\begin{equation*}
\left|\mathcal{H}_{1,2}\right|
\leq \frac{1}{n \mathbb{E}(K_{1}(x))}\displaystyle{\sum _{t=1}^{n}}\left|\delta_{t}\right| \left|\frac{K_{t}(c_{k})}{\mathbb{E}(K_{1}(c_{k}))}\right|  \left|\mathbb{E}(K_{1}(c_{k}))-\mathbb{E}(K_{1}(x))\right|.
\end{equation*}
Using condition (A1)(ii)  and the almost sure boundedness of $\delta$, we get 
$
\left|\mathcal{H}_{1,2}\right|
 \leq \frac{ a_{3}\eta\color{black}^{\gamma}a_{0}}{h_{1}^{\gamma}\left|\mathbb{E}(K_{1}(x))\right|}.$
Therefore, for any $1\leq k \leq N(\eta, \mathcal{C}, d_{\mathcal{C}})$, we have $|m_{n,0}^{[1]}(x)-m_{n,0}^{[1]}(c_{k}) | \leq \frac{(1+a_{3})a_{0}\eta^{\gamma}}{h_{1}^{\gamma}a_{1}}.$
Then, by taking $\eta=\eta_{n}=o(h_{1})$, one gets
\begin{equation}\label{h1}
\mathcal{H}_{1}=\displaystyle{\max_{1\leq k \leq N(\eta, \mathcal{C}, d_{\mathcal{C}})}\sup _{x \in B(c_{k}, \eta)}}\left|m_{n,0}^{[1]}(x)-m_{n,0}^{[1]}(c_{k})\right|=o(1).
\end{equation}
Similar to $\mathcal{H}_{1}$, one can show that when (A3)(v) holds true, the term  $\mathcal{H}_{3}$   in \eqref{decomp1} is negligible. That is
\begin{equation}\label{h3}
\mathcal{H}_{3}=\displaystyle{\max_{1\leq k \leq N(\eta, \mathcal{C}, d_{\mathcal{C}})}\sup _{x \in B(c_{k}, \eta\color{black})}}\left|\overline{m}_{n,0}^{[1]}(x)-\overline{m}_{n,0}^{[1]}(c_{k})\right|=o(1).
\end{equation}
Next, we turn our attention to the study of the term $\mathcal{H}_{2}.$    Note that
\begin{equation*}
 m_{n,0}^{[1]}(c_{k})-\overline{m}_{n,0}^{[1]}(c_{k})=\frac{1}{n\mathbb{E}(K_{1}(c_{k}))}\sum _{t=1}^{n}L_{n,t}(c_{k}),
\end{equation*}
where $L_{n,t}(c_{k})=\delta_{t}K_{t}(c_{k})-\mathbb{E}(\delta_{t}K_{t}(c_{k})|\F_{t-1})$ is a martingale difference with respect to the $\sigma$-field $\F_{t-1}$, for $t= 1, 2,\ldots, n.$  Then, we can use Lemma \ref{Lemma 6.1} to  handle  the convergence of $m_{n,0}^{[1]}(c_{k})-\overline{m}_{n,0}^{[1]}(c_{k})$.  To be able to use this  Lemma,  we need first to check its conditions. \\ 
First, since $|\delta_{t}|<1$ almost surely and  by using assumptions (A1)(iii), it follows that \\  
$$\begin{aligned}
|L_{n,t}(c_{k})| 
& \leq 2\left|\delta_{t}K_{t}(c_{k})\right|\\ & \leq 2a_{2}=:C.
\end{aligned}$$
In addition, making use of $C_{r}$-inequality,  assumptions (A1)(i), (A2)(iv), (A3)(v), Lemma \ref{Lemma 6.2}(i)  and 
 the fact that $f_{t,1} $ is bounded
by a  deterministic quantity  $b_{t}(x)$, one gets \\
$\begin{aligned}
\mathbb{E}(L^{2}_{n,t}(c_{k})|\mathcal{F}_{t-1})
&\leq 2\mathbb{E}(\delta_{t}K^{2}_{t}(c_{k})|\F_{t-1}) \\
& \leq 2 (\pi(x)+o(1))\mathbb{E}(K^{2}_{t}(c_{k})|\F_{t-1}) \\
& \leq  (2\pi(x)+o(1))\phi_{1}(h_{1})(M_{2,K,1}b_{t}(x)+1)=:d^{2}_{t}. 
\end{aligned}$ \\
Then, by assumption (A2)(v), we can write  $n^{-1}D_{n}=(2\pi(x)+o(1)) \phi_{1}(h_{1})[M_{2,K,1}D(x)+o_{a.s.}(1)]$, which  means that $D_{n}=\mathcal{O}(n\phi_{1}(h_{1}))$. 
Thus, by applying Lemma \ref{Lemma 6.1}, we have 
$$\begin{aligned}
\mathbb{P}\left(\displaystyle{\max_{1\leq k \leq N(\eta, \mathcal{C}, d_{\mathcal{C}})}}\left|m_{n,0}^{[1]}(c_{k})-\overline{m}_{n,0}^{[1]}(c_{k})\right|>\overline{\lambda}_{n}\right)& \leq \displaystyle{\sum_{k=1}^{N(\eta, \mathcal{C}, d_{\mathcal{C}})}}\mathbb{P}\left(\left|m_{n,0}^{[1]}(c_{k})-\overline{m}_{n,0}^{[1]}(c_{k})\right|>\overline{\lambda}_{n}\right)\\
& \leq \displaystyle{\sum_{k=1}^{N(\eta, \mathcal{C}, d_{\mathcal{C}})}}\mathbb{P}\left(\left|\sum_{t=1}^n L_{n, t}(c_{k})\right|> \overline{\lambda}_{n} n\mathbb{E}[K_{1}(c_{k})]\right)\\
&\leq 2N(\eta, \mathcal{C}, d_{\mathcal{C}}) \exp \left( -\frac{\left(\overline{\lambda}_{n} n\mathbb{E}[K_{1}(c_{k})]\right)^{2}}{4D_{n}+2C \overline{\lambda}_{n} n\mathbb{E}[K_{1}(c_{k})] } \right) \\
&\leq 2N(\eta, \mathcal{C}, d_{\mathcal{C}}) \exp \left( -\frac{\mathcal{O}(n\phi_{1}(h_{1}))^{2}\overline{\lambda}_{n}^{2}}{\mathcal{O}(n\phi_{1}(h_{1}))+\overline{\lambda}_{n} \mathcal{O}(n\phi_{1}(h_{1})) } \right)\\
& \leq 2N(\eta, \mathcal{C}, d_{\mathcal{C}}) \exp \left( -\frac{\mathcal{O}(n\phi_{1}(h_{1})) \overline{\lambda}_{n}^{2}}{1+\overline{\lambda}_{n}} \right)\\
& \leq 2\exp \left( -\mathcal{O}(n\phi_{1}(h_{1}))\overline{\lambda}_{n}^{2} \left[1-\frac{\log N(\eta, \mathcal{C}, d_{\mathcal{C}}) }{\mathcal{O}(n\phi_{1}(h_{1}))\overline{\lambda}_{n}^{2}}\right]\right),
\end{aligned}
$$ 
where $\overline{\lambda}_{n}=\lambda_{n}\ell_n^{-1}$, $\lambda_{n}$ and $\ell_n$ are defined in Theorem \ref{T1}. Hence,  using condition (\ref{2}), one gets
$$
\displaystyle{\sum_{n\geq 1}}\mathbb{P}\left( \displaystyle{\max_{1\leq k \leq N(\eta, \mathcal{C}, d_{\mathcal{C}})}}\left|m_{n,0}^{[1]}(c_{k})-\overline{m}_{n,0}^{[1]}(c_{k})\right|>\overline{\lambda}_{n}\right)<\infty.
$$ 
Then, using Borel-Cantelli Lemma we have 
\begin{equation}\label{sup21++}
  \displaystyle{\max_{1\leq k \leq N(\eta, \mathcal{C}, d_{\mathcal{C}})}}\left|m_{n,0}^{[1]}(c_{k})-\overline{m}_{n,0}^{[1]}(c_{k})\right|= \mathcal{O}_{a.s.}(\overline{\lambda}_{n})=o_{a.s.}(1).
 \end{equation}
Combining \eqref{decomp1} with \eqref{h1}, \eqref{h3} and \eqref{sup21++}, one deduces that the first term in the right-hand side of inequality \eqref{sup****} converges to zero as $n$ goes to infinity. Regarding the second term  $\displaystyle\sup_{x\in \mathcal{C}}|\overline{m}^{[1]}_{n,0}(x)|$, note that assumption(A3)(v), along with a double conditioning with respect to $\mathcal{G}_{t-1}$, allows to obtain
$$\begin{aligned}\overline{m}^{[1]}_{n,0}(x)-\pi(x)&=\frac{1}{n \mathbb{E}( K_1(x))} \sum_{t=1}^n \mathbb{E}\left\{\mathbb{E}\left[\delta_t K_t(x) \mid \G_{t-1}\right] \mid \F_{t\color{black}-1}\right\}-\pi(x)\\
&=\frac{1}{n \mathbb{E} (K_1(x))} \sum_{t=1}^n \mathbb{E}[(\pi(x)+o(1)) K_t(x) \mid \F_{t-1}]-\pi(x)\\
&=(\pi(x)+o(1)) \frac{1}{n \mathbb{E} (K_1(x))} \sum_{t=1}^n \mathbb{E}\left(K_t(x) \mid \F_{t-1}\right)-\pi(x)\\
&=\pi(x)\left[\frac{1}{n \mathbb{E} (K_1(x))} \sum_{t=1}^n \mathbb{E}\left(K_t(x) \mid \F_{t-1}\right)-1\right]+o(1),
\end{aligned}$$
 where $o(1)$ is uniformly in $x$. Finally, Lemma 7 in \cite{LL11} allows to conclude the proof of this Lemma.
 \end{proof}
\begin{lem}\label{Lemmacvn}
Suppose that assumptions (A1), (A2)(i)-(ii),(iv)-(vi), (A3)(i),(v), and conditions (\ref{1}) and  (\ref{2}) are satisfied. Then, we have 
 \begin{equation*}
\displaystyle{\sup_{x\in \mathcal{C}}}\left|V_{n,0}(x)\right|=\mathcal{O}_{a.s.}\left(\lambda_{n}\right) \,\ \hbox{as} \quad n \rightarrow \infty \text{.}
\end{equation*}
\end{lem}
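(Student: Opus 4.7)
The plan parallels the proof of Lemma \ref{Lemma 7.1.}, with the additional difficulty that $M_t(x) := \delta_t\sqrt{U(X_t)}\varepsilon_t K_t(x)$ is not uniformly bounded due to the factor $\sqrt{U(X_t)}\varepsilon_t$; this will force a truncation argument at level $\ell_n$. I would first check that $(M_t(x))_t$ is a martingale difference sequence with respect to $(\G_{t-1})_t$: since $X_t$ is $\G_{t-1}$-measurable, $\sqrt{U(X_t)}$ and $K_t(x)$ pull out of the conditional expectation, and conditioning first on $(\G_{t-1}, Y_t)$ together with the MAR assumption (A3)(v) and $\mathbb{E}(\varepsilon_t \mid \G_{t-1}) = 0$ yields $\mathbb{E}(\delta_t\varepsilon_t \mid \G_{t-1}) = \pi(X_t)\mathbb{E}(\varepsilon_t \mid \G_{t-1}) = 0$.

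Next, I would cover $\mathcal{C}$ by $N := N(\eta_n, \mathcal{C}, d_{\mathcal{C}})$ balls $B(c_k, \eta_n)$ with $\eta_n = o(h_1)$ and write
\begin{equation*}
\sup_{x \in \mathcal{C}} |V_{n,0}(x)| \leq \max_k \sup_{x \in B(c_k,\eta_n)} |V_{n,0}(x) - V_{n,0}(c_k)| + \max_k |V_{n,0}(c_k)|.
\end{equation*}
The equicontinuity term is handled exactly as the $\mathcal{H}_1$ term in the proof of Lemma \ref{Lemma 7.1.}: the H\"older condition (A1)(ii), the uniform kernel bounds (A1)(iii), and Lemma \ref{Lemma 6.2} majorize it by a constant multiple of $(\eta_n/h_1)^\gamma \phi_1(h_1)^{-1} n^{-1}\sum_t \delta_t\sqrt{U(X_t)}|\varepsilon_t|$. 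By the ergodic theorem and the moment bound in (A3)(i), the empirical average is $\mathcal{O}_{a.s.}(1)$, and choosing $\eta_n$ small enough (compatibly with (\ref{2})) makes this contribution $o_{a.s.}(\lambda_n)$.

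For $\max_k |V_{n,0}(c_k)|$, I would truncate by setting $A_t := \sqrt{U(X_t)}\varepsilon_t$, $A_t^{\leq} := A_t\mathds{1}\{|A_t|\leq \ell_n\}$, $A_t^{>} := A_t - A_t^{\leq}$, and decompose $V_{n,0}(c_k)$ into (i) the centered martingale sum $S_n^{\leq}(c_k) := (n\mathbb{E}[K_1(c_k)])^{-1}\sum_t \{\delta_t A_t^{\leq} K_t(c_k) - \mathbb{E}[\delta_t A_t^{\leq} K_t(c_k) \mid \G_{t-1}]\}$ plus (ii) two truncation remainders involving $\delta_t A_t^{>} K_t(c_k)$ and its conditional expectation, using $\mathbb{E}[\delta_t A_t K_t(c_k) \mid \G_{t-1}] = 0$ to combine the centering terms. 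For (i), Lemma \ref{Lemma 6.2}(i) together with $\mathbb{E}(\varepsilon_t^2 \mid \G_{t-1}) = 1$ gives $|M_t^{\leq}| \leq C\ell_n/\phi_1(h_1)$ and $\sum_t \mathbb{E}[(M_t^{\leq})^2 \mid \G_{t-1}] = \mathcal{O}(n/\phi_1(h_1))$, so Lemma \ref{Lemma 6.1} applied with deviation $\lambda_n$ yields an exponent $-\mathcal{O}(n\phi_1(h_1))\lambda_n^2 \ell_n^{-2}$; a union bound over the $N$ centers, condition (\ref{2}), and the Borel--Cantelli lemma then produce $\max_k |S_n^{\leq}(c_k)| = \mathcal{O}_{a.s.}(\lambda_n)$. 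The two truncation remainders are controlled via Markov's and H\"older's inequalities using the $\rho$-th moment bound in (A3)(i).

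The main obstacle is the simultaneous calibration of $\ell_n$: it must be large enough that the two truncation-tail contributions are almost surely $\mathcal{O}(\lambda_n)$ (which pushes $\ell_n$ upward via the $\rho$-th moment decay), yet small enough that the exponent $n\phi_1(h_1)\lambda_n^2 \ell_n^{-2}$ in Lemma \ref{Lemma 6.1} dominates $\log N(\eta_n, \mathcal{C}, d_{\mathcal{C}})$ (which pushes $\ell_n$ downward). The specific choice $\ell_n = (\log n / (\lambda_n\phi_1(h_1)^{(\rho-1)/\rho}))^{1/(\rho-1)}$ introduced in Theorem \ref{T1} is precisely the balance point between these two competing constraints, and verifying it meets both is the bookkeeping core of the argument.
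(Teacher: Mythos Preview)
Your approach is correct and lands on the same three ingredients as the paper: a truncation at level $\ell_n$, a covering argument over $\mathcal{C}$, and the exponential martingale inequality of Lemma~\ref{Lemma 6.1}. The organization, however, is different. The paper truncates \emph{first}, writing
\[
V_{n,0}(x)=\bigl(V_{n,0}(x)-V_{n,0}^{\top}(x)\bigr)+\widetilde V_{n,0}(x)+V_{n,0}^{-}(x)
\]
uniformly in $x$, where $V_{n,0}^{\top}$ replaces $\mathcal L_t:=\delta_t\sqrt{U(X_t)}\varepsilon_t$ by $\mathcal L_t\mathds 1_{\{|\mathcal L_t|\le\ell_n\}}$, $\widetilde V_{n,0}$ carries the $\mathcal F_{t-1}$-conditional expectations of the truncated summands, and $V_{n,0}^{-}$ is the centered truncated sum. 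Because the truncation does not depend on $x$, the tail $(V_{n,0}-V_{n,0}^{\top})$ is shown (Lemma~\ref{LemmaA}) to be \emph{identically zero} almost surely for all large $n$ and all $x$, via $\max_{t\le n}|\mathcal L_t|/\ell_n\to0$; this replaces your Markov/H\"older tail bounds by a clean a.s.\ equality. The centering term $\widetilde V_{n,0}$ is then handled by a direct H\"older estimate (Lemma~\ref{LemmaB}), and only the bounded, centered piece $V_{n,0}^{-}$ is pushed through the covering-plus-exponential-inequality machinery (Lemma~\ref{LemmaB+}). Your route---cover first, control equicontinuity on the untruncated sum via the ergodic theorem, then truncate at the centers---also works, but forces you to make the equicontinuity bound $(\eta_n/h_1)^\gamma\phi_1(h_1)^{-1}$ beat $\lambda_n$, which is a sharper demand on $\eta_n$ than the paper's $\eta_n=o(h_1)$ and needs to be reconciled with condition~(\ref{2}). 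The paper's order of operations sidesteps that extra calibration.
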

 \begin{proof} Let 
\begin{equation}\label{DecVn0}
V_{n,0}(x)=\frac{1}{n \mathbb{E}\left[K_{1}(x)\right]} \sum_{t=1}^{n} \mathcal{L}_{t} K_{t}(x)= \left(  V_{n,0}(x)-V^{\top}_{n,0}(x)  \right)  + \widetilde{V}_{n,0}(x) + V^{-}_{n,0}(x),
\end{equation}
where \\
 $ V^{\top}_{n,0}(x)=\displaystyle{\frac{1}{n \mathbb{E}\left[K_{1}(x)\right]} \sum_{t=1}^{n} \mathcal{L}_{t} \1_{\left(|\mathcal{L}_{t}|\leq\ell_{n}\right)} K_{t}(x)}, \quad
\widetilde{V}_{n,0}(x)= \displaystyle{\frac{1}{n \mathbb{E}\left[K_{1}(x)\right]} \sum_{t=1}^{n}\mathbb{E}( \mathcal{L}_{t} \1_{\left(|\mathcal{L}_{t}|\leq\ell_{n}\right)}\mid \mathcal{F}_{t-1}) K_{t}(x)},
$ \\
$V^{-}_{n,0}(x)=\displaystyle{\frac{1}{n \mathbb{E}\left[K_{1}(x)\right]} \sum_{t=1}^{n} \left(  \mathcal{L}_{t} \1_{\left(|\mathcal{L}_{t}|\leq\ell_{n} \right)}-\mathbb{E}( \mathcal{L}_{t} \1_{\left(|\mathcal{L}_{t}|\leq\ell_{n}\right)}\mid \mathcal{F}_{t-1})\right)\color{black} K_{t}(x)},$ \\ such that $\mathcal{L}_{t}=\delta_{t}\sqrt{U(X_{t})}\varepsilon_{t}$  and $\ell_{n}$ is as defined in Theorem \ref{T1}.
\\
Lemma \ref{LemmaA} allows to conclude that the first term in (\ref{DecVn0}) equals zero almost surely as $n\rightarrow \infty.$ Moreover, making use of Lemma \ref{LemmaB} and Lemma \ref{LemmaB+} below, the second term is $\mathcal{O}\left(\{\ell^{\rho-1}_{n}\phi_{1}(h_{1})^{\rho-1/\rho}\}^{-1}\right)$ and the third term is $\mathcal{O}_{a.s.}\left(\lambda_{n}\right),$ respectively. Finally, since $(\lambda_{n}\ell^{\rho-1}_{n}\phi_{1}(h_{1})^{\rho-1/\rho})^{-1}=o(1)$ for n large enough, the proof of this lemma is achieved.
 \end{proof}
\begin{lem}\label{LemmaB}
Suppose that assumptions  (A1)(i), (A2)(i)-(ii),(iv),(vi), (A3)(i),(v) and condition (\ref{1}) are satisfied. Then
\begin{equation*}
\displaystyle{\sup_{x\in \mathcal{C}}}|\widetilde{V}_{n,0}(x)| = \mathcal{O}\left(\{\ell^{\rho-1}_{n}\phi_{1}(h_{1})^{\rho-1/\rho}\}^{-1}\right).
\end{equation*}
\end{lem}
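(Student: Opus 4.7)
The plan is to exploit a martingale-difference structure for $\mathcal{L}_t=\delta_t\sqrt{U(X_t)}\,\varepsilon_t$ so that the truncation at $\ell_n$ produces only a tail error controllable by Markov's inequality. Combining \eqref{epsilon} with the MAR assumption \eqref{pieq} and (A3)(v) (via a double conditioning on $(\G_{t-1},Y_t)$ and the fact that $\varepsilon_t$ is a function of $(X_t,Y_t)$), one obtains
\[
\E(\mathcal{L}_t\mid\G_{t-1})=\sqrt{U(X_t)}\,\pi(X_t)\,\E(\varepsilon_t\mid\G_{t-1})=0\quad\text{a.s.},
\]
and hence by the tower property $\E(\mathcal{L}_t\mid\F_{t-1})=0$. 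Consequently,
\[
\E(\mathcal{L}_t\1_{|\mathcal{L}_t|\le\ell_n}\mid\F_{t-1})=-\E(\mathcal{L}_t\1_{|\mathcal{L}_t|>\ell_n}\mid\F_{t-1}).
\]

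The next step is the elementary estimate $|\mathcal{L}_t|\1_{|\mathcal{L}_t|>\ell_n}\le|\mathcal{L}_t|^\rho/\ell_n^{\rho-1}$. Using $\delta_t\le 1$ together with the conditional moment bound on $\E(|\varepsilon_t|^\rho\mid\G_{t-1})$ from (A3)(i) and a further conditioning on $\G_{t-1}$, one gets, for a generic constant $C>0$,
\[
\bigl|\E(\mathcal{L}_t\1_{|\mathcal{L}_t|\le\ell_n}\mid\F_{t-1})\bigr|\le\frac{C}{\ell_n^{\rho-1}}\,\E\bigl(U(X_t)^{\rho/2}\mid\F_{t-1}\bigr).
\]
Plugging into the definition of $\widetilde V_{n,0}(x)$, the problem reduces to bounding, uniformly in $x\in\mathcal{C}$, the quantity $S_n(x):=\sum_{t=1}^n\E(U(X_t)^{\rho/2}\mid\F_{t-1})\,K_t(x)$ relative to $n\,\E[K_1(x)]\,\ell_n^{\rho-1}$.

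The main quantitative step is then a H\"older inequality on $S_n(x)$ with conjugate exponents $\rho/(\rho-1)$ and $\rho$. Conditional Jensen raises each summand in the first factor to $\E(U(X_t)^{\rho^2/(2(\rho-1))}\mid\F_{t-1})$, whose integrability is precisely the content of (A3)(i), so the ergodic theorem gives a bound $\mathcal{O}_{a.s.}(n^{(\rho-1)/\rho})$; for the second factor, Lemma \ref{Lemma 6.2} combined with the ergodicity assumption (A2) yields $\sum_{t=1}^n K_t^\rho(x)=\mathcal{O}(n\phi_1(h_1))$, hence $(\sum_{t=1}^n K_t^\rho(x))^{1/\rho}=\mathcal{O}((n\phi_1(h_1))^{1/\rho})$. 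Together with the uniform lower bound $\E[K_1(x)]\ge c\,\phi_1(h_1)$ coming from Lemma \ref{Lemma 6.2}(ii) and (A2)(vi), these combine to
\[
\sup_{x\in\mathcal{C}}|\widetilde V_{n,0}(x)|=\mathcal{O}\!\left(\frac{n^{(\rho-1)/\rho}(n\phi_1(h_1))^{1/\rho}}{n\,\phi_1(h_1)\,\ell_n^{\rho-1}}\right)=\mathcal{O}\!\left(\frac{1}{\ell_n^{\rho-1}\,\phi_1(h_1)^{(\rho-1)/\rho}}\right),
\]
which is the announced rate.

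The step I expect to need the most care is the uniform (in $x\in\mathcal{C}$) control of $\sum_{t=1}^n K_t^\rho(x)$: the pointwise order is immediate from Lemma \ref{Lemma 6.2} together with stationary ergodicity, but upgrading it into a bound uniform over the class $\mathcal{C}$ requires the usual covering/discretization argument exploiting the covering-number hypothesis \eqref{2}, in the same spirit as the treatment of $\mathcal{H}_2$ in the proof of Lemma \ref{Lemma 7.1.}. Once this uniformity is in hand, the rest of the derivation amounts to collecting the pieces assembled above.
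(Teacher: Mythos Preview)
Your proposal is correct and follows essentially the same line as the paper: flip to the tail via the martingale property $\E(\mathcal{L}_t\mid\F_{t-1})=0$, control the tail to extract the factor $\ell_n^{-(\rho-1)}$, and then separate $U(X_t)^{\rho/2}$ from the kernel by H\"older with conjugate exponents $(\rho/(\rho-1),\rho)$, invoking (A3)(i) and the small-ball asymptotics of Lemma~\ref{Lemma 6.2}. Regarding your uniformity concern, note that since $K$ is bounded (A1)(iii) one has $K_t^\rho(x)\le a_2^{\rho-1}K_t(x)$, so the uniform control of $(n\phi_1(h_1))^{-1}\sum_t K_t(x)$ already available via Lemma~7 of \cite{LL11} dispenses with a fresh covering argument.
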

The proof of this Lemma is detailed in the Appendix.
\begin{lem}\label{LemmaB+}   Under assumptions  (A1), (A2)(i)-(ii),(iv)-(vi), (A3)(i),(v)   and condition (\ref{2}), we have, as $n \rightarrow \infty,$
\begin{equation*}
\displaystyle{\sup_{x\in \mathcal{C}}}|V^{-}_{n,0}(x)| =\mathcal{O}_{a.s.}(\lambda_{n}).
\end{equation*}
\end{lem}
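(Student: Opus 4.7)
The plan is to mirror the proof of Lemma~\ref{Lemma 7.1.}: decompose the supremum over $\mathcal{C}$ through an $\eta$-net, dispose of the resulting modulus-of-continuity pieces using the H\"older regularity of $K$, and then apply the exponential inequality of Lemma~\ref{Lemma 6.1} at the centers. The new ingredient compared to Lemma~\ref{Lemma 7.1.} is that the truncation at $\ell_n$, combined with the MAR identity in (A3)(v) and the conditional moments of $\varepsilon_t$ from \eqref{epsilon}, must deliver both a uniform bound on the summands and a sharp bound on the conditional variance.

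Taking $\eta = \eta_n = o(h_1)$, I would write
$$
\sup_{x\in \mathcal{C}}|V^{-}_{n,0}(x)| \le \max_{1\le k\le N(\eta_n,\mathcal{C},d_{\mathcal{C}})}|V^{-}_{n,0}(c_k)| + \max_k \sup_{x\in B(c_k,\eta_n)}|V^{-}_{n,0}(x)-V^{-}_{n,0}(c_k)|.
$$
The second term is handled exactly as $\mathcal{H}_1$ and $\mathcal{H}_3$ in the proof of Lemma~\ref{Lemma 7.1.}: combining (A1)(ii), (A1)(iii), Lemma~\ref{Lemma 6.2}(ii), and the deterministic truncation bound $|\mathcal{L}_t \mathbf{1}_{|\mathcal{L}_t|\le \ell_n}| \le \ell_n$ produces a bound of order $\ell_n \eta_n^\gamma/h_1^\gamma$, which is $o(\lambda_n)$ once $\eta_n$ is chosen small enough.

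For the first term, view $V^{-}_{n,0}(c_k) = \sum_{t=1}^{n} \Lambda_{n,t}(c_k)$ as a sum of martingale differences with respect to $\mathcal{G}_{t-1}$ (the filtration under which $K_t(c_k)$ is measurable; any discrepancy with the $\mathcal{F}_{t-1}$-centering appearing in the definition of $V^{-}_{n,0}$ is deterministic and of lower order by double conditioning, using \eqref{epsilon} and (A3)(v)). To invoke Lemma~\ref{Lemma 6.1}, I would check two ingredients. \emph{Boundedness:} $|\Lambda_{n,t}(c_k)| \le 2a_2\ell_n/\bigl(n\mathbb{E}[K_1(c_k)]\bigr) = \mathcal{O}\bigl(\ell_n/(n\phi_1(h_1))\bigr)$ by (A1)(iii), (A2)(vi) and Lemma~\ref{Lemma 6.2}(ii). \emph{Conditional variance:} starting from the pointwise bound $\mathbb{E}[(\mathcal{L}_t\mathbf{1}_{|\mathcal{L}_t|\le \ell_n})^2\mid X_t]\le \pi(X_t)U(X_t)\le U(X_t)$ (from \eqref{epsilon}, (A3)(v), and the MAR assumption), using (A4)(iii) to replace $U(X_t)$ by $U(c_k)+o(1)$ near $c_k$, and applying Lemma~\ref{Lemma 6.2}(i) to $\mathbb{E}[K_t^2(c_k)\mid\mathcal{F}_{t-1}]$, I obtain a conditional second moment $d_t^2$ whose sum satisfies $D_n = \mathcal{O}(1/(n\phi_1(h_1)))$ almost surely via (A2)(v).

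Lemma~\ref{Lemma 6.1} applied with threshold $\lambda_n$, followed by a union bound over the $N(\eta_n,\mathcal{C},d_{\mathcal{C}})$ centers, yields
$$
\mathbb{P}\!\left(\max_k |V^{-}_{n,0}(c_k)| > \lambda_n\right) \le 2 N(\eta_n,\mathcal{C},d_{\mathcal{C}})\exp\!\left(-\frac{\lambda_n^2\, n\phi_1(h_1)}{C_1 + C_2\, \ell_n \lambda_n}\right).
$$
Since $\ell_n\to\infty$ and $\lambda_n\to 0$, the denominator is dominated by $C_2\ell_n\lambda_n$ for $n$ large, so the exponent is bounded above by $-c\,\lambda_n^2 \ell_n^{-2}\, n\phi_1(h_1)$, which makes the right-hand side summable by condition~\eqref{2}. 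Borel-Cantelli then gives $\max_k |V^{-}_{n,0}(c_k)| = \mathcal{O}_{a.s.}(\lambda_n)$, and combining with the modulus-of-continuity bound finishes the proof. The main obstacle is the conditional-variance step: controlling $D_n$ under mere ergodicity (without mixing) and through the MAR projection requires the coordinated use of (A3)(v), \eqref{epsilon}, (A4)(iii), and the small-ball structure of the conditional densities $f_{t,1}$ in (A2)(ii),(v),(vi). A secondary subtlety is calibrating the Bernstein exponent to the precise form $\lambda_n^2 \ell_n^{-2}\mathcal{O}(n\phi_1(h_1))$ demanded by condition~\eqref{2}, which is exactly what $\ell_n\to\infty$ buys us when absorbing the linear $\ell_n\lambda_n$ term in the Bernstein denominator.
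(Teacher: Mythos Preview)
Your overall strategy---$\eta$-net decomposition, H\"older continuity for the modulus-of-continuity pieces, and the exponential inequality of Lemma~\ref{Lemma 6.1} at the centers---is exactly the route the paper takes (by reference to Lemma~C in \cite{Chaouch2019}), and the calibration against condition~\eqref{2} is the right endpoint.

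There is one genuine slip in the conditional-variance step. You appeal to (A4)(iii) to replace $U(X_t)$ by $U(c_k)+o(1)$, but (A4) is \emph{not} among the hypotheses of the lemma, so this argument is not available. The remedy is simpler than what you wrote and is in fact the whole point of the truncation: bound the squared summand crudely by the truncation level,
\[
\bigl(\mathcal{L}_t\,\mathbf{1}_{\{|\mathcal{L}_t|\le \ell_n\}}\bigr)^2 \le \ell_n^2,
\]
so that $\mathbb{E}[\Lambda_{n,t}^2(c_k)\mid\mathcal{F}_{t-1}] \le C\,\ell_n^2\,\mathbb{E}[K_t^2(c_k)\mid\mathcal{F}_{t-1}]/(n\mathbb{E}[K_1(c_k)])^2$. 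Lemma~\ref{Lemma 6.2}(i) together with (A2)(ii),(v) then gives $D_n=\mathcal{O}\bigl(\ell_n^2/(n\phi_1(h_1))\bigr)$, not $\mathcal{O}\bigl(1/(n\phi_1(h_1))\bigr)$. With this $D_n$ the Bernstein exponent is directly
\[
\frac{\lambda_n^2}{4D_n+2C\lambda_n}=\frac{\lambda_n^2\,n\phi_1(h_1)}{\mathcal{O}(\ell_n^2)+\mathcal{O}(\ell_n\lambda_n)}=\lambda_n^2\ell_n^{-2}\,\mathcal{O}(n\phi_1(h_1)),
\]
since $\ell_n\lambda_n=o(\ell_n^2)$. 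This lands immediately on the form required by condition~\eqref{2}, without the detour through ``the denominator is dominated by $C_2\ell_n\lambda_n$'' (a claim that is not guaranteed in general, since $\ell_n\lambda_n$ need not diverge). With this correction your argument goes through under exactly the stated assumptions.
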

\begin{proof} The proof is similar to the proof of Lemma C in \cite{Chaouch2019}. \end{proof}

\subsection*{Proof of Theorem \ref{T1}}
By Model (\ref{model}), we have
\begin{equation*}
m_{n,0}(x)-m(x)=\frac{1}{m_{n,0}^{[1]}(x)}\left[\frac{1}{n \mathbb{E} (K_1(x))}\displaystyle{\sum_{t=1}^{n}}
\delta_{t}\{m(X_{t})-m(x)+\sqrt{U(X_{t})}\varepsilon_{t}\}K_{t}(x)\right].
\end{equation*}
Then,
\begin{equation}\label{Decosup2}
\displaystyle{\sup_{x\in \mathcal{C}}}|m_{n,0}(x)-m(x)|\leq \displaystyle{\sup_{u\in B(x,h_{1})}}|m(u)-m(x)|+\{\displaystyle{\inf_{x\in \mathcal{C}}}m_{n,0}^{[1]}(x)\}^{-1}\times \displaystyle{\sup_{x\in \mathcal{C}}}|V_{n,0}(x)|.
\end{equation}
Further, observe that
\begin{equation*}
\displaystyle{\inf_{x\in \mathcal{C}}}\left|m_{n,0}^{[1]}(x)\right|>\displaystyle{\inf_{x\in \mathcal{C}}}\left|\pi(x)\right|-\displaystyle{\sup_{x\in \mathcal{C}}}\left|m_{n,0}^{[1]}(x)-\pi(x)\right|.
\end{equation*}
Thus, making use of Lemma \ref{Lemma 7.1.} and  assumption (A2)(vii), we get
 \begin{equation}\label{***+}
 \displaystyle{\inf_{x\in \mathcal{C}}}\left|m_{n,0}^{[1]}(x)\color{black}\right|>\theta_{1}\,\ a.s.,
\end{equation}
where $\theta_{1}$ is defined in assumption (A2)(vii).\\
Finally, using Lemma \ref{Lemmacvn}, combined with  assumption (A4)(i), equation (\ref{Decosup2}) and equation (\ref{***+}), one concludes the proof of this Theorem.  

 \subsection*{Proof of Theorem \ref{T2}}
 \quad  Let 
 \begin{equation}\label{decompvar}
U_{n,0}(x)-U(x)=\left(\vartheta_{n,0}^{[1]}(x)+\vartheta_{n,0}^{[2]}(x)+\vartheta_{n,0}^{[3]}(x)+\vartheta_{n,0}^{[4]}(x)\right)/U_{n,0}^{[1]}(x),
 \end{equation}
 where
 \begin{eqnarray}
\vartheta_{n,0}^{[1]}(x)&=&\displaystyle\frac{1}{n \mathbb{E}\left[W_{1}(x)\right]} \sum_{t=1}^{n}\delta_{t}\left(m\left(X_{t}\right)-m_{n,0}\left(X_{t}\right)\right)^{2} W_{t}(x), \label{Vn1}\\
\vartheta_{n,0}^{[2]}(x)&=&\displaystyle\frac{2}{n \mathbb{E}\left[W_{1}(x)\right]} \sum_{t=1}^{n}\delta_{t}\left(m\left(X_{t}\right)-m_{n,0}\left(X_{t}\right)\right) \sqrt{U\left(X_{t}\right)} \varepsilon_{t} W_{t}(x), \label{Vn2} \\
\vartheta_{n,0}^{[3]}(x)&=&\displaystyle\frac{1}{n \mathbb{E}\left[W_{1}(x)\right]} \sum_{t=1}^{n} \delta_{t}U\left(X_{t}\right) W_{t}(x)\left(\varepsilon_{t}^{2}-1\right), \label{Decoimp}\\
\vartheta_{n,0}^{[4]}(x) &=&\displaystyle\frac{1}{n \mathbb{E}\left[W_{1}(x)\right]} \sum_{t=1}^{n}\delta_{t}\left(U\left(X_{t}\right)-U(x)\right) W_{t}(x), \notag\\
U_{n,0}^{[1]}(x)&=&\frac{1}{n \mathbb{E}\left[W_{1}(x)\right]} \sum_{t=1}^{n} \delta_{t} W_{t}(x),\notag
\end{eqnarray}
with $W_{t}(x)=W\left(\frac{d_{2}(x,X)}{h_{2}}\right).$ \color{black} Our main objective is to establish the uniform consistency rate, with respect to $x,$ of $U_{n,0}(x)$. For this purpose, let us consider
\begin{equation}\label{15*}
\displaystyle{\sup_{x\in \mathcal{C}}}\left| U_{n,0}(x)-U(x)\right| \leq  \big\{\displaystyle{\sup_{x\in \mathcal{C}}}|\vartheta_{n,0}^{[1]}(x)|+\displaystyle{\sup_{x\in \mathcal{C}}}|\vartheta_{n,0}^{[2]}(x)|+\displaystyle{\sup_{x\in \mathcal{C}}}|\vartheta_{n,0}^{[3]}(x)|+\displaystyle{\sup_{x\in \mathcal{C}}}|\vartheta_{n,0}^{[4]}(x)| \big\} \times \big\{\displaystyle{\inf_{x\in \mathcal{C}}}|U_{n,0}^{[1]}(x)|\big\}^{-1}.
 \end{equation}
Note that $U_{n,0}^{[1]}(x)$ has similar form as $m_{n,0}^{[1]}(x)$ when $K$ is replaced by $W$. 
Thus, similar to the proof of (\ref{sup1}) and (\ref{***+}), one can show that, under assumptions  (A1), (A2), (A3)(v) and condition (\ref{4}), one  gets 
\begin{equation}\label{supU1}
\underset{n\rightarrow\infty}{\lim}\sup_{x\in \mathcal{C}}\left|U_{n,0}^{[1]}(x)-\pi(x)\right|=0 \quad  \hbox{a.s.}, \color{black}
\end{equation}
and 
\begin{equation}\label{***u}
\displaystyle{\inf_{x\in \mathcal{C}}}\left|U_{n,0}^{[1]}(x)\right|> \theta_{1} \quad \hbox{a.s.}\,\
\end{equation}
\textit{Study of the term $\vartheta_{n,0}^{[1]}(x)$.} We have
$$\left|\vartheta_{n,0}^{[1]}(x)\right| \leq \frac{1}{n \mathbb{E}\left[W_{1}(x)\right]} \sum_{t=1}^{n}\delta_{t}\left|m\left(X_{t}\right)-m_{n,0}\left(X_{t}\right)\right|^{2} W_{t}(x) \leq \sup _{x \in \mathcal{C}}\left|m_{n,0}(x)-m(x)\right|^{2} \times \frac{1}{n \mathbb{E}\left[W_{1}(x)\right]} \sum_{t=1}^{n}\delta_{t} W_{t}(x)\text{.}
$$
Then, in view of Theorem \ref{T1} and   equation (\ref{supU1}), we get
\begin{equation}\label{32}
\displaystyle{\sup_{x\in \mathcal{C}}}\left|\vartheta_{n,0}^{[1]}(x)\right|=\mathcal{O}_{a.s.}(h_{1}^{2\alpha})+
\mathcal{O}_{a.s.}\left(\lambda_{n}^{2}\right)\text{.}
\end{equation}
\textit{Study of the term $\vartheta_{n,0}^{[2]}(x)$.} Observe that
$$
\vartheta_{n,0}^{[2]}(x) \leq 2 \sup _{x \in \mathcal{C}}\left|m_{n,0}(x)-m(x)\right| \times \overline{V}_{n,0}(x),\,\
$$
where
\begin{equation*}\label{BarVn0}
\overline{V}_{n,0}(x)=\frac{1}{n \mathbb{E}\left[W_{1}(x)\right]} \sum_{t=1}^{n} \delta_{t}\sqrt{U(X_{t})}\varepsilon_{t}W_{t}(x)\text{.}
\end{equation*}
In addition, note that $\overline{V}_{n,0}(x)$ has similar form as $V_{n,0}(x)$ when $K$ is replaced by $W$. Therefore, by Lemma \ref{Lemmacvn}, it follows that
\begin{equation}\label{Vnbar}
\displaystyle{\sup_{x\in \mathcal{C}}}\left|\overline{V}_{n,0}(x)\right|=\mathcal{O}_{a.s.}\left(\lambda_{n}^{\prime}\right),\,\
\end{equation}
where $\lambda_{n}^{\prime}$ is defined in Theorem \ref{T2}.
\color{black}
\\Making use of Theorem \ref{T1} and   equation  (\ref{Vnbar}), we find
\begin{equation}\label{33}
\displaystyle{\sup_{x\in \mathcal{C}}}\left|\vartheta_{n,0}^{[2]}(x)\right|=\mathcal{O}_{a.s.}\left\{\left(h_{1}^{\alpha}+\lambda_{n}\right)\lambda_{n}^{\prime}\right\}\text{.}
\end{equation}
 \textit{Study of the term $\vartheta_{n,0}^{[3]}(x).$} 
Observe that $\vartheta_{n,0}^{[3]}(x)$ has the same form as $\overline{V}_{n,0}(x)$  when $\varepsilon$ and $\sqrt{U}$ are replaced by $\varepsilon^{2}-1$ and $U,$ respectively.
 Similar to the proof of Lemma \ref{Lemmacvn},  under assumptions 
  (A1), (A2)(i)-(ii), (iv)-(vi), (A3)(iv)-(v)  and condition (\ref{4}), we obtain
\begin{equation}\label{34}
\displaystyle{\sup_{x\in \mathcal{C}}}\left|\vartheta_{n,0}^{[3]}(x)\right|=\mathcal{O}_{a.s.}\left(\lambda_{n}^{\prime}\right)\text{.}
\end{equation}  \color{black}
\textit{Study of the term $\vartheta_{n,0}^{[4]}(x).$} One can easily show, using assumption (A4)(ii) and  equation   (\ref{supU1}), that
\begin{equation}\label{35}
\displaystyle{\sup_{x\in \mathcal{C}}}\left|\vartheta_{n,0}^{[4]}(x)\right|=\mathcal{O}_{a.s.}\left(h_{2}^{\beta}\right)\text{.}
\end{equation}
Finally,  using  equation  (\ref{15*}) combining   with  the results obtained in (\ref{***u})-(\ref{32}) and (\ref{33})-(\ref{35}), we conclude the proof of the theorem .

The following Lemma gives the asymptotic normality of $\vartheta_{n,0}^{[3]}(x)$ which is needed to prove Theorem \ref{T3}.

\begin{lem}\label{Lemma-T3}
Suppose that assumptions (A1)(i), (A2)(i)-(iv), (vi), (A3)(ii)-(iii), (v), (A4)(ii)-(iii) and condition (\ref{3}) are satisfied. Then, as $n \to \infty$, we have, 
\begin{equation*}
\sqrt{n \phi_{2}(h_{2})}\vartheta_{n,0}^{[3]}(x)  \overset{\mathcal{D}}
{\longrightarrow}  \mathcal{N}\left(0, \sigma^{2}_1(x)\right),
\end{equation*} 
where $\sigma^{2}_1(x)$ is defined in Theorem \ref{T7}. 
\end{lem}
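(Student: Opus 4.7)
The plan is to express $\sqrt{n\phi_2(h_2)}\,\vartheta_{n,0}^{[3]}(x)$ as a sum of martingale differences and then apply a martingale central limit theorem (for instance the standard triangular-array CLT as used in the companion papers cited). Concretely, set
$$
\eta_{n,t} := \frac{\sqrt{\phi_2(h_2)}}{\sqrt{n}\,\mathbb{E}[W_1(x)]}\,\delta_t\,U(X_t)\,W_t(x)\,(\varepsilon_t^2-1),
$$
so that $\sqrt{n\phi_2(h_2)}\,\vartheta_{n,0}^{[3]}(x)=\sum_{t=1}^{n}\eta_{n,t}$. To see that $(\eta_{n,t})$ is an MDA with respect to $\mathcal{G}_{t-1}$, note that $U(X_t)$ and $W_t(x)$ are $\mathcal{G}_{t-1}$-measurable; the MAR assumption together with (A3)(v) implies $\delta_t \perp \varepsilon_t \mid \mathcal{G}_{t-1}$ with $\mathbb{E}(\delta_t\mid\mathcal{G}_{t-1})=\pi(X_t)$, while (\ref{epsilon}) gives $\mathbb{E}(\varepsilon_t^2-1\mid\mathcal{G}_{t-1})=0$. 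Hence $\mathbb{E}(\eta_{n,t}\mid\mathcal{G}_{t-1})=0$.

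Next, I would verify the conditional-variance convergence
$$
V_n(x):=\sum_{t=1}^{n}\mathbb{E}\bigl(\eta_{n,t}^{2}\,\big|\,\mathcal{G}_{t-1}\bigr)\;\xrightarrow{\;\mathbb{P}\;}\;\sigma_1^{2}(x).
$$
Using $\delta_t^2=\delta_t$, the conditional independence above, and (A3)(ii), one gets
$$
\mathbb{E}\bigl(\eta_{n,t}^{2}\mid\mathcal{G}_{t-1}\bigr)=\frac{\phi_2(h_2)}{n(\mathbb{E}[W_1(x)])^{2}}\,U^{2}(X_t)\,W_t^{2}(x)\,\pi(X_t)\,\omega(X_t).
$$
Taking a further conditional expectation with respect to $\mathcal{F}_{t-1}$ and exploiting the continuity of $U^{\kappa+2}$, $\omega$ and $\pi$ at $x$ (assumptions (A3)(ii),(v) and (A4)(iii)), together with Lemma \ref{Lemma 6.2}(i), replaces $U^{2}(X_t)\pi(X_t)\omega(X_t)$ on the support of $W_t(x)$ by $U^{2}(x)\pi(x)\omega(x)+o(1)$ and yields $\mathbb{E}[W_t^{2}(x)\mid\mathcal{F}_{t-1}]=\phi_2(h_2)[M_{2,W,2}f_{t,1}(x)+o_{a.s.}(1)]$. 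Combining this with Lemma \ref{Lemma 6.2}(ii), namely $\mathbb{E}[W_1(x)]=\phi_2(h_2)M_{1,W,2}f_1(x)(1+o(1))$, reduces the problem to showing
$$
\frac{1}{n}\sum_{t=1}^{n}f_{t,1}(x)\;\xrightarrow{\;a.s.\;}\;f_1(x),
$$
which is precisely the ergodic condition (A2)(iii). The $\mathcal{F}_{t-1}$-martingale fluctuation that one discards in going from $\mathcal{G}_{t-1}$ to $\mathcal{F}_{t-1}$ is handled by Lemma \ref{Lemma 6.1} in exactly the manner of Lemma \ref{Lemma 7.1.}: one checks an almost-sure bound $|\eta_{n,t}^{2}-\mathbb{E}(\eta_{n,t}^{2}\mid\mathcal{F}_{t-1})|=O(\phi_2(h_2)/(n\phi_2(h_2)^{2}))$ on the summands (after truncation if needed, using (A3)(iii)) and a variance bound derived from $(A3)$(iii)-(iv), and deduces that the fluctuation is $o_{a.s.}(1)$.

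Finally, I would verify the conditional Lyapunov condition
$$
L_n(x):=\sum_{t=1}^{n}\mathbb{E}\bigl(|\eta_{n,t}|^{2+\kappa}\,\big|\,\mathcal{G}_{t-1}\bigr)\;\xrightarrow{\;\mathbb{P}\;}\;0,
$$
with $\kappa>0$ coming from (A3)(iii). Using $|\delta_t|\leq 1$, the conditional independence of $\delta_t$ and $\varepsilon_t$ given $\mathcal{G}_{t-1}$, (A3)(iii) and (A4)(iii), one obtains
$$
\mathbb{E}\bigl(|\eta_{n,t}|^{2+\kappa}\mid\mathcal{G}_{t-1}\bigr)\leq C\,\frac{\phi_2(h_2)^{1+\kappa/2}}{n^{1+\kappa/2}(\mathbb{E}[W_1(x)])^{2+\kappa}}\,U^{2+\kappa}(X_t)\,W_t^{2+\kappa}(x),
$$
so that, invoking Lemma \ref{Lemma 6.2} one more time and the continuity of $U^{\kappa+2}$ at $x$, summation over $t$ gives $L_n(x)=O\bigl((n\phi_2(h_2))^{-\kappa/2}\bigr)\to 0$ under condition (\ref{3}). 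The martingale CLT then delivers
$$
\sqrt{n\phi_2(h_2)}\,\vartheta_{n,0}^{[3]}(x)\;\xrightarrow{\;\mathcal{D}\;}\;\mathcal{N}\bigl(0,\sigma_1^{2}(x)\bigr).
$$
The main technical obstacle is the conditional-variance computation: because $\pi$ enters through $\mathbb{E}(\delta_t^{2}\mid\mathcal{G}_{t-1})=\pi(X_t)$ (rather than through $\pi^{-1}$ as it would if one worked directly with $\delta_t/\pi$), the multiplicative factor $\pi(x)$ appears in $\sigma_1^{2}$, and one must be careful to isolate it correctly when passing continuity through the small-ball limit, while simultaneously controlling the ergodic $\mathcal{G}_{t-1}$-to-$\mathcal{F}_{t-1}$ approximation uniformly in the truncation level.
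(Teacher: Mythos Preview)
Your approach is correct and is the same martingale CLT strategy the paper uses: write $\sqrt{n\phi_2(h_2)}\,\vartheta_{n,0}^{[3]}(x)$ as a triangular array of martingale differences, establish the conditional-variance limit, and check a Lyapunov/Lindeberg-type negligibility condition via $(A3)$(iii) to get the rate $(n\phi_2(h_2))^{-\kappa/2}$.

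The organizational difference is the choice of filtration. The paper works directly with $\mathcal{F}_{t-1}$: since $\mathbb{E}(\xi_{n,t}\mid\mathcal{F}_{t-1})=0$ via double conditioning on $\mathcal{G}_{t-1}$, the authors compute $\sum_t\mathbb{E}(\xi_{n,t}^2\mid\mathcal{F}_{t-1})$ by first centering $U(X_t)=(U(X_t)-U(x))+U(x)$, which splits the sum into three pieces $\Upsilon_{n,1}+\Upsilon_{n,2}+\Upsilon_{n,3}$; the cross terms vanish by (A4)(ii) and only $\Upsilon_{n,2}$ survives, giving $\sigma_1^2(x)$ through Lemma~\ref{Lemma 6.2} and (A2)(iii). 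You instead condition on the finer $\mathcal{G}_{t-1}$, which yields the clean pointwise expression $U^2(X_t)W_t^2(x)\pi(X_t)\omega(X_t)$, but then you must add a second martingale argument (your ``$\mathcal{G}_{t-1}$-to-$\mathcal{F}_{t-1}$ fluctuation'' via Lemma~\ref{Lemma 6.1}) to pass back to quantities controlled by (A2)(iii). The paper's route avoids that extra step; yours avoids the $\Upsilon$-decomposition. Both are valid, and the Lyapunov check you propose is exactly the H\"older--Markov bound the paper uses for its Lindeberg condition.
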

\begin{proof}  Define
$\xi _{n, t}=\{\sqrt{\phi_{2}\left(h_{2}\right) / n}\} \left[\delta_{t}U\left(X_{t}\right) W_{t}(x)\left(\varepsilon_{t}^{2}-1\right) / \mathbb{E}\left\{W_{1}(x)\right\}\right],$ $t=1,...,n.$ Further, observe that
\begin{equation}\label{36}
\sqrt{n \phi_{2}\left(h_{2}\right) }\vartheta_{n,0}^{[3]}(x)=\sum_{i=1}^{n} \xi_{n, t},
\end{equation}
where, for any $x\in \mathcal{E},$ the summands in $(\ref{36})$ form a triangular array of stationary martingale differences with respect to the $\sigma$-field $\mathcal{F}_{t-1}$.  Then, similar to  the proof of Lemma 4 in \cite{LL10}, we apply the Central Limit Theorem for discrete-time arrays of real-valued martingales to provide the asymptotic normality of $\vartheta_{n,0}^{[3]}(x)$ (see \citet{Hall1980}).
For that, we have to establish the following statements:
\begin{itemize}
\item[(i)] $\underset{n \rightarrow \infty}{\lim} \sum_{t=1}^{n} \mathbb{E}\left(\xi_{n, t}^{2} \mid \mathcal{F}_{t-1}\right) \stackrel{\mathbb{P}}{=}  \sigma^{2}_1(x),$
\item[(ii)] $n \mathbb{E}\left\{\xi_{n, t}^{2} \1_{\left(\left|\xi_{n, t}\right|>\zeta\right)}\right\}=o(1)$ holds for any $\zeta>0$.
\end{itemize}
\textbf{Proof of part (i).} Notice that\\
$\begin{aligned} \sum_{t=1}^{n} \mathbb{E}\left(\xi_{n, t}^{2} \mid \mathcal{F}_{t-1}\right)=& \frac{\phi_{2}\left(h_{2}\right)}{n\left[\mathbb{E}\left(W_{1}(x)\right)\right]^{2}} \sum_{t=1}^{n} \mathbb{E}\left[\delta_{t}\left\{U\left(X_{t}\right)-U(x)\right\}^{2} W_{t}^{2}(x)\left(\varepsilon_{t}^{2}-1\right)^{2} \mid \mathcal{F}_{t-1}\right] \\ &+\frac{\phi_{2}\left(h_{2}\right)}{n\left[\mathbb{E}\left(W_{1}(x)\right)\right]^{2}} \sum_{t=1}^{n} \mathbb{E}\left[\delta_{t}U^{2}(x) W_{t}^{2}(x)\left(\varepsilon_{t}^{2}-1\right)^{2} \mid \mathcal{F}_{t-1}\right] \\ &+\frac{2 \phi_{2}\left(h_{2}\right)}{n\left[\mathbb{E}\left(W_{1}(x)\right)\right]^{2}} \sum_{t=1}^{n} \mathbb{E}\left[\delta_{t}\left\{U\left(X_{t}\right)-U(x)\right\} U(x) W_{t}^{2}(x)\left(\varepsilon_{t}^{2}-1\right)^{2} \mid \mathcal{F}_{t-1}\right] \\ &\equiv  \Upsilon_{n, 1}+\Upsilon_{n, 2}+\Upsilon_{n, 3}. \end{aligned}$\\
 \textit{Regarding the term $\Upsilon_{n,1}.$}   Due to \eqref{pieq}, $\delta_t$ and $\varepsilon_{t}$  are independent. Then, by using Lemma \ref{Lemma 6.2} and  assumptions (A1)(i), (A2)(iii)-(iv), (vi), (A3)(ii), (v), and (A4)(ii),  we can show that:

$
\begin{aligned} |\Upsilon_{n,1}| &=\mathcal{O}_{a.s.}(h_{2}^{2\beta})\times\left\{\frac{\phi_{2}\left(h_{2}\right)}{n\left[\mathbb{E}\left\{W_{1}(x)\right\}\right]^{2}} \sum_{t=1}^{n} \mathbb{E}\left[\mathbb{E}\left\{\delta_{t} W_{t}^{2}(x)\left(\varepsilon_{t}^{2}-1\right)^{2} \mid \mathcal{G}_{t-1}\right\} \mid \mathcal{F}_{t-1}\right]\right\} \\ & \leq \mathcal{O}_{a.s.}(h_{2}^{2\beta})\times \left\{\pi(x)+\sup _{  u \in B(x,h)}|\pi(u)-\pi(x)|\right\}\left\{\frac{M_{2, W, 2}}{M_{1, W, 2}^{2}} \frac{1}{f_{1}(x)}+o_{a.s.}(1)\right\}\\
& \times\left\{\omega(x)+\sup _{u \in B(x,h)}|\omega(u)-\omega(x)|\right\}
 \longrightarrow 0 \,\ \hbox{as} \,\  n \to \infty. \\
 \end{aligned}
 $\\
\textit{Regarding the term $\Upsilon_{n,2}.$} By considering the independence of $\delta$ and $\varepsilon_{t},$ using  Lemma \ref{Lemma 6.2} and  assumptions (A1)(i), (A2)(iii)-(iv),(vi), (A3)(ii),(v),  we obtain

$
\begin{aligned} |\Upsilon_{n,2}| &=U^{2}(x)\times\left\{\frac{\phi_{2}\left(h_{2}\right)}{n\left[\mathbb{E}\left\{W_{1}(x)\right\}\right]^{2}} \sum_{t=1}^{n} \mathbb{E}\left[\mathbb{E}\left\{\delta_{t} W_{t}^{2}(x)\left(\varepsilon_{t}^{2}-1\right)^{2} \mid \mathcal{G}_{t-1}\right\} \mid \mathcal{F}_{t-1}\right]\right\} \\ & \leq U^{2}(x)\times\left\{\pi(x)+\sup _{u \in B(x,h)}|\pi(u)-\pi(x)|\right\}\left\{\frac{M_{2, W, 2}}{M_{1, W, 2}^{2}} \frac{1}{f_{1}(x)}+o_{a.s.}(1)\right\}\left\{\omega(x)+\sup _{u \in B(x,h)}|\omega(u)-\omega(x)|\right\}\\
&\longrightarrow  U^{2}(x)\pi(x) \left\{\frac{M_{2, W, 2}}{M_{1, W, 2}^{2}} \frac{1}{f_{1}(x)}\right\}\omega(x) \,\ \hbox{as} \,\ n \to \infty.
\end{aligned}$
\\
 \textit{Regarding \color{black} the term $\Upsilon_{n,3}$.}  Similarly as for $\Upsilon_{n,1},$ using assumptions  (A1)(i), (A2)(iii)-(iv),(vi), (A3)(ii),(v), and (A4)(ii) together with  Lemma \ref{Lemma 6.2},  we get

 $
\begin{aligned} |\Upsilon_{n,3}|&=\mathcal{O}_{a.s.}(h_{2}^{\beta})\times U(x)\left\{\frac{\phi_{2}\left(h_{2}\right)}{n\left[\mathbb{E}\left\{W_{1}(x)\right\}\right]^{2}} \sum_{t=1}^{n} \mathbb{E}\left[\mathbb{E}\left\{\delta_{t} W_{t}^{2}(x)\left(\varepsilon_{t}^{2}-1\right)^{2} \mid \mathcal{G}_{t-1}\right\} \mid \mathcal{F}_{t-1}\right]\right\} \\   &  \leq \mathcal{O}_{a.s.}(h_{2}^{\beta})\times U(x) \left\{\pi(x)+\sup _{u \in B(x,h)}|\pi(u)-\pi(x)|\right\}\left\{\frac{M_{2, W, 2}}{M_{1, W, 2}^{2}} \frac{1}{f_{1}(x)}+o_{a.s.}(1)\right\} \\
& \times \left\{\omega(x)+\sup _{u \in B(x,h)}|\omega(u)-\omega(x)|\right\}
\longrightarrow 0 \,\ \hbox{as} \,\ n \to \infty.
\end{aligned}$ 
\\
 Finally, we get
\begin{eqnarray*} 
\lim_{n\longrightarrow \infty} \sum_{t=1}^{n} \mathbb{E}\left(\xi_{n, t}^{2} \mid \mathcal{F}_{t-1}\right)=\lim_{n\longrightarrow \infty} \left( \Upsilon_{n,1}+\Upsilon_{n,2}+\Upsilon_{n,3}\right)
=\frac{M_{2, W, 2} U^{2}(x)\pi(x) \omega(x)}{M_{1, W, 2}^{2} f_{1}(x)}=:\sigma^{2}_1(x)\,\  \hbox{a.s.,}
\end{eqnarray*}
\hbox{whenever} \,\  $f_{1}(x) > 0.$

\textbf{Proof of part (ii).}
Consider $a>1$ and $b >1$ such that $\displaystyle{\frac{1}{a}+\frac{1}{b}=1}$, by using  H\"{o}lder's and Markov's inequalities one can
have, for all $\zeta > 0$,
$$
\mathbb{E}\left\{\xi_{n, t}^{2} \1_{\left(\left|\xi_{n, t}\right|>\zeta\right)}\right\}
\leq \frac{\mathbb{E}|\xi_{n, t}|^{2a}}{\zeta^{2a/b}}\text{.}
$$
Taking $2a=\kappa + 2$ (where $\kappa$ is given in assumption (A3)(iii)) and let $C_{0}$ be a positive constant. Under assumptions (A1)(i), (A2)(iii)-(iv),(vi), (A3)(iii), (v), (A4)(iii)  and by using  Lemma \ref{Lemma 6.2}, we obtain
$$\begin{aligned}
n\mathbb{E}\left\{\xi_{n, t}^{2} \1_{\left(\left|\xi_{n, t}\right|>\zeta\right)}\right\}& \leq C_{0}\left\{\phi_{2}\left(h_{2}\right) / n\right\}^{(2+\kappa) / 2} \times \frac{n}{\left[\mathbb{E}\left(W_{1}(x)\right\}\right]^{2+\kappa}} \mathbb{E}\left\{\delta_{t}U^{2+\kappa}\left(X_{t}\right) W_{t}^{2+\kappa}(x)\left(\varepsilon_{t}^{2}-1\right)^{2+\kappa}\right\}\\
&\leq C_{0}\left\{n \phi_{2}\left(h_{2}\right)\right\}^{-\kappa / 2} \frac{M_{2+\kappa, W, 2} f_{1}(x)+o(1)}{M_{1, W, 2}^{2+\kappa} f_{1}^{2+\kappa}(x)+o(1)}\left\{U^{2+\kappa}(x)+o(1)\right\} \left\{ \pi(x)+o(1)\right\} \\
&=\mathcal{O}\left[\left\{n \phi_{2}\left(h_{2}\right)\right\}^{-\kappa / 2}\right]\text{.}
\end{aligned}$$ \\
Finally,  since $n \phi_{2}(h_{2})\longrightarrow \infty$ as $n\longrightarrow \infty$, we get
\begin{equation*} 
n \mathbb{E} \left\{ \xi_{n, t}^{2} \1_{\left(\left|\xi_{n, t} \right| > \zeta \right)}\right\}=o(1).
\end{equation*}
Therefore,  Lemma \ref{Lemma-T3} holds.
\end{proof}

\subsection*{Proof of Theorem \ref{T3}.}
Using the decomposition \eqref{decompvar}, and denoting $B_{n,0}(x) := \vartheta_{n,0}^{[4]}(x)/U_{n,0}^{[1]}(x),$ we have
$$\sqrt{n\phi_2(h_2)}\left(U_{n,0}(x)-U(x) - B_{n,0}(x)\right)=\sqrt{n\phi_2(h_2)} \left(\vartheta_{n,0}^{[1]}(x)+\vartheta_{n,0}^{[2]}(x)+\vartheta_{n,0}^{[3]}(x)\right)/U_{n,0}^{[1]}(x).$$
Then, the combination of statements in \eqref{supU1}- \eqref{32}, \eqref{33},  conditions \eqref{extracond}, and Lemma \ref{Lemma-T3}, allow to obtain the first part Theorem \ref{T3}. Moreover, by using the assumption that $\sqrt{n\phi_2(h_2)} h_2^\beta \rightarrow 0\; \text{as}\; n\rightarrow \infty$ as well as \eqref{35}, the 'bias' term $B_{n,0}(x)$ becomes negligible as $n$ goes to infinity which allows to obtain the second part of Theorem \ref{T3}.
\subsection*{Proof of Corollary \ref{Cor}}
 Let us consider\\
 
 $
 \dfrac{ \widehat{M}_{1, W, 2}}{\sqrt{\widehat{M}_{2, W, 2}}} \sqrt{\dfrac{n \widehat{F}_{x, 2}(h)\pi_{n}(x)}{\omega_{n,0}(x)U^{2}_{n,0}(x)}}(U_{n,0}(x)-U(x)) \\ =\dfrac{ \widehat{M}_{1, W, 2}   \sqrt{M_{2, W, 2}}}{\sqrt{\widehat{M}_{2, W, 2}}M_{1, W, 2}} \sqrt{\dfrac{n \widehat{F}_{x, 2}(h)\pi_{n}(x)U^{2}(x) \omega(x)}{\omega_{n,0}(x)\pi(x)U^{2}_{n,0}(x)n\phi_{2}(h_{2})f_{1}(x)}} \dfrac{M_{1, W, 2}}{\sqrt{M_{2, W, 2}}}\sqrt{\dfrac{n\phi_{2}(h_{2})f_{1}(x)\pi(x)}{ U^{2}(x) \omega(x)   }} (U_{n,0}(x)-U(x)).
$\\

By   Theorem \ref{T3}, we have
$$\dfrac{M_{1, W, 2}}{\sqrt{M_{2, W, 2}}}\sqrt{\dfrac{n\phi_{2}(h_{2})f_{1}(x)\pi(x)}{ U^{2}(x) \omega(x)   }} (U_{n,0}(x)-U(x))  \overset{\mathcal{D}}
{\longrightarrow}  \mathcal{N}\left(0, 1\right).$$\\
Then, by (A1)(i), (A2)(i),(iv), and following the same steps as  the proof of Corollary 1 in \cite{LL10}, we get
\begin{equation}\label{P1}
\widehat{M}_{1, W, 2} \stackrel{\mathbb{P}}{\rightarrow} M_{1, W, 2}, \quad
\widehat{M}_{2, W, 2}\color{black}  \stackrel{\mathbb{P}}{\rightarrow} M_{2, W, 2}, \quad \frac{\widehat{F}_{x, 2}(h_2)}{\phi_{2}(h_{2}) f_1(x)} \stackrel{\mathbb{P}}{\rightarrow} 1 \quad \hbox{as} \quad  n \rightarrow \infty .
\end{equation}
In addition, from Theorem \ref{T2}, we  have  $U_{n,0}(x) \stackrel{\mathbb{P}}{\rightarrow} U(x) $  and by equation   (5.25) in \cite{LING15}(page 86), we have
\begin{equation}\label{P2*} \quad \pi_{n}(x) \stackrel{\mathbb{P}}{\rightarrow} \pi(x)  \quad \hbox{as} \quad  n \rightarrow \infty.
\end{equation}
Then, it remains to prove that
\begin{equation}\label{P22}
\quad \omega_{n,0}(x) \stackrel{\mathbb{P}}{\rightarrow} \omega(x)  \quad \hbox{as} \quad  n \rightarrow \infty.
\end{equation}
\color{black}
Observe that
\begin{eqnarray*}
\omega_{n,0}(x) &=& \frac{1}{G_{n}(x)}\frac{1}{n\mathbb{E}(H_{1}(x))} \sum_{t=1}^{n}\delta_t \left(\frac{(Y_t - m_{n,0}(X_t))^2 - U_{n,0}(X_t)}{U_{n,0}(X_t)}\right)^2  H_t(x),
 \label{eq:omega_n}
\end{eqnarray*}
where $G_{n}(x)=\left(n\mathbb{E}(H_{1}(x))\right)^{-1}\displaystyle\sum_{t=1}^{n} \delta_t H_t(x).$
\\
Making use of Lemma 5.2  in \cite{LING15}, we get
\begin{equation}\label{CGn}
G_{n}(x) \stackrel{\mathbb{P}}{\rightarrow} \pi(x)  \quad \hbox{as} \quad  n \rightarrow \infty. 
\end{equation}
Further, let \(\mathfrak{{L}}_t\) be defined as
\[
\mathfrak{{L}}_t = \delta_t\left(\frac{(Y_t - m_{n,0}(X_t))^2 - U_{n,0}(X_t)}{U_{n,0}(X_t)}\right)^2 H_t(x), \quad \hbox{for} \quad t \in \{1, \ldots, n\}.
\] 
Then $$
 \frac{1}{n\mathbb{E}(H_{1}(x)}\sum_{t=1}^{n} \mathfrak{{L}}_t = \mathfrak{T}_{1,n}(x) + \mathfrak{T}_{2,n}(x), \label{eq:T1_T2}
$$
where
\[
\mathfrak{T}_{1,n}(x) = \frac{1}{n\mathbb{E}(H_{1}(x)} \sum_{t=1}^{n} \left\{\mathfrak{{L}}_t - \mathbb{E}(\mathfrak{{L}}_t | \mathcal{F}_{t-1})\right\}, \quad \mathfrak{T}_{2,n}(x) = \frac{1}{n\mathbb{E}(H_{1}(x)} \sum_{i=1}^{n} \mathbb{E}(\mathfrak{{L}}_t | \mathcal{F}_{t-1}).
\]
Let us turn our attention to the study of the second term \(\mathfrak{T}_{2,n}(x)\). By employing a double conditioning with respect to the \(\sigma\)-field \(\mathcal{G}_{t-1}\) and the $C_{r}$-inequality, we derive
\begin{equation*}\label{T-4}
\mathfrak{T}_{2,n}(x) = \frac{1}{n\mathbb{E}(H_{1}(x))}\sum_{t=1}^{n} \mathbb{E}\left( H_t(x)\, \mathbb{E}\left(\delta_t\left.\left\{\frac{(Y_t - m_{n,0}(X_t))^2 - U_{n,0}(X_t)}{U_{n,0}(X_t)}\right\}^2  \bigg|\mathcal{G}_{t-1}\right)\right| \mathcal{F}_{t-1}\right). \color{black} 
\end{equation*} 
Note that model (\ref{model}) allows us to write
\begin{align}
\mathbb{E}\left[\delta_t\left\{\frac{(Y_t - m_{n,0}(X_t))^2 - U_{n,0}(X_t)}{U_{n,0}(X_t)}\right\}^2 \bigg|\mathcal{ G}_{t-1}\right] &= \mathbb{E}\left[\delta_t \left(\mathcal{A} + \mathcal{B}\right)^2 | \mathcal{G}_{t-1}\right], \notag
\end{align}
where
$
\mathcal{A} = \frac{\left\{m(X_t) - m_{n,0}(X_t)\right\}^2 + \left\{U(X_t) - U_{n,0}(X_t)\right\} \varepsilon_t^2 + 2 \left\{m(X_t) - m_{n,0}(X_t)\right\} \sqrt{ U(X_t) } \varepsilon_t}{ U_{n,0}(X_t)},
$
and
$
\mathcal{B} = \varepsilon_t^2 - 1.
$
\\
Using assumption (A3)(i), condition (\ref{epsilon}), Theorems \ref{T1} and \ref{T2}, and \eqref{***u}, we obtain
$$
\mathbb{E}\left(\delta_t \mathcal{A}^2 | \mathcal{G}_{t-1}\right) = o_{\mathbb{P}}(1), \quad \text{and} \quad \mathbb{E}(\delta_t \mathcal{A}\mathcal{B} \big| \mathcal{G}_{t-1}) = o_{\mathbb{P}}(1).
$$
Finally, assumptions (A3)(ii) and (A3)(v) ensure that, as \(n \to \infty\), one gets
\begin{equation*}\label{E-4}
\mathbb{E}\left[\delta_t \left\{\frac{(Y_t - m_{n,0}(X_t))^2 - U_{n,0}(X_t)}{U_{n,0}(X_t)}\right\}^2\bigg| \mathcal{G}_{t-1}\right] = \omega(x) \pi(x) + o(1).
\end{equation*}
Then, using Lemma 7 in \cite{LL11}, under assumptions (A1)(i) and (A2)(i)-(vi), it follows, after using \eqref{CGn}, that
$$
G_n^{-1}(x) \mathfrak{T}_{2,n}(x) \stackrel{\mathbb{P}}{\longrightarrow} \omega(x) \,\  \text{ as } \,\ n \to \infty.
$$ 
Now, we have to show that \(\mathfrak{T}_{1,n}(x)\) goes to zero in probability as \(n \to \infty\). Using Markov's, Burkholder's, and Jensen's inequalities, one obtains, for any $\eta>0$,
\[
\mathbb{P} \left\{|\mathfrak{T}_{1,n}(x)| > \eta\right\} \leq \frac{C \mathbb{E}(\mathfrak{{L}}_1^2)}{ {n \eta^2 \left(\mathbb{E}(H_{1}(x))\right)^2}},
\]
where \(C\) is a generic positive constant. \\
Let us now find an upper bound of $\mathbb{E}(\mathfrak{{L}}_1^2).$ For this, observe that 
\begin{eqnarray*}
    \mathbb{E}(\mathfrak{{L}}_1^2) &=& \mathbb{E}\left( \mathbb{E}(\mathfrak{{L}}_1^2|X_1)\right)\\
    &=& \mathbb{E}\left\{ H_{1}^{2}(x) \mathbb{E}\left( \delta_1 \left(U_{n,0}^{-1}(X_1) (Y_1 - m_{n,0}(X_1))^2 -1 \right)^4 |X_1 \right)\right\}.
\end{eqnarray*}
Making use of the $C_{r}$-inequality and \eqref{***u}, we obtain
$$
 \mathbb{E}\left( \delta_1 \left(\frac{(Y_1 - m_{n,0}(X_1))^2}{U_{n,0}(X_1)} -1 \right)^4 |X_1 \right) \leq \dfrac{8}{\theta_1^4} \times \bigg\{\mathbb{E}\left(\delta_1 (m(X_1) - m_{n,0}(X_1) + \sqrt{U(X_1)}\varepsilon_1)^8 | X_1\right) + \pi(X_1) \bigg\}.
$$


By using Binomial Theorem, we have 
\begin{eqnarray*}
 \left((m(X_{t}) - m_{n,0}(X_{t})) + \sqrt{U(X_{t})} \varepsilon_{t} \right)^8  
&=& \sum_{k=0}^{8} \binom{8}{k}  (m(X_{t}) - m_{n,0}(X_{t}))^k (\sqrt{U(X_{t})} \varepsilon_{t})^{8-k} \\
&=& (m(X_{t}) - m_{n,0}(X_{t}))^8 + U^{4}(X_{t}) \varepsilon_{t}^8 \\
&&+ 8 \left((m(X_{t}) - m_{n,0}(X_{t}))^7 (\sqrt{U(X_{t})} \varepsilon_{t})\right) \\
&&+ 28 \left((m(X_{t}) - m_{n,0}(X_{t}))^6 U(X_{t})\varepsilon_{t}^2\right) \\
&&+ 56 \left((m(X_{t}) - m_{n,0}(X_{t}))^5 \sqrt{U(X_{t})} U(X_{t}) \varepsilon_{t}^3\right) \\
&&+ 70 \left((m(X_{t}) - m_{n,0}(X_{t}))^4 U^{2}(X_{t}) \varepsilon_{t}^4\right] \\
&&+ 56 \left((m(X_{t}) - m_{n,0}(X_{t}))^3 \sqrt{U(X_{t})}U^{2}(X_{t}) \varepsilon_{t}^5\right) \\
&&+ 28 \left((m(X_{t}) - m_{n,0}(X_{t}))^2 U^{3}(X_{t})\varepsilon_{t}^6\right) \\
&&+ 8 \left((m(X_{t}) - m_{n,0}(X_{t})) \sqrt{U(X_{t})}U^{3}(X_{t}) \varepsilon_{t}^7\right).
\end{eqnarray*}

Making use of Theorem \ref{T1}, the independence between $\varepsilon$ and $\delta$, as well as assumptions (A3)(i),(v), and (A4)(ii)-(iii), we can prove that
$$
 \mathbb{E}(\mathfrak{{L}}_1^2) = \mathbb{E} (\mathbb{E}(\mathfrak{{L}}_1^2 | X_1))< C \mathbb{E}(H_1^2(x)).
$$
Thus, by Lemma \ref{Lemma 6.2}, we have
\[
P \left\{|\mathfrak{T}_{1,n}(x)| > \eta\right\} \frac{C \mathbb{E}(H_1^2(x))}{ {n \eta^2 (\mathbb{E}(H_{1}(x)))^2}}\leq \frac{C(M_{2,H,3}f_{1}(x) + o(1))}{n \eta^2 (M_{1,H,3}^{2}f_{1}^{2}(x) + o(1))},
\]
which goes to zero as \(n \to \infty\). Combining this with \eqref{CGn}, one obtains $G_n^{-1}(x) \mathfrak{T}_{1,n}(x) \stackrel{\mathbb{P}}{\rightarrow} 0$. Consequently, the proof of equation (\ref{P22}) is achieved.

Finally, we obtain
$$ \frac{\widehat{M}_{1, W, 2}}{\sqrt{\widehat{M}_{2, W, 2}}}\frac{ \sqrt{M_{2, W, 2}}}{M_{1, W, 2}}
\sqrt{\frac{n \widehat{F}_{x, 2}(h_2)\pi_{n}(x)U^{2}(x) \omega(x)}{\omega_{n,0}(x) U^{2}_{n,0}(x)\pi(x)n\phi_{2}(h_{2})f_{1}(x)}}
\stackrel{\mathbb{P}}{\rightarrow} 1 \quad  \hbox{as}  \quad  n \rightarrow \infty.$$
Hence, the proof of Corollary \ref{Cor} is achieved.

\subsection*{Proof of Theorem \ref{T5}.}
By using Model (\ref{model}), we have
\begin{equation*}\label{Deco2}
\begin{aligned}
m_{n,1}(x)-m(x)&=\frac{1}{m_{n,1}^{[1]}(x)}\left[\frac{1}{n \mathbb{E} (K_1(x))}\displaystyle{\sum_{t=1}^{n}}[\delta_{t}\{m(X_{t})+\sqrt{U(X_{t})}\varepsilon_{t}\}+(1-\delta_{t})m_{n,0}(X_{t})-m(x)]K_{t}(x)\right]\\
&=\frac{1}{m_{n,1}^{[1]}(x)}\left[\mathcal{I}_{n,1}(x)+\mathcal{I}_{n,2}(x)+V_{n,0}(x)\right],
\end{aligned}
\end{equation*}
where $V_{n,0}(x)$ is  defined in (\ref{Vn0}) and
$$
 \begin{aligned}
\mathcal{I}_{n,1}(x)&=\frac{1}{n \mathbb{E}\left[K_{1}(x)\right]} \sum_{t=1}^{n}(1-\delta_{t})\left[m_{n,0}\left(X_{t}\right)-m\left(X_{t}\right)\right] K_{t}(x), \\
\mathcal{I}_{n,2}(x)&=\frac{1}{n \mathbb{E}\left[K_{1}(x)\right]} \sum_{t=1}^{n} \left(m\left(X_{t}\right)-m(x)\right) K_{t}(x),\\
\end{aligned}
$$
\begin{equation*}\label{EGn'}
m_{n,1}^{[1]}(x):=\frac{1}{n \mathbb{E}\left[K_{1}(x)\right]} \sum_{t=1}^{n}  K_{t}(x).
\end{equation*}
Then, we find
\begin{equation}\label{Decm1}
\displaystyle{\sup_{x\in \mathcal{C}}}\left| m_{n,1}(x)-m(x) \right| \leq \frac{\displaystyle{\sup_{x\in \mathcal{C}}}\left|\mathcal{I}_{n,1}(x)\right|+\displaystyle{\sup_{x\in \mathcal{C}}}\left|\mathcal{I}_{n,2}(x)\right|+\displaystyle{\sup_{x\in \mathcal{C}}}\left|V_{n,0}(x)\right|}{\displaystyle{\inf_{x\in \mathcal{C}}}\left|m_{n,1}^{[1]}(x)\right|}.
 \end{equation}
\cite{LL11} (in page 371) have proved that, as $n \rightarrow \infty$,
\begin{equation}\label{infGn'}
\displaystyle{\inf_{x\in \mathcal{C}}}\left|m_{n,1}^{[1]}(x)\right|>1 \quad \text{a.s.}
\end{equation}
Next, by using assumption (A4)(i) and the almost sure convergence of $m_{n,1}^{[1]}(x)$ to 1 uniformly in $x$ (see Lemma 7 in \cite{LL11}), \color{black} it follows that
\begin{equation}\label{I2}
\sup_{x\in \mathcal{C}}|\mathcal{I}_{n,2}(x)|=\mathcal{O}_{a.s.}(h_{1}^{\alpha}).
\end{equation}
Furthermore,  by the use of  Theorem \ref{T1}, the almost sure boundedness of $\delta_t$ by 1 and the almost sure uniform convergence of $m_{n,1}^{[1]}(x)$ to 1 (see Lemma 7 in \cite{LL11}),  we get 
\begin{equation}\label{I1}
\displaystyle{\sup_{x\in \mathcal{C}}}\left|\mathcal{I}_{n,1}(x)\right|=\mathcal{O}_{a.s.}(h_{1}^{\alpha})+
\mathcal{O}_{a.s.}\left(\lambda_{n}\right)\text{.}
\end{equation}
Finally, using inequality (\ref{Decm1}) combined with Lemma \ref{Lemmacvn} and results in (\ref{infGn'}), (\ref{I2}), (\ref{I1}), to obtain
$$\displaystyle{\sup_{x\in \mathcal{C}}}\left| m_{n,1}(x)-m(x) \right|=\mathcal{O}_{a.s.}(h_{1}^{\alpha})+\mathcal{O}_{a.s.}\left(\lambda_{n}\right)\text{.}$$
This concluded the proof of this Theorem.

\subsection*{Proof of Theorem \ref{T6}.}
Let
 \begin{equation}\label{'Decu1}
\begin{aligned}
U_{n,1}(x)-U(x)&=\frac{\displaystyle\sum_{t=1}^n [\delta_t (Y_t - m_{n,0}(X_t))^2 + (1-\delta_t)U_{n,0}(X_t)]W_{t}(x)}{\displaystyle\sum_{t=1}^n W_{t}(x)}-U(x)\\
&=\frac{1}{ U_{n,1}^{[1]}(x)}\left(\vartheta_{n,0}^{[1]}(x)+\vartheta_{n,0}^{[2]}(x)+\vartheta_{n,0}^{[3]}(x)+\Theta_{n,1}(x)+\Theta_{n,2}(x)\right),
\end{aligned}
\end{equation}
where $\vartheta_{n,0}^{[1]}(x), \vartheta_{n,0}^{[2]}(x), \vartheta_{n,0}^{[3]}(x)$ are defined in (\ref{Vn1}), (\ref{Vn2}), (\ref{Decoimp}), respectively, and  


$\begin{array}{lll}
\Theta_{n,1}(x)&:=& \displaystyle\frac{1}{{n\mathbb{E}(W_{1}(x))}}\displaystyle \displaystyle\sum_{t=1}^n (1-\delta_t)[U_{n,0}(X_t)-U(X_t) ]W_{t}(x), \nonumber\\
\Theta_{n,2}(x)&:=&\displaystyle\frac{1}{{n\mathbb{E}(W_{1}(x))}}\displaystyle \displaystyle\sum_{t=1}^n [U(X_t)-U(x)]  W_{t}(x),\nonumber\\
 U_{n,1}^{[1]}(x) &:=& \displaystyle\frac{1}{n \mathbb{E}\left[W_{1}(x)\right]} \sum_{t=1}^{n}  W_{t}(x).\label{EGn'}
\end{array}$ \\
Then, 
\begin{equation}
\label{DecompUn1}
\displaystyle{\sup_{x\in \mathcal{C}}}\left|U_{n,1}(x)-U(x)\right|\leq \dfrac{ \displaystyle{\sup_{x\in \mathcal{C}}}\left|\vartheta_{n,0}^{[1]}(x)\right|+\displaystyle{\sup_{x\in \mathcal{C}}}\left|\vartheta_{n,0}^{[2]}(x)\right|+\displaystyle{\sup_{x\in \mathcal{C}}}\left|\vartheta_{n,0}^{[3]}(x)\right|+\displaystyle{\sup_{x\in \mathcal{C}}}\left|\Theta_{n,1}(x)\right|+\displaystyle{\sup_{x\in \mathcal{C}}}\left|\Theta_{n,2}(x)\right|}{ \displaystyle{\inf_{x\in \mathcal{C}}}\left| U_{n,1}^{[1]}(x)\right| }.   
\end{equation}

According to Lemma 7 in \cite{LL11}, we can deduce that:
\begin{equation}\label{CpsG"}
\lim_{n\rightarrow \infty}\sup_{x\in \mathcal{C}}| U_{n,1}^{[1]}(x)-1|= 0 \quad \text{a.s.}
\end{equation}
Similar as for (\ref{infGn'}), one has
\begin{equation}\label{infGn''}
\displaystyle{\inf_{x\in \mathcal{C}}}\left| U_{n,1}^{[1]}(x)\right|>1 \quad \text{a.s.}\ \quad \text{as} \quad n \rightarrow \infty.
\end{equation}
Making use of Theorem \ref{T2}, equation (\ref{CpsG"})  and the almost sure boundedness of $\delta_t$, one gets 
\begin{equation}\label{'Thetan1}
\displaystyle{\sup_{x\in \mathcal{C}}}\left|\Theta_{n,1}(x)\right| = \mathcal{O}_{a.s.}(h_{1}^{2\alpha}+h_{2}^{\beta})+\mathcal{O}_{a \cdot s .}(\lambda_{n}^{\prime}+\lambda_{n}^2).\end{equation}
In addition, by using assumption (A4)(ii) and equation (\ref{CpsG"}), we obtain  
\begin{equation}\label{'Thetan2}
\displaystyle{\sup_{x\in \mathcal{C}}}|\Theta_{n,2}(x)|=  \mathcal{O}_{a.s.}(h_{2}^{\beta}).
\end{equation}

Finally, decomposition (\ref{DecompUn1}), combined  with equations (\ref{32}),  (\ref{33}), (\ref{34}), (\ref{infGn''})-(\ref{'Thetan2}) allows to conclude that 
$$\displaystyle{\sup_{x\in \mathcal{C}}}|U_{n,1}(x)-U(x)|= \mathcal{O}_{a . s .}(h_{1}^{2\alpha}+h_{2}^{\beta})+\mathcal{O}_{a \cdot s .}(\lambda_{n}^{\prime}+\lambda_{n}^2).$$
 This ends the proof of the Theorem.

\subsection*{Proof of Theorem \ref{T7}.} 
Using  decomposition in (\ref{'Decu1}), equations  (\ref{32}), (\ref{33}) and (\ref{'Thetan1}), (\ref{'Thetan2})  allow to conclude that  $\vartheta_{n,0}^{[1]}(x)$, $\vartheta_{n,0}^{[2]}(x)$, $\Theta_{n,1}(x)$ and $\Theta_{n,2}(x)$ are negligible as $n  \to \infty.$ Furthermore, according to 
  equation  (\ref{CpsG"}), $U_{n,1}^{[1]}(x)$ converges almost surely to 1. Finally,  we can conclude that the asymptotic distribution of the   nonparametric imputed   conditional variance is determined by  the asymptotic variance of   the term $\vartheta_{n,0}^{[3]}(x)$  which is   specified in Lemma \ref{Lemma-T3}.

\subsection*{Proof of Corollary \ref{CorIm}}
First, we note that \\
$
 \dfrac{ \widehat{M}_{1, W, 2}}{\sqrt{\widehat{M}_{2, W, 2}}} \sqrt{\dfrac{n \widehat{F}_{x, 2}(h_2)}{\omega_{n,1}(x)\pi_{n}(x) U^{2}_{n,1}(x)}}(U_{n,1}(x)-U(x)) \\ =\dfrac{ \widehat{M}_{1, W, 2}   \sqrt{M_{2, W, 2}}}{\sqrt{\widehat{M}_{2, W, 2}}M_{1, W, 2}} \sqrt{\dfrac{n \widehat{F}_{x, 2}(h_2)\omega(x)\pi(x) U^{2}(x)}{\omega_{n,1}(x)\pi_{n}(x) U^{2}_{n,1}(x)n\phi_{2}(h_{2})f_{1}(x)}} \dfrac{M_{1, W, 2}}{\sqrt{M_{2, W, 2}}}\sqrt{\dfrac{n\phi_{2}(h_{2})f_{1}(x)}{\omega(x)\pi(x)U^{2}(x)  }} (U_{n,1}(x)-U(x)).
$\\

By   Theorem \ref{T7}, we find
$$\frac{M_{1, W, 2}}{\sqrt{M_{2, W, 2}}}\sqrt{\frac{n\phi_{2}(h_{2})f_{1}(x)}{ \omega(x)\pi(x) U^{2}(x)   }} (U_{n,1}(x)-U(x))  \overset{\mathcal{D}}
{\longrightarrow} \mathcal{N}\left(0, 1\right).$$\\
Then, following the same steps as for (\ref{P22}), we get
\begin{equation}\label{P22+}
\quad \omega_{n,1}(x) \stackrel{\mathbb{P}}{\rightarrow} \omega(x)  \quad \hbox{as} \quad  n \rightarrow \infty.
\end{equation}

Thus, in view of Theorem \ref{T6} and equations (\ref{P1}), (\ref{P2*}) and (\ref{P22}), 
we obtain
$$ \frac{\widehat{M}_{1, W, 2}\color{black}}{\sqrt{\widehat{M}_{2, W, 2}}\color{black}}\frac{ \sqrt{M_{2, W, 2}}}{M_{1, W, 2}}
\sqrt{\frac{n \widehat{F}_{x, 2}(h)\omega(x)\pi(x) U^{2}(x)}{\omega_{n}(x)\pi_{n}(x) U^{2}_{n,1}(x)n\phi_{2}(h_{2})f_{1}(x)}}
\stackrel{\mathbb{P}}{\rightarrow} 1 \quad  \hbox{as}  \quad  n \rightarrow \infty.$$
Therefore, the proof of Corollary \ref{CorIm} is completed.


\color{black}

\section*{Appendix}
\begin{lem}\label{LemmaA}  Assume that $\left(X_{t}, \varepsilon_{t}\right)$ is a strictly stationary ergodic process and suppose that assumptions  (A3)(i),(v)  hold. Then, for each $\omega$ outside a null set $D$, there exists a positive integer $n_{0}(\omega)$ such that $V_{n,0}(x)=V_{n,0}^{\top}(x)$ for $n \geq n_{0}(\omega)$ and $x \in \mathcal{E}$.
\end{lem}
\begin{proof}

Recall that 
$\mathcal{L}_{t}=\delta_{t}\sqrt{U(X_{t})}\varepsilon_{t}.$ Then, for every $\eta > 0$ and by using Markov’s inequality, one has\\
$$ \begin{aligned}
\mathbb{P}\left(\displaystyle{\left|\frac{\mathcal{L}_{t}}{\ell_{n}}\right|}> \eta \right)=\mathbb{P}\left(|\mathcal{L}_{t}|> \eta \ell_{n}  \right) &\leq \displaystyle{\frac{\mathbb{E}(|\mathcal{L}_{t}|)}{\eta \ell_{n}}}\\
& \leq \displaystyle{\frac{\mathbb{E}(\delta_{t}|\sqrt{U(X_{t})}\varepsilon_{t}|)}{\eta \ell_{n}}}.\\
\end{aligned}
$$
Let $a > 1$ and $b > 1$ be real numbers such that $1/a + 1/b = 1$. By using the H\"{o}lder's inequality,  it follows that  
 $$ \begin{aligned}
\mathbb{P}\left(\displaystyle{\left|\frac{\mathcal{L}_{t}}{\ell_{n}}\right|}> \eta  \right)
& \leq \frac{1}{\eta  \ell_{n}}
\mathbb{E}^{1 / a}\left(\delta_{t}\left|\sqrt{U(X_{t})}\right|^{a} \right) \times \mathbb{E}^{1 / b}(|\varepsilon_{t}|^{b})
\end{aligned}.$$
By a double conditioning with respect to the $\sigma$-field $\mathcal{G}_{t-1}$ and by taking $a=
\rho /(\rho-1)$ and $b=\rho,$ we get 
 $$ \begin{aligned}
\mathbb{P}\left(\displaystyle{\left|\frac{\mathcal{L}_{t}}{\ell_{n}}\right|}> \eta \right)
& \leq \frac{1}{\eta \ell_{n}}
\mathbb{E}^{(\rho-1) / \rho }\left(\delta_{t}\left|\sqrt{U(X_{t})}\right|^{\rho /(\rho-1)} \right) \times  \mathbb{E}^{1 / \rho}[\mathbb{E}(|\varepsilon_{t}|^{\rho}|\mathcal{G}_{t-1})]. 
\end{aligned}$$
Making use  the second part of assumption (A3)(i), one finds
 $$ \begin{aligned}
\mathbb{P}\left(\displaystyle{\left|\frac{\mathcal{L}_{t}}{\ell_{n}}\right|}> \eta \right)
& \leq \frac{C}{\eta \ell_{n}}
\mathbb{E}^{(\rho-1) / \rho }\left(\delta_{t}\left|\sqrt{U(X_{t})}\right|^{\rho /(\rho-1)} \right).
\end{aligned} $$
Note that  another use of the H\"{o}lder inequality allows us to bound the quantity  $\mathbb{E}^{(\rho-1) / \rho }\left(\delta_{t}\left|\sqrt{U(X_{t})}\right|^{\rho /(\rho-1)} \right) $ as follows:

 $$ \begin{aligned}
\left|\mathbb{E}^{(\rho-1) / \rho }\left(\delta_{t}\left|\sqrt{U(X_{t})}\right|^{\rho /(\rho-1)} \right)\right|& \leq  \mathbb{E}^{(\rho-1) / b\rho }\left(\left|\sqrt{U(X_{t})}\right|^{b\rho /(\rho-1)} \right) \times \mathbb{E}^{(\rho-1) / a\rho }\left(|\delta_{t}|^{a} \right).  \\
\end{aligned} $$
By a double conditioning with respect to the $\sigma$-field $\mathcal{G}_{t-1}$  and  taking the same values of a and b   with the use of assumption (A3)(v), it follows that  
 $$ \begin{aligned}
\left|\mathbb{E}^{(\rho-1) / \rho }\left(\delta_{t}\left|\sqrt{U(X_{t})}\right|^{\rho /(\rho-1)} \right)\right|& \leq  \mathbb{E}^{(\rho-1) / b\rho }\left(\left|\sqrt{U(X_{t})}\right|^{b\rho /(\rho-1)} \right) \times  \mathbb{E}^{(\rho-1) / a\rho }[\mathbb{E}\left(\delta_{t}|\mathcal{G}_{t-1} \right)]   \\
&  \leq \mathbb{E}^{(\rho-1) / b\rho }\left(\left|\sqrt{U(X_{t})}\right|^{b\rho /(\rho-1)} \right) \times (\pi(x)+o(1))^{(\rho-1) / a\rho }.  \end{aligned}$$
 By putting $M= (\pi(x)+o(1))^{(\rho-1) / a\rho }$, one has   
 $$ \begin{aligned}
\left|\mathbb{E}^{(\rho-1) / \rho }\left(\delta_{t}\left|\sqrt{U(X_{t})}\right|^{\rho /(\rho-1)} \right)\right|& \leq  M\mathbb{E}^{(\rho-1) / \rho^{2} }\left(\left|\sqrt{U(X_{t})}\right|^{\rho^{2} /(\rho-1)} \right).   \\
\end{aligned} $$ 
which is finite by the first part of  assumption (A3)(i).\\ 
Thus,
 $$ \begin{aligned}
\mathbb{P}\left(\displaystyle{\left|\frac{\mathcal{L}_{t}}{\ell_{n}}\right|}> \eta \right)
& \leq \frac{\hat{C}}{\eta \ell_{n}}.
\end{aligned}$$
Since $0<\ell_{n} \uparrow \infty$ and $\eta >0$,  we get  
 \begin{equation*}
 \mathcal{L}_{t}/ \ell_{n} \longrightarrow 0 \quad \text{a.s.}
 \end{equation*}
Therefore, for $ \omega \in D^c$ with $\mathbb{P}(D^c)=1$ and some positive integer $n_{0}(\omega)$, it follows that 
$$|\mathcal{L}_{t}(\omega)| < \ell_{n},  \quad \quad t=1 \ldots n,\quad n\geq n_{0}(\omega).  $$
Finally,
$V_{n,0}(x)=V^{\top}_{n,0}(x).$

\end{proof}
\begin{flushleft}
\textbf{Proof of Lemma \ref{LemmaB}.}
\end{flushleft}
 Let $a>1$ and $b>1$ be real numbers such that $1 / a+1 / b=1$. By using assumption  A3(v)   and  the H\"{o}lder's and Markov's inequalities, we can get
$$
\begin{aligned}
\left|\mathbb{E}\left\{\mathcal{L}_{t} \1_{\left(\left|\mathcal{L}_{t}\right|  \leq \ell_{n}\right)} \mid \mathcal{F}_{t-1}\right\}\right| & \leq \mathbb{E}^{1 / a}\left(\left|\mathcal{L}_{t}\right|^{a} \mid \mathcal{F}_{t-1}\right) \times \mathbb{E}^{1 / b}\left\{\1_{\left(\left|\mathcal{L}_{t}\right| \leq \ell_{n}\right)} \mid \mathcal{F}_{t-1}\right\} \\
& \leq \mathbb{E}^{1 / a}\left(\left|\mathcal{L}_{t}\right|^{a} \mid \mathcal{F}_{t-1}\right) \times \frac{\mathbb{E}^{1 / b}\left(\left|\mathcal{L}_{t}\right| \mid \mathcal{F}_{t-1}\right)}{\ell_{n}^{1 / b}} \\ & \leq \frac{\mathbb{E}\left(\left|\mathcal{L}_{t}\right|^{a} \mid \mathcal{F}_{t-1}\right)}{\ell_{n}^{a / b}}=\frac{\mathbb{E}\left(|\delta_{t}|\left|\sqrt{U\left(X_{t}\right)}\right|^{a}\left|\varepsilon_{t}\right|^{a} \mid \mathcal{F}_{t-1}\right)}{\ell_{n}^{a / b}}\\
& \leq (\pi(x)+o(1))\frac{\mathbb{E}\left(\left|\sqrt{U\left(X_{t}\right)}\right|^{a}\left|\varepsilon_{t}\right|^{a} \mid \mathcal{F}_{t-1}\right)}{\ell_{n}^{a / b}}.
\end{aligned}
$$Then, by using Lemma \ref{Lemma 6.2}, assumptions (A1), (A2)(i), (iv), (vi) and (A3)(i), along with the first part of condition (\ref{1}) and by following similar steps as  the proof of Lemma B in \cite{Chaouch2019} with specific values for a and b  ($a=\rho$ and $ b=\rho /(\rho-1)$), we obtain

$$
\sup _{x \in \mathcal{C}}\left|\widetilde{V}_{n,0}(x)\right| \leq \frac{C}{\ell_{n}^{\rho-1} \phi_{1}\left(h_{1}\right)^{(\rho-1) / \rho}} \,\ \hbox{almost surely as} \,\ n \rightarrow \infty.
$$




\end{document}